\newcommand{\NEW}{}
\DeclareMathOperator{\lca}{lca}
\DeclareMathOperator{\Path}{Path}
\DeclareMathOperator{\Aho}{Aho}
\newcommand{\X}{\mathcal{X}}
\newcommand{\C}{\mathcal{C}}
\newcommand{\child}{\mathsf{child}}
\newcommand{\parent}{\mathsf{par}}
\newcommand{\IT}[1]{{r_{I}(#1)}}
\newcommand{\cl}{\ensuremath{\operatorname{cl}}}
\newcommand{\Tv}{T_{\neg v}}
\newcommand{\Tu}{T_{\neg u}}
\newcommand{\Tuv}{T_{\neg uv}}
\newcommand{\Xv}{\mathcal{X}_{\neg v}}
\newcommand{\Xu}{\mathcal{X}_{\neg u}}
\newcommand{\Xuv}{\mathcal{X}_{\neg uv}}
\newcommand{\lv}{\lambda_{\neg v}}
\newcommand{\lu}{\lambda_{\neg u}}
\newcommand{\luv}{\lambda_{\neg uv}}
\providecommand{\keywords}[1]{\textbf{\textit{Keywords: }} #1}
\newtheorem{theorem}{Theorem}
\newtheorem{lemma}{Lemma}
\newtheorem{corollary}{Corollary}
\newtheorem{problem}{Problem}
\newtheorem{definition}{Definition}
\begin{document}

\title{Reconstructing Gene Trees From Fitch's Xenology Relation}

\author[1]{Manuela Gei{\ss}}
\author[1]{John Anders}
\author[1,5,6,7]{Peter F.\ Stadler}
\author[8]{Nicolas Wieseke}
\author[2,3]{Marc Hellmuth}

\affil[1]{Bioinformatics Group, Department of Computer Science; and
		    Interdisciplinary Center of Bioinformatics, University of Leipzig, \\
			 H{\"a}rtelstra{\ss}e 16-18, D-04107 Leipzig}
\affil[2]{\footnotesize Dpt.\ of Mathematics and Computer Science, University of Greifswald, Walther-
  Rathenau-Strasse 47, D-17487 Greifswald, Germany \\
	\texttt{mhellmuth@mailbox.org} }
\affil[3]{Saarland University, Center for Bioinformatics, Building E 2.1, P.O.\ Box 151150, D-66041 Saarbr{\"u}cken, Germany }
\affil[4]{Department of Mathematics and Computer Science,
		    University of Southern Denmark, Denmark }
\affil[5]{Max-Planck-Institute for Mathematics in the Sciences, \\
  Inselstra{\ss}e 22, D-04103 Leipzig}
\affil[6]{Inst.\ f.\ Theoretical Chemistry, University of Vienna, \\
  W{\"a}hringerstra{\ss}e 17, A-1090 Wien, Austria}
\affil[7]{Santa Fe Institute, 1399 Hyde Park Rd., Santa Fe, USA} 
\affil[8]{Parallel Computing and Complex Systems Group \\
  Department of Computer Science, 
  Leipzig University \\
  Augustusplatz 10, 04109, Leipzig, Germany}
\date{}
\normalsize

\maketitle

\abstract{
  Two genes are xenologs in the sense of Fitch if they are separated by at least one horizontal gene transfer event. Horizonal gene transfer is asymmetric in the sense that the transferred copy is distinguished from the one that remains within the ancestral lineage. Hence xenology is more precisely thought of as a non-symmetric relation: $y$ is xenologous to $x$ if $y$ has been horizontally transferred at least once since it diverged from the least common ancestor of $x$ and $y$. We show that xenology relations are characterized by a small set of forbidden induced subgraphs on three vertices. Furthermore, each xenology relation can be derived from a unique least-resolved edge-labeled phylogenetic tree. We provide a linear-time algorithm for the recognition of xenology relations and for the construction of its least-resolved edge-labeled phylogenetic tree. The fact that being a xenology relation is a heritable graph property, finally has far-reaching consequences on approximation problems associated with xenology relations. }

\bigskip
\noindent
\keywords{Fitch Xenology;
          Phylogenetic Tree;
           Least-Resolved Tree;
          Informative Triple Sets;
           Di-Cograph; 
 					Heritable Graph Property;
					Forbidden Induced Subgraphs;
    Recognition Algorithm;
		Fixed Parameter Tractable
}

\sloppy

\section{Introduction}

The history of a gene family is defined by a phylogenetic tree of the genes
(the \emph{gene tree}), together with an \emph{event labeling} of its inner
vertices that identifies gene duplications, speciations events, and
possibly horizontal gene transfer, as well as a mapping of the gene tree
onto a \emph{species tree}. The latter provides an implicit dating of the
events that generated the gene phylogeny relative to the phylogeny of the
species under consideration. The mathematical structure of gene family
histories, i.e., the mutual relationships between gene trees, event
labelings, species trees, and the corresponding reconciliation maps has
only recently been explored in detail.

The concept of \emph{orthologs}, that is, pairs of genes from different
species that arose from a speciation event \cite{Fitch:70}, play a key role
in evolutionary biology. While functional similarity is not a defining
feature of orthology, in general, orthologous genes from
closely related species have a similar
function. More strictly, one-to-one orthologs are in most cases
functionally equivalent. Paralogs, that is pairs of genes that arose from
duplication events, in contrast, often have related, but clearly distinct
functions \cite{Koonin:05}. Orthologs, furthermore, tend to evolve in a
clock-like fashion (at least as long as there are no additional duplications),
which makes them the characters of choice in molecular phylogenetics
\cite{gabaldon2013}.

The orthology relation on a set of genes forms a co-graph, whose associated
co-tree is a not necessarily fully resolved event-labeled gene tree
\cite{Boecker:98,Hellmuth:13a}. This result in particular implies that
empirical estimates of the orthology relation, which are feasible in
practise \cite{Altenhoff:16}, provide direct information on the gene
history. Empirically estimated orthology relationships in general violate
the co-graph property, suggesting co-graph editing as a means to correct
the initial estimate
\cite{Lafond:13,Lafond:14,Lafond:16,DEML:16,lafond2015orthology,DONDI:17}. The
event-labeled gene trees in turn constrain the possible species trees with
which they can be reconciled \cite{HernandezRosales:12a,HW:16b}. Given data
on enough gene families, these constraints can be strong enough to
completly specify also the species phylogeny
\cite{hellmuth_phylogenomics_2015}.

Horizontal gene transfer is intimately related to the concept of
\emph{xenology}. A formal definition of xenology is less well established
and by no means consistent in the biological literature. First we note that
horizontal transfer is intrinsically a directional event, i.e., there is a
clear distinction between the horizontally transferred ``copy'' and the
``original'' that continues to be vertically transferred. This fact can be
annotated in the gene tree by associating a label to the edge that points
from the horizontal transfer event to the next event in the history of the
copy \cite{Hellmuth:17,NGD+17}.  In
\cite{Hellmuth:16a,hellmuth_partialHomology} this label was interpreted as
a direction, leading to a notion of directed co-graphs \cite{Crespelle:06},
which turned out to be intimately related to so-called uniformly non-prime
2-structures, see \cite{Hellmuth:16a}.

The most commonly used definition in the biological literature, introduced
by Walter M.\ Fitch in 2000, calls a pair of genes \emph{xenologs} if the
history since their common ancestor involves horizontal transfer of at
least one of them \cite{Fitch:00,Jensen:01}.  Preserving the directionality
of horizontal transfer, we capture this concept with the help of a
non-symmetric xenology relation $\mathcal{X}$ on a set of genes such that
$(x,y)\in\mathcal{X}$ whenever there is at least one directed horizontal
transfer event during the evolution from the last common ancestor of $x$
and $y$ towards $y$.

\NEW{While best match heuristics have been very successful as
  approximations of the orthology relation \cite{Altenhoff:16,Nichio:17},
  no comparable approach to extract the xenology relation directly from
  (dis)similarity data has been devised to-date. We suspect that this is at
  least one reason why the binary xenology relation has attracted very
  little attention so far.  Nevertheless, there are several methods to
  detect xenologs in a genome that use sequence features rather then
  phylogenetic reconstructions, see e.g.\ \cite{Ravenhall:15,Rancurel:17}.
  In this contribution we focus on the mathematical properties of the
  xenology relation $\mathcal{X}$. In particular, we will be concerned with
  two related questions: (1) How much information on the gene tree $T$ and
  the location of the horizontal transfer events within $T$ is contained in
  the xenology relation? (2) Is it possible to extract the topological
  information and labeling information from $\mathcal{X}$ efficiently?}

\NEW{We show here} that valid xenology relations correspond to a heritable
family of di-graphs, which we call the Fitch graphs.  \NEW{These} are
characterized by a small set of forbidden subgraphs on three vertices and
thus can be recognized in cubic time. Fitch graphs form a subclass of
di-cographs, which have recently been associated with an alternative
concept of xenology \cite{Hellmuth:16a}. Each Fitch graph is explained by a
unique least-resolved edge-labeled phylogenetic tree. \NEW{This tree is
  displayed by the full evolutionary scenario. It therefore provides a
  least partial information on the gene tree and the placement of the
  horizontal transfer events. We will show, furthermore, that this tree as
  well as corresponding the edge-labeling can be constructed from
  $\mathcal{X}$} in polynomial time. Utilizing \NEW{features} of heritable
graph properties we derive a linear-time recognition algorithm, as well as
NP-completeness and fixed-parameter tractable results for the respective
graph modification problems. \NEW{We take these results as motivation for
  future work towards methods to estimate the xenology relation from
  sequence (dis)similarity data.}

\section{Preliminaries: Rooted Trees, Phylogenetic Trees and Rooted  Triples}

A \emph{rooted tree} $T=(V,E)$ with {\em leaf set} $L\subseteq V$ (or $L(T)$ in
case of ambiguity) and {\em inner} vertices $V^0=V\setminus L$ is an acyclic
connected graph containing one distinguished inner vertex $\rho_T\in V^0$
that is called the \emph{root of T}. The \emph{degree} of a vertex $v \in V$ 
is denoted by $\deg(v)$.
The root $\rho_T$ is regarded as an
inner vertex, i.e.\ $\rho_T\notin L$, even if $\deg(\rho_T)=1$.  A rooted
tree $T=(V,E)$ on $L$ is \emph{phylogenetic} if its root has at least
$\deg(\rho_T) \ge 2$ and every other inner vertex $v\in V^0\setminus\{\rho_T\}$ has
$\deg(v) \ge 3$.  If the degree of each vertex $v\in
V^0\setminus\{\rho_T\}$ is exactly three and $\deg(\rho_T)=2$, then the
phylogenetic tree is called \emph{binary}. In this contribution, we will
consider rooted trees together with an edge-labeling $\lambda:E\to\{0,1\}$
and write $(T,\lambda)$.

We call $u\in V$ an \emph{ancestor} of $v\in V$, $u\succeq_T v$, and $v$ a
\emph{descendant} of $u$, $v\preceq_T u$, if $u$ lies on the unique path
from $\rho_T$ to $v$. We write $v\prec_T u$ ($u\succ_T v$) for $v\preceq_T
u$ ($u\succeq_T v$) and $u\neq v$. If $v\preceq_T u$ or $u\succeq_T v$,
then $u$ and $v$ are \emph{comparable}, and \emph{incomparable} otherwise.
It will be convenient to use a notation for edges $e$ that implies which of
the vertex in $e$ is closer to the root, that is, we always write $(u,v)\in
E$ to indicate that $u\succ_T v$. In the latter case, vertex $u$ is called
\emph{parent} of $v$, denoted by $\parent(v)$. Similarly, we define
\emph{the children} of $u$ as $\child(u):=\{v\in V \mid (u,v)\in E\}$.  We
denote two leaves $v,w \in L$ as \emph{siblings} if $v,w\in
\child(u)$. Edges that are incident to a leaf are called \emph{outer
  edges}. Conversely, \emph{inner edges} do only contain inner vertices.

For a non-empty subset $L'\subseteq L$ of leaves, \emph{the least common
  ancestor of} $L'$, denoted as $\lca_T(L')$, is the unique
$\preceq_T$-minimal vertex of $T$ that is an ancestor of every vertex in
$L'$. We will make use of the simplified notation
$\lca_T(x,y,z):=\lca_T(\{x,y,z\})$ for $L'=\{x,y,z\}$ and we will omit the
explicit reference to $T$ whenever it is clear which tree is
considered. Analogously, we often write $\deg(v)$ instead of $\deg_T(v)$
for the degree of some vertex $v$.

A \emph{simple contraction} of an edge $e=(x,y)$ in a tree $T$ refers to
the removal of $e$ and identification of $x$ and $y$.  The tree $T(L')$
with root $\lca_T(L')$ has leaf set $L'$ and consists of all paths in $T$
that connect the leaves in $L'$.  We say that a rooted tree $T$ on $L$
\emph{displays} a root tree $T'$ on $L'$, in symbols $T'\le T$, if $T'$ can
be obtained from $T(L')$ by a sequence of simple edge contractions.  We
write $T'<T$ if $T'\le T$ and $T'\ne T$. The \emph{restriction} $T|L'$ of
$T$ to $L'$ is the rooted tree obtained from $T(L')$ by suppressing all
vertices of degree $2$ with the exception of the root $\rho_T$ if
$\rho_T\in V(T(L'))$.  By construction, $T|L'$ is a phylogenetic tree. The
suppression of vertices of degree $2$ can be achieved by simple contraction
of one of the adjacent edges. Moreover, $T|L'\le T$, i.e., $T$ displays the
restrictions $T|L'$ to all subsets $L'\subseteq L$. Note that $T|L=T$ if
and only if $T$ is phylogenetic; otherwise $T|L < T$.

For every vertex $v\in V(T)$ we denote by $C(v)$ the subset of $L$ such
that $\forall x \in L$ it holds that $x \in C(v)$ if and only if $x \preceq_T v$.
Moreover, we define $\mathcal{C}(T) \coloneqq \{C(v)\mid v \in V(T)\}$.
A rooted tree is phylogenetic if and only if
$C(u)=C(v)$ implies $u=v$ for all $u,v\in E(T)$. We say that a rooted tree
$T'$ on $L$ \emph{refines} a rooted tree $T$ on $L$, if $T'$ displays
$T$. In particular, a phylogenetic tree $T'$ on $L$ refines a rooted tree
$T$ if and only if $\mathcal{C}(T)\subseteq \mathcal{C}(T')$. In
particular, the tree $T(v)$ rooted at a vertex $v$ of $T$ is the tree
$T(C(v))$. 

\emph{Rooted triples} are binary rooted phylogenetic trees on three leaves.
We write $ab|c$ for the rooted triple with leaves $a,b$ and $c$, if the
path from its root to $c$ does not intersect the path from $a$ to $b$.  The
definition of ``display'' implies that a triple $ab|c$ with $a,b,c\in L$ is
\emph{displayed} by a rooted tree $T$ if $\lca(a,b)\prec_T \lca(a,b,c)$.

The set of all triples that are displayed by $T$ is denoted by $r(T)$.  For
a set $R$ of rooted triples we define $R_x\subseteq R$ as the set of
triples in $R$ that contain the leaf $x$. A set of rooted triples $R$ is
called \emph{consistent} if there exists a phylogenetic tree $T$ on
$L_R\coloneqq\bigcup_{ab|c\in R} \{a,b,c\}$ that displays $R$, i.e.,
$R\subseteq r(T)$. In particular, a tree can display at most one triple on
any set of three leaves. Thus a triple set $R$ is inconsistent whenever
$ab|c, ac|b \in R$. However, triple sets can be inconsistent even if they
do not contain two triples on the same three leaves.

Rooted triples are widely used in the context of supertree reconstruction
because every phylogenetic tree $T$ is identified by its triple set $r(T)$,
and $r(T)\subseteq r(T')$ if and only if $T'$ displays $T$
\cite{semple_phylogenetics_2003}.  As a consequence, supertree
reconstruction can be phrased in terms of triples. As shown in
\cite{aho_inferring_1981} there is a polynomial-time algorithm, usually
referred to as \texttt{BUILD}
\cite{semple_phylogenetics_2003,steel_phylogeny:_2016}, that takes a set
$R$ of triples as input and either returns a particular phylogenetic tree
$\Aho(R)$ that displays $R$, or recognizes $R$ as inconsistent.

The requirement that a set $R$ of triples is consistent, and thus, that
there is a tree displaying all triples, makes it possible to infer new
triples from the trees that display $R$ and to define a \emph{closure
  operation} for $R$
\cite{grunewald_closure_2007,bryant_extension_1995,HS:17,Bryant97}.  Let
$\langle R \rangle$ be the set of all rooted trees with leaf set $L_R$ that
display $R$.  The closure of a consistent set of rooted triples $R$ is
defined as
\begin{equation*}
  \cl(R) = \bigcap_{T\in \langle R\rangle} r(T).
\end{equation*}
Hence, a triple $r$ is contained in the closure $\cl(R)$ if all trees that
display $R$ also display $r$.  This operation satisfies the usual three
properties of a closure operator \cite{bryant_extension_1995}, namely: (i)
expansiveness, $R \subseteq \cl(R)$; (ii) isotony, $R' \subseteq R$ implies
that $\cl(R')\subseteq \cl(R)$; and (iii) idempotency,
$\cl(\cl(R))=\cl(R)$. Since $T \in \langle r(T)\rangle$, it is easy to see
that $\cl(r(T))=r(T)$ and thus, $r(T)$ is always closed.

A set of rooted triples $R$ \emph{identifies} a tree $T$ with leaf set
$L_R$ if $R$ is displayed by $T$ and every other tree $T'$ that displays
$R$ is a refinement of $T$.  A rooted triple $ab|c\in r(T)$
\emph{distinguishes} an edge $(u,v)$ in $T$ iff $a$, $b$, and $c$ are
descendants of $u$, $v$ is an ancestor of $a$ and $b$ but not of $c$, and
there is no descendant $v'$ of $v$ for which $a$ and $b$ are both
descendants. In other words, $ab|c\in r(T)$ distinguishes the edge $(u,v)$
if $\lca(a,b)=v$ and $\lca(a,b,c)=u$.

We will make use of two results from \cite{grunewald_closure_2007} that are
closely related to the \texttt{BUILD} algorithm.
\begin{lemma}\label{g1}
  Let $T$ be a phylogenetic tree and let $R$ be a set of rooted triples.
  Then, $R$ identifies $T$ if and only if $\cl(R)=r(T)$.  Moreover, if $R$
  identifies $T$, then $\Aho(R)=T$.
\end{lemma}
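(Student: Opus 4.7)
The plan is to handle the biconditional by chasing the definitions of $\cl(R)$ and ``identifies'' in both directions, and then to invoke a structural property of \texttt{BUILD} for the moreover.

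For the forward direction, I would assume $R$ identifies $T$. Since $T$ itself displays $R$, it lies in $\langle R\rangle$, so the intersection defining $\cl(R)$ immediately yields $\cl(R)\subseteq r(T)$. For the reverse inclusion I would invoke the standard equivalence $T\le T'\Longleftrightarrow r(T)\subseteq r(T')$ recalled just before the lemma: the identifying hypothesis says every $T'\in\langle R\rangle$ refines $T$, so $r(T)\subseteq r(T')$ for each such $T'$, and hence $r(T)\subseteq \cl(R)$.

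For the reverse direction I would assume $\cl(R)=r(T)$. Then $R\subseteq\cl(R)=r(T)$, so $T$ displays $R$ and $T\in\langle R\rangle$. For any $T'\in\langle R\rangle$, the definition of $\cl$ gives $\cl(R)\subseteq r(T')$, so $r(T)\subseteq r(T')$, which by the same equivalence means $T'$ refines $T$. Together these two statements are exactly the assertion that $R$ identifies $T$.

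For the moreover I would exploit two facts about \texttt{BUILD}: (i) $\Aho(R)$ is a phylogenetic tree that displays $R$, so $\Aho(R)\in\langle R\rangle$; and (ii) every tree in $\langle R\rangle$ refines $\Aho(R)$, i.e.\ $\Aho(R)$ is the coarsest element of $\langle R\rangle$. From (i) and the identifying hypothesis, $T\le \Aho(R)$; from (ii) applied to $T\in\langle R\rangle$, $\Aho(R)\le T$. Mutual refinement between two phylogenetic trees forces their cluster systems $\mathcal{C}(\cdot)$ to coincide, and hence $T=\Aho(R)$.

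The main obstacle is property (ii); the biconditional itself is little more than bookkeeping with the two definitions, but the equality $\Aho(R)=T$ hinges on the specific coarsest-refinement guarantee enjoyed by the \texttt{BUILD} output, which I would cite from the algorithm's analysis rather than re-derive.
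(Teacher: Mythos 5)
You should first note that the paper itself does not prove this lemma: it is quoted as a known result from the reference \cite{grunewald_closure_2007}, so there is no in-paper proof to compare against. Your argument for the biconditional is correct and is the expected one: expansiveness of $\cl$ plus the equivalence $r(T)\subseteq r(T')\Leftrightarrow T\le T'$ do all the work, exactly as you describe.

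The ``moreover'' part, however, contains a genuine gap. Your property (ii) --- that every tree in $\langle R\rangle$ refines $\Aho(R)$, i.e.\ that $\Aho(R)$ is the coarsest element of $\langle R\rangle$ --- is false, so it cannot be ``cited from the algorithm's analysis''. Concretely, take $R=\{ab|e,\ cd|e\}$ on $L_R=\{a,b,c,d,e\}$. The Aho graph has components $\{a,b\}$, $\{c,d\}$, $\{e\}$, so $\Aho(R)=((a,b),(c,d),e)$ and its cluster system contains $\{c,d\}$. The caterpillar $T'=((((a,b),c),d),e)$ displays both triples of $R$ (in each case the $\lca$ of the cherry pair lies strictly below the root, which is its $\lca$ with $e$), so $T'\in\langle R\rangle$; but $\{c,d\}$ is not a cluster of $T'$, so $T'$ does not refine $\Aho(R)$. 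The only generic fact about \texttt{BUILD} available to you is the one you use in (i), namely that $\Aho(R)$ displays $R$; this yields $T\le\Aho(R)$, but the converse inequality $\Aho(R)\le T$ is precisely the substance of the claim. Note also that the natural detour through the first part of the lemma fails for the same reason: one would need $r(\Aho(R))\subseteq\cl(R)$, yet in the example above $cd|a\in r(\Aho(R))$ while $cd|a\notin r(T')\supseteq\cl(R)$. So the equality $\Aho(R)=T$ genuinely requires the hypothesis that $R$ identifies $T$ to enter the analysis of the recursive component decomposition (showing that each cluster produced by \texttt{BUILD} must already be a cluster of $T$); as written, your proof of the ``moreover'' does not go through.
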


\section{The (Fitch-)Xenology Relation}

\label{sec:fitch-graph}

In this contribution we are specifically interested in phylogenetic trees
$T=(V,E)$ with leaf set $L=L(T)$ that are endowed with edge labels
$\lambda:E\to\{0,1\}$ such that
\begin{equation*}
\lambda(e) = 
\begin{cases}
  1 & \text{if $e$ is a horizontal transfer-edge}\\
  0 & \text{otherwise}
\end{cases}
\end{equation*}
For simplicity we will speak of 0-edges and 1-edges in $T$ depending on
their labeling.

\begin{definition}
  Given an edge-labeled phylogenetic tree $(T,\lambda)$ we set
  $(x,y)\in\X_{(T,\lambda)}$ for $x,y\in L$ whenever there is at least one
  directed horizontal transfer event between $y$ and the last common
  ancestor of $x$ and $y$, i.e., if the uniquely defined path from
  $\lca_T(x,y)$ to $y$ contains at least one 1-edge. We write $[x,y]\in\X_{(T,\lambda)}$
  if $(x,y)$ and $(y,x)\in\X_{(T,\lambda)}$ and $x|y$ if $(x,y)$ and $(y,x)\notin\X_{(T,\lambda)}$.
\end{definition}

By construction $\X_{(T,\lambda)}$ is irreflexive; hence it can be regarded
as a simple directed graph. In the following, we therefore will
interchangeably speak of $\X_{(T,\lambda)}$ as graph or relation and use
the standard graph terminology such as ``induced subgraph in
$\X_{(T,\lambda)}$''.  It is easy to check that $\X_{(T,\lambda)}$ is in
general neither symmetric nor antisymmetric. The relation
$\X_{(T,\lambda)}$ formalizes Fitch's concept of \emph{xenology}
\cite{Fitch:00}.

We say that an edge-labeled phylogenetic tree $(T,\lambda)$ \emph{explains}
a given irreflexive relation $\X$ whenever $\X=\X_{(T,\lambda)}$. To be
more explicit, $(T,\lambda)$ \emph{explains} $\X$ if there is a 1-edge on
the path from $\lca(x,y)$ to $y$ if and only if $(x,y)\in \X$. By
construction, $\X$ must be defined on $L(T)$.  We call a relation $\X$
\emph{valid} if there is an edge-labeled tree that explains $\X$. \NEW{An
  example of a gene tree with the corresponding Fitch relation $\X$ and an
  edge-labeled tree that explains $\X$, can be found in Fig.\
  \ref{fig:GeneTree}.}

\begin{figure}[t]
\begin{center}
  \includegraphics[width=0.8\textwidth]{./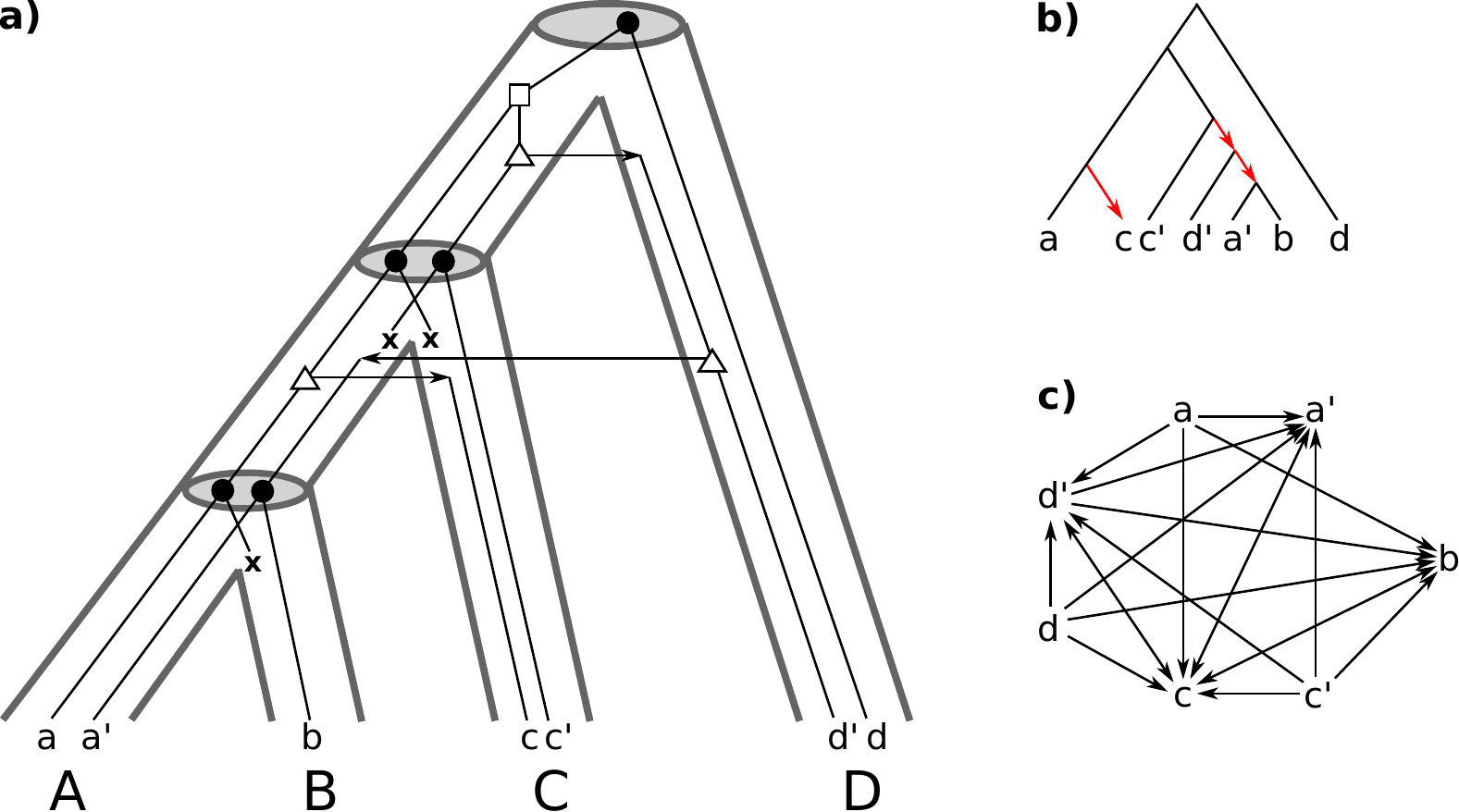}
\end{center}
\caption{\NEW{a) Event-labeled gene tree embedded in the (tube-like)
    species tree. The leaf set of the gene tree are the genes $a$, $a'$,
    $b$, $c$, $c'$, $d$, and $d'$ in the genomes of the four species $A$,
    $B$, $C$ and $D$.  The gene tree contains speciations ($\bullet$),
    duplications ($\square$), HGT events ($\triangle$) and gene losses
    ($\times$). b) Removal of all gene losses, suppression of all resulting
    degree two vertices and ignoring the types of the events on the
    vertices yields an edge-labeled tree in which the transfer edges
    labeled by $1$ (red arrow) and all other edges by $0$ (black
    edges). Panel c) shows the Fitch graph explained by the edge-labeled
    tree of Panel b).}  
}
\label{fig:GeneTree}
\end{figure}

\begin{figure}[t]
\begin{center}
  \includegraphics[width=\textwidth]{./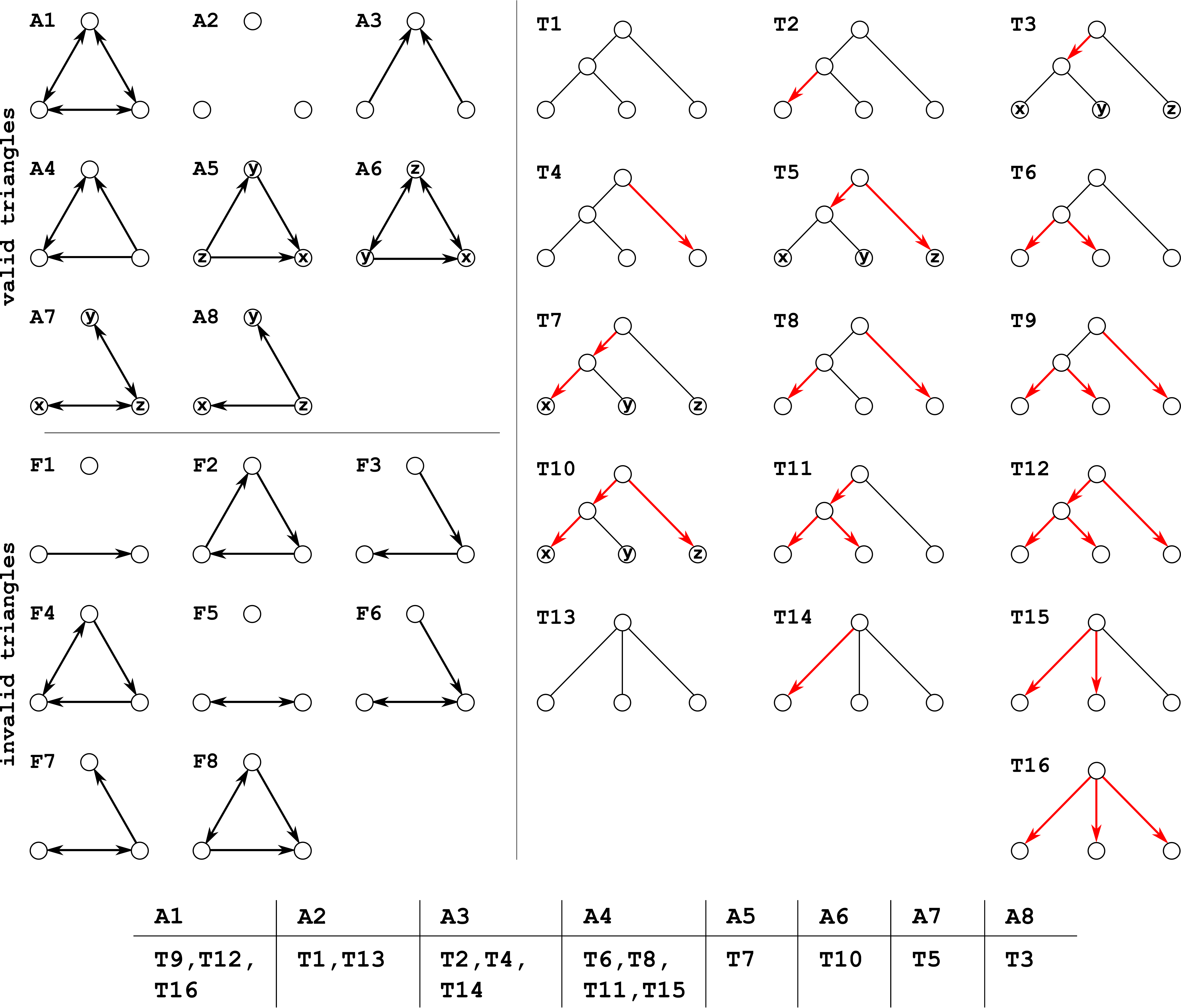}
\end{center}
\caption{\emph{Upper Left:} Shown is the graph representation for all
  possible relations $\X\subseteq L\times L$ with $|L|=3$.  The relations
  are grouped into valid ($A_1-A_8$) and non-valid ($F_1-F_8$). \\
  \emph{Upper Right:} All possible (up to isomorphism) subtrees on three
  leaves of a tree $(T,\lambda)$ are shown.  Edges can be understood as
  paths, whereby red (resp. black) edges indicate that there is (resp., is
  not) a  1-edge on the particular path. \\
  \emph{Lower Part:} The table shows which tree explains which
  relation. In particular, there is no tree that would explain one of the
  graphs $F_1$ to $F_8$.}
\label{fig:triangles}
\end{figure}

The notion of a tree $T'$ being displayed by a tree $T$ can be generalized to
edge-labeled trees: We say that $(T',\lambda')$ is displayed by
$(T,\lambda)$ if $T'$ is displayed by $T$ in the usual sense and an edge
$e'\in E(T')$ has label $\lambda'(e')=1$ if and only if the path in $T$
that corresponds to $e'$ contains at least one 1-edge.

\begin{lemma}
\label{lem:induced}
  Let $(T',\lambda')$ be a tree with leaf set $L'=L(T')$ that is displayed
  by $(T,\lambda)$. Then $\X_{(T',\lambda')}$ is the subgraph of 
  $\X_{(T,\lambda)}$ induced by $L'$. 
\end{lemma}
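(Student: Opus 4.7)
The plan is to unwind the definitions and show that the path in $T'$ from $\lca_{T'}(x,y)$ to $y$ (under $\lambda'$) contains a 1-edge if and only if the analogous path in $T$ does (under $\lambda$), for every pair $x,y\in L'$. Since $\X_{(T,\lambda)}$ and $\X_{(T',\lambda')}$ are irreflexive and defined pointwise on ordered pairs of leaves, this equivalence is exactly what is needed.

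First I would reduce to the subtree $T(L')$. Since $T'\le T$, the tree $T'$ arises from $T(L')$ by a sequence of simple edge contractions (possibly together with suppression of the degree-two vertices occurring in the restriction step; both are handled identically as concatenations of edges). In particular, every edge $e'\in E(T')$ corresponds to a unique path $P(e')$ in $T(L')$, and by the definition of ``displayed'' for edge-labeled trees, $\lambda'(e')=1$ iff $P(e')$ contains some edge with label $1$ in $\lambda$.

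Next I would identify the least common ancestors. For $x,y\in L'$, since $T(L')$ is the union of all paths in $T$ connecting leaves in $L'$, the path from $x$ to $y$ in $T$ lies entirely in $T(L')$, so $\lca_{T(L')}(x,y)=\lca_T(x,y)$. Contractions do not merge any two vertices that separate distinct leaves, so the vertex of $T'$ obtained by identifying $\lca_{T(L')}(x,y)$ with its contraction partners is exactly $\lca_{T'}(x,y)$. Consequently, the ordered sequence of edges $e'_1,\ldots,e'_k$ forming the path from $\lca_{T'}(x,y)$ to $y$ in $T'$ satisfies that the concatenation $P(e'_1)\cdots P(e'_k)$ is precisely the path from $\lca_T(x,y)$ to $y$ in $T$.

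Finally, I would conclude by chaining the equivalences: this concatenated path contains a $1$-edge under $\lambda$ iff at least one $P(e'_i)$ contains a $1$-edge iff at least one $\lambda'(e'_i)=1$ iff the path from $\lca_{T'}(x,y)$ to $y$ in $T'$ contains a $1$-edge under $\lambda'$. By the definition of the xenology relation this is $(x,y)\in\X_{(T,\lambda)}\iff(x,y)\in\X_{(T',\lambda')}$, which gives the claimed equality of induced subgraphs. The only subtle point, and the main bookkeeping hurdle, is making the correspondence between $\lca_{T'}(x,y)$ and $\lca_T(x,y)$ watertight when several contractions occur on the path; this is why I would route the argument through the intermediate tree $T(L')$ rather than handling $T\to T'$ directly.
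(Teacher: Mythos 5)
Your proposal is correct and follows essentially the same route as the paper's proof, which simply observes that ``by construction'' the path from $\lca_{T'}(x,y)$ to $y$ in $(T',\lambda')$ contains a 1-edge iff the corresponding path in $(T,\lambda)$ does. You have merely spelled out the bookkeeping (passing through $T(L')$ and matching up least common ancestors) that the paper leaves implicit.
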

\begin{proof}
  Consider two distinct leaves $x,y\in L'$. By construction of
  $(T',\lambda')$ there is a 1-edge on the path from $\lca_{T'}(x,y)$ to
  the leaf $y$ in $(T',\lambda')$ if and only if the corresponding path in
  $(T,\lambda)$ containes a 1-edge and thus $(x,y)\in\X_{(T',\lambda')}$
  iff $(x,y)\in\X_{(T,\lambda)}$.
 \end{proof}

The enumeration of all edge-labeled trees on two vertices shows that all
four possible digraphs on two vertices are valid. For three vertices, however,
there are valid and invalid digraphs. These are summarized in Figure
\ref{fig:triangles}: up to isomorphism there are eight valid $A_1$-$A_8$
and eight invalid $F_1$-$F_8$ digraphs. We will refer to them as valid and
invalid \emph{triangles}. We denote subgraphs of $\X$ that are induced by
the vertices $x_1,\dots,x_k$ by $\X[x_1,\dots,x_k]$.  In particular,
triangles in $\X$ are denoted by $\X[a,b,c]$, where $a,b,c\in L$ are three
distinct vertices.

\begin{definition} 
  An irreflexive binary relation $\X$ on $L$ is a \emph{Fitch relation} if
  all its triangles are valid. Its graph representation is called a \emph{Fitch
    graph.}
\end{definition} 

\begin{figure}[t]
\begin{center}
  \includegraphics[width=0.8\textwidth]{./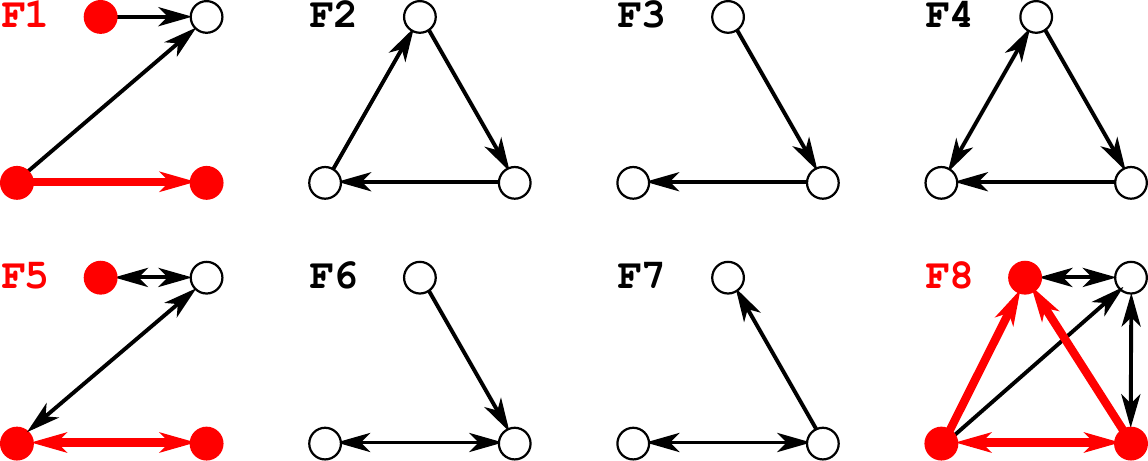}
\end{center}
\caption{The eight digraphs are the forbidden induced subgraphs that
  characterize di-cographs \cite{Ehrenfeucht:90,Crespelle:06}. The five
  digraphs on three vertices correspond to five of the eight forbidden
  triangles. Each digraph on four vertices contains one of the remaining
  forbidden triangles (highlighted by bold-red edges and vertices).}
\label{fig:forb-dicograph}
\end{figure}

A graph $G$ is a di-cograph if and only if it does not contain one of the
digraphs shown in Fig.\ref{fig:forb-dicograph} as an induced subgraph
\cite{Crespelle:06}. Since each of these graphs contains one of the
forbidden triangles, every Fitch graph is also a di-cograph. On the other
hand, a di-cograph that does not contain $F_1$, $F_5$, or $F_8$ as an
induced subgraph is a Fitch graph. As an immediate consequence of its
characterization in terms of forbidden induced subgraphs, Fitch graphs are
a heritable family, i.e., every induced subgraph of a Fitch graph is again
a Fitch graph. We summarize these observations for later reference as
\begin{lemma} 
  \label{lem:Fitch} 
  The Fitch graphs are a heritable subfamily of the di-cographs.
\end{lemma}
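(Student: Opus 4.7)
My plan is to prove the two conclusions of the lemma separately, in the natural order: first heritability, then containment in the di-cograph class.

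For heritability, I would argue directly from the definition. A Fitch relation $\X$ on $L$ is one in which every induced triangle $\X[a,b,c]$ is isomorphic to one of $A_1,\dots,A_8$, equivalently, none is isomorphic to any of $F_1,\dots,F_8$. If $L' \subseteq L$ and we restrict $\X$ to $L'$, then every triangle of $\X|_{L'}$ is, by definition of the induced-subgraph restriction, also a triangle of $\X$. Hence every such triangle is valid, and so $\X|_{L'}$ is again a Fitch relation. This shows that the class is closed under taking induced subgraphs, i.e., heritable.

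For the inclusion in the di-cograph class I would invoke the characterization recalled from \cite{Crespelle:06} and displayed in Figure~\ref{fig:forb-dicograph}: a digraph is a di-cograph if and only if it does not contain any of the eight listed digraphs as an induced subgraph. I would then argue by contrapositive. Suppose $\X$ is not a di-cograph; then it contains one of those eight forbidden induced subgraphs $H$. I would split into two cases. If $H$ has three vertices, then $H$ is itself one of the forbidden triangles $F_1,\dots,F_8$ (five of the eight forbidden di-cograph subgraphs are three-vertex digraphs, and these coincide with five of the eight forbidden triangles of Figure~\ref{fig:triangles}), so $\X$ contains an invalid triangle and is not a Fitch graph. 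If $H$ has four vertices, I would point to Figure~\ref{fig:forb-dicograph}, where the bold-red subset indicates, for each four-vertex forbidden di-cograph, an induced three-vertex sub-digraph that is one of the remaining forbidden triangles $F_1$, $F_5$, or $F_8$; since induced subgraphs of induced subgraphs are induced subgraphs, $\X$ again contains a forbidden triangle and so is not a Fitch graph. Contrapositively, every Fitch graph is a di-cograph.

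The only non-routine ingredient is the four-vertex case, where one must certify, for each of the three listed forbidden four-vertex di-cographs, a specific choice of three of its vertices that induces one of $F_1$, $F_5$, $F_8$. This is exactly the information encoded by the bold-red markings in Figure~\ref{fig:forb-dicograph}, and the verification is a finite inspection: pick the indicated triple, read off its arcs, and match against Figure~\ref{fig:triangles}. I expect this to be the main (though still elementary) obstacle, because it is the only part that is not purely formal; everything else follows immediately from the definitions and the cited di-cograph characterization. Combining the two paragraphs yields the statement of Lemma~\ref{lem:Fitch}.
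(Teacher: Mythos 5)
Your proposal is correct and follows essentially the same route as the paper: heritability from the forbidden-induced-subgraph characterization, and containment in the di-cographs by observing that each of the eight forbidden di-cograph subgraphs of Figure~\ref{fig:forb-dicograph} contains one of the forbidden triangles (the five three-vertex ones directly, the three four-vertex ones via an induced $F_1$, $F_5$, or $F_8$). Your write-up merely makes the finite case check more explicit than the paper's one-paragraph observation.
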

 
A closer inspection shows that four of the eight valid triangles,
$A_1$-$A_4$ can be explained by multiple trees, including one of the
non-binary trees $T_{13}$ to $T_{16}$. In contrast, each of the triangles
$A_5$-$A_8$ with a given labeling of its three leaves is explained by a
unique edge-labeled binary tree, i.e., a specific labeled triple.
\begin{definition} 
  An edge-labeled triple $ab|c$ is \emph{informative} if it explains a
  labeled triangle isomorphic to one of $A_5$, $A_6$, $A_7$ or $A_8$.
\end{definition} 
Thus, if $\X$ contains a triangle of the form $A_5$, $A_6$, $A_7$ or $A_8$
as an induced subgraph, then any tree explaining $\X$ must display the
corresponding informative triple. Any valid relation $\X$ can therefore be
associated with a uniquely defined set $\IT{\X}$ of \emph{informative
  triples} that it displays: $r\in \IT{\X}$ if and only if $r$ is the
unique edge-labeled triple explaining an induced triangle isomorphic to
$A_5$, $A_6$, $A_7$ or $A_8$. For later reference we summarize this fact
as
\begin{lemma}
  If $(T,\lambda)$ explains $\X$, then all triples in $\IT{\X}$ must be
  displayed by $(T,\lambda)$.
  \label{lem:display-inform}
\end{lemma}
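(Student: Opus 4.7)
The plan is to reduce the lemma to Lemma \ref{lem:induced} together with the uniqueness property built into the definition of an informative triple. Let $r \in \IT{\X}$, so by definition $r$ is the unique edge-labeled triple that explains some induced triangle $\X[a,b,c]$ isomorphic to one of $A_5, A_6, A_7, A_8$.

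First I would consider the restriction $T|\{a,b,c\}$ of $T$ to the leaf set $\{a,b,c\}$, together with the edge-labeling $\lambda'$ obtained by setting $\lambda'(e') = 1$ whenever the path in $T$ corresponding to $e'$ contains at least one 1-edge, and $0$ otherwise. By construction $(T|\{a,b,c\}, \lambda')$ is displayed by $(T,\lambda)$ in the edge-labeled sense. Hence, by Lemma \ref{lem:induced}, $\X_{(T|\{a,b,c\}, \lambda')}$ coincides with the induced subgraph $\X[a,b,c]$, i.e., the edge-labeled tree $(T|\{a,b,c\}, \lambda')$ explains $\X[a,b,c]$.

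Next I would invoke the uniqueness statement that motivated the definition of $\IT{\X}$: since $\X[a,b,c]$ is isomorphic to one of $A_5$--$A_8$, the information collected in Fig.\ \ref{fig:triangles} shows that the only edge-labeled tree on three leaves explaining this triangle is the informative triple $r$. In particular, no non-binary three-leaf tree (the $T_{13}$--$T_{16}$ types, which only explain $A_1$--$A_4$) can explain it. Therefore $(T|\{a,b,c\}, \lambda') = r$, and since $T|\{a,b,c\}$ (with its inherited labeling) is displayed by $(T,\lambda)$, the triple $r$ is displayed by $(T,\lambda)$.

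The potential obstacle is not the logical structure but the bookkeeping of edge labels under restriction: one must verify that the induced labeling $\lambda'$ on $T|\{a,b,c\}$ indeed matches the very labeling that appears in the informative triple, rather than merely matching the unlabeled triple topology. This is handled cleanly by the edge-labeled notion of ``displayed by'' given in the paragraph preceding Lemma \ref{lem:induced}, which forces $\lambda'$ to record exactly the presence/absence of $1$-edges along each contracted path, and this is precisely the information encoded by the informative triple corresponding to $A_5$--$A_8$. Applying this uniformly to every $r \in \IT{\X}$ yields the claim.
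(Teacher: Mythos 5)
Your proposal is correct and follows essentially the same route as the paper, which states this lemma without a separate proof as a summary of the preceding discussion: the enumeration in Fig.\ \ref{fig:triangles} shows that each of $A_5$--$A_8$ is explained by a unique edge-labeled triple, and Lemma~\ref{lem:induced} guarantees that the labeled three-leaf subtree displayed by $(T,\lambda)$ explains the corresponding induced triangle. Your added care about the induced edge-labeling under restriction is exactly the bookkeeping the paper leaves implicit.
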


\NEW{\section{Least-Resolved Edge-Labeled Phylogenetic Trees}}

In general, there may be more than one rooted (phylogenetic) tree that
explains a given relation $\X$.  In particular, if $\X$ is explained by a
non-binary tree $(T,\lambda)$, then there is always a binary tree
$(T',\lambda')$ that refines $T$ and explains the same relation $\X$ by
setting $\lambda'(e) = \lambda(e)$ for all edges $e$ that are also in $T$
and by choosing the label $\lambda'(e) = 0$ for all edges $e$ that are not
contained in $T$.  \NEW{In this section, we will show that whenever a
  relation $\X$ is explained by an edge-labeled tree $(T,\lambda)$, then
  there exists a unique ``smallest'' tree with this property, which we will
  call the least-resolved tree. These least resolved trees will play a key
  role for obtaining a characterization of Fitch relations in the
  following.}

\begin{definition} \label{def:contract} Let $(T=(V,E),\lambda)$ be an
  edge-labeled phylogenetic tree and let $e=(x,y)\in E$. The phylogenetic
  tree $(T_e, \lambda_e)$, referred to as the (extended) contraction of $e$
  in $(T,\lambda)$, is obtained
  by the following procedure:\\
  First contract the edge $e$ in $T$ and keep the edge-labels of all
  non-contracted edges. If $e$ is an inner edge, the resulting tree is
  again a phylogenetic tree and we are done. The contraction of an outer
  edge $e=(u,v)$, however, leads to (i) the \emph{loss} of a 
  leaf $v$ and (ii) a decrease in the degree of the parental vertex
  $u$. The latter may violate the degree conditions required for a
  phylogenetic tree.  If $u$ is the root of $T$ that has degree $1$ in
  $T_e$, we delete $u$ and its incident edge, and declare the unique
  remaining child of $u$ as the root of $T_e$.  Thus, $T_e$ is obtained by
  an additional simple contraction of the edge
  $(\rho_{T_e},\child(\rho_T))$. Otherwise, if $u$ is an inner vertex that
  has degree $2$ after the contraction of $e$, we apply an additional
  simple contraction of the edge $(u,\child(u))$ \NEW{and set
    $\lambda(\parent(u),u)=1$ if $\lambda(u,\child(u))=1$.}  Equivalently,
  the path from the parent $w$ of $u$ to the unique remaining child $w'$ of
  $u$ is replaced by a single edge $(w,w')$.  This edge is a 1-edge if and
  only if at least one of the edges $wu$ and $uw'$ in the initial tree was
  a 1-edge.
\end{definition}

\begin{definition}
  An edge-labeled phylogenetic tree $(T=(V,E),\lambda)$ is
  \emph{least-resolved (w.r.t.\ $\X_{(T,\lambda)}$)} if none of the
  edge-contracted trees $(T_e,\lambda_e)$, $e\in E$, explains
  $\X_{(T,\lambda)}$.
\end{definition}

It is easy to see that $(T_e, \lambda_e)$ is, by construction, always
obtained by a sequence of simple edge contractions and thus, $(T_e,
\lambda_e)$ is displayed by $(T,\lambda)$.

\begin{lemma}
  \label{lem:contract0}
  Let $(T,\lambda)$ be an edge-labeled phylogenetic  tree.  If $e$ is an
  inner 0-edge in $(T,\lambda)$, then
  $\X_{(T_e,\lambda_e)}=\X_{(T,\lambda)}$. If $e$ is an inner 1-edge, then
  $\X_{(T_e,\lambda_e)}\subseteq \X_{(T,\lambda)}$.
\end{lemma}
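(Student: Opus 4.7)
The plan is to analyse, for each ordered pair of distinct leaves $(x,y)$, how the path from $\lca(x,y)$ to $y$ changes when we pass from $(T,\lambda)$ to $(T_e,\lambda_e)$. Write $e=(u,v)$. Since $e$ is an inner edge, both $u$ and $v$ are inner vertices, so by Definition~\ref{def:contract} no leaf is deleted and no re-labeling occurs: the contraction is just the identification of $u$ and $v$ into a single vertex $w$, with all other edges (and their labels) carried over unchanged. Hence every 1-edge of $T_e$ is a 1-edge of $T$, and vice versa except possibly for $e$ itself.

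I would then split the argument by where $x$ and $y$ sit with respect to the cluster $C(v)$. There are four cases: (A) $x,y\in C(v)$; (B) $x,y\notin C(v)$; (C) $x\in C(v)$, $y\notin C(v)$; (D) $y\in C(v)$, $x\notin C(v)$. In cases (A)--(C) the path from $\lca_T(x,y)$ to $y$ avoids the edge $e$: in (A) because the whole path lies in the (unchanged) subtree rooted at $v$; in (B) because $y\notin C(v)$ means no ancestor of $y$ equals $v$; in (C) for the same reason. Correspondingly $\lca_{T_e}(x,y)$ and the path to $y$ in $T_e$ consist of exactly the same edges with the same labels, so the xenology status of $(x,y)$ is unaffected. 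In case (D) the path from $\lca_T(x,y)$ (which is $u$ or a proper ancestor of $u$) to $y$ contains $e$, and the path in $T_e$ is obtained from it by deleting $e$; all other edges and labels remain.

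Combining these observations gives both claims at once. If $\lambda(e)=0$, then in case (D) removing $e$ neither adds nor removes a 1-edge on the path, so $(x,y)\in\X_{(T,\lambda)}$ iff $(x,y)\in\X_{(T_e,\lambda_e)}$ in every case, proving equality. If $\lambda(e)=1$, the implication $(x,y)\in\X_{(T_e,\lambda_e)}\Rightarrow(x,y)\in\X_{(T,\lambda)}$ still holds in all four cases (since every 1-edge on the path in $T_e$ is a 1-edge on the path in $T$), giving the inclusion; the reverse inclusion can fail precisely because in case (D) the edge $e$ may have been the only 1-edge on the path.

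The only subtlety, and the step I would be careful about, is verifying that $\lca_{T_e}(x,y)$ really is the ``same'' vertex as $\lca_T(x,y)$ modulo the identification of $u$ and $v$, and that the resulting path in $T_e$ is exactly $P_T(x,y)$ with $e$ possibly removed. This is a routine but slightly tedious observation about simple contraction of an inner edge; it is cleanest to write it by noting that $C_{T_e}(z)=C_T(z)$ for every $z\ne w$ while $C_{T_e}(w)=C_T(u)$, so the $\preceq$-minimal ancestor of $\{x,y\}$ is preserved up to the identification $u\sim v\mapsto w$, which is all that is needed.
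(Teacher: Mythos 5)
Your proposal is correct and follows essentially the same route as the paper's proof: both arguments track how the path from $\lca(x,y)$ to $y$ and its 1-edges change under contraction of the inner edge, with the paper compressing your four-case analysis into the observation that only pairs whose connecting path uses $e$ (your case (D)) and pairs whose $\lca$ is $u$ or $v$ need attention. Your version is merely more explicit about the cluster-based case split and the preservation of $\lca$ up to the identification $u\sim v$, which the paper leaves implicit.
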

\begin{proof}
  The contraction of the inner 0-edge $e=(u,v)$ does not change the number
  of 1-edges along the paths connecting any two leaves. It affects the least
  common ancestor of $x$ and $y$, if $\lca_T(x,y)=u$ or $\lca_T(x,y)=v$. In
  either case, however, the number of 1-edges between the $\lca_T(x,y)$ and
  the leaves $x$ and $y$ remains unchanged. Hence, the relation
  $\X_{(T,\lambda)}$ is not affected by the contraction.

  The contraction of a 1-edge $e$ reduces the number of 1-edges along the path
  between all pairs of leaves whose connecting path in $T$ contain $e$.
  Thus, if $(x,y)\in \X_{(T_e,\lambda_e)}$ then the path connecting $x$ and
  $y$ in $T$ contains also at least one 1-edge, and hence $(x,y)\in
  \X_{(T,\lambda)}$  
\end{proof}

Note that edge contractions therefore always imply
$\X_{(T_e,\lambda_e)}\subseteq \X_{(T,\lambda)}$.
\NEW{  
There may be edges in a tree 
whose  labeling does not affect the relation, i.e., they can
be labeled either 0 or 1.  
The latter observation gives rise to the following definition.

\begin{definition} 
An edge $e$ in a tree $(T,\lambda)$ is  \emph{irrelevant}
if $(T,\lambda')$ with  $\lambda'(e) \neq  \lambda(e)$ and 
 $\lambda'(f) =  \lambda(f)$ for all $f\neq e$ still explains
$\X_{(T,\lambda)}$. 
\end{definition}

Edges that are not 
irrelevant are called \emph{relevant}.
}
As
an example consider the two trees $T_9$ and $T_{12}$ in Figure
\ref{fig:triangles}. Both explain the valid triangle $A_1$. The inner edge
of $T_9$ and $T_{12}$ is a 0-edge and 1-edge, respectively. Thus, this edge
is irrelevant.  The tree $T_{16}$, which is obtained from both $T_9$ and
$T_{12}$ by contracting the irrelevant edge, still explains $A_1$.
\NEW{For later reference, we provide a simple characterization of irrelevant edges.}

\begin{lemma} 
  An edge $e=(u,v)$ is irrelevant in a phylogenetic tree $(T,\lambda)$ if
  and only if $e$ is an inner edge and every path from $v$ to each leaf
  in the subtree rooted at $v$ contains a 1-edge.
\label{lem:relev1}
\end{lemma}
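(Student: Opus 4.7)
My plan is to treat the two directions of the equivalence separately and, in each case, to isolate the pairs $(x,y)\in L\times L$ whose membership in $\X_{(T,\lambda)}$ actually depends on the label of $e=(u,v)$. The key observation is that flipping $\lambda(e)$ affects $(x,y)\in\X_{(T,\lambda)}$ only if the path from $\lca_T(x,y)$ to $y$ uses the edge $e$, which happens precisely when $y\in C(v)$ and $x\in L\setminus C(v)$. For the reverse direction $(x,y)\mapsto(y,x)$ the path from $\lca$ to $x$ never goes through $e$, so those entries are unaffected.

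For the ``if'' direction, I would assume $e$ is an inner edge with the property that every path from $v$ to any leaf in the subtree rooted at $v$ contains a $1$-edge. Then for every affected pair, i.e.\ for $y\in C(v)$ and $x\in L\setminus C(v)$, the path from $\lca_T(x,y)$ to $y$ already contains a $1$-edge strictly below $v$. Hence $(x,y)\in\X_{(T,\lambda)}$ regardless of the value of $\lambda(e)$, showing that relabeling $e$ does not change the relation, i.e.\ $e$ is irrelevant.

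For the ``only if'' direction, I would argue by contraposition and split into two cases. First, suppose $e$ is an outer edge, so $v$ is a leaf. Since $T$ is phylogenetic, $u$ has at least one other child and thus there is a leaf $w\in L\setminus\{v\}$ with $\lca_T(w,v)=u$; the path $u\to v$ consists solely of $e$, so $(w,v)\in\X_{(T,\lambda)}$ iff $\lambda(e)=1$, and flipping $\lambda(e)$ changes the relation. Second, suppose $e$ is inner but there is some leaf $y\in C(v)$ such that the path from $v$ to $y$ contains no $1$-edge. Using again that $u$ has another child besides $v$, I pick a leaf $x\in L\setminus C(v)$ with $\lca_T(x,y)=u$; then the entire path from $u$ to $y$ has a $1$-edge iff $\lambda(e)=1$, so $(x,y)\in\X_{(T,\lambda)}$ flips when $\lambda(e)$ is flipped, contradicting irrelevance.

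The only subtle point, and the place I would be most careful, is verifying that the leaves $w$ and $x$ in the two subcases really exist; this uses the phylogenetic degree conditions $\deg(\rho_T)\ge 2$ and $\deg(v')\ge 3$ for inner non-root vertices $v'$, which guarantee that $u$ always has a sibling branch of $v$ containing at least one leaf. With that in place, all remaining steps reduce to inspecting which edges lie on the path from $\lca_T(x,y)$ to $y$, and no further machinery is needed.
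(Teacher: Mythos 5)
Your proof is correct and follows essentially the same route as the paper: the ``if'' direction via the observation that a $1$-edge below $v$ makes the label of $e$ immaterial, and the ``only if'' direction by contraposition, splitting into the outer-edge case and the inner-edge case with an all-$0$ path below $v$, in each case producing a witness leaf $x$ with $\lca(x,y)=u$ whose existence is guaranteed by the phylogenetic degree conditions. Your explicit identification of exactly which pairs $(x,y)$ are affected by flipping $\lambda(e)$ is a slightly more careful presentation of the same argument.
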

\begin{proof}
  Any inner edge $e$ that satisfies the condition of the lemma is
  irrelevant because every path $u$ to a leaf contains a 1-edge
  irrespective of the label of $(u,v)$.

  Conversely, assume first that $e= (u,v)$ is an outer edge. Hence,
  changing the label of $e$ would immediately change the relation between
  $v$ and any leaf $w$ located in a subtree rooted at a sibling of $v$.
  Since at least one such leaf $w$ exists in a phylogenetic tree, $e$ is
  relevant.  Now suppose that $e= (u,v)$ is an inner edge and that there is
  a leaf $w$ below $v$ such that the path from $v$ to $w$ comprises only
  0-edges. Let $x$ be a leaf such that $\lca(w,x)=u$. Since $T$ is a
  phylogenetic tree, such a leaf always exists. Then $(x,w)\in\X$ if and
  only if $\lambda(e)=1$, i.e., the inner edge $e$ is relevant.   
\end{proof}

A crucial consequence of Lemma~\ref{lem:relev1} is that every outer
edge is relevant. Furthermore, since an irrelevant edge can be relabeled as
a 0-edge without affecting $\X_{(T,\lambda)}$, Lemma \ref{lem:contract0}
implies that irrelevant edges can be contracted without changing
$\X_{(T,\lambda)}$.  These observations naturally pose the question how
edge-labeled trees are structured that cannot be contracted further without
affecting $\X_{(T,\lambda)}$.

\begin{lemma} 
  Let $(T,\lambda)$ be an edge-labeled phylogenetic tree explaining
  $\X$. Then, the tree $(T_e,\lambda_e)$ obtained by contracting the edge
  $e$ explains $\X$ if and only if $e$ is irrelevant or $e$ is an inner
  0-edge.
\label{lem:contract1}
\end{lemma}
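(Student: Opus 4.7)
I would prove the two directions separately and rely on Lemmas~\ref{lem:contract0} and \ref{lem:relev1} as the main tools.

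For the direction ``$e$ irrelevant or $e$ an inner 0-edge $\Rightarrow$ $(T_e,\lambda_e)$ explains $\X$'', the inner-0-edge case is immediate from Lemma~\ref{lem:contract0}, which gives $\X_{(T_e,\lambda_e)}=\X_{(T,\lambda)}=\X$. If $e$ is irrelevant, Lemma~\ref{lem:relev1} guarantees that $e$ is inner. By definition of irrelevance, replacing $\lambda(e)$ by $0$ yields an edge-labeling $\lambda'$ with $\X_{(T,\lambda')}=\X_{(T,\lambda)}=\X$; and since contraction of $e$ discards the label of $e$ itself but keeps all other labels intact, $(T_e,\lambda_e)=(T_e,\lambda'_e)$. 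Applying Lemma~\ref{lem:contract0} to $(T,\lambda')$ then gives $\X_{(T_e,\lambda_e)}=\X_{(T,\lambda')}=\X$, as desired.

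For the converse, I would argue the contrapositive: if $e$ is neither irrelevant nor an inner 0-edge, then $(T_e,\lambda_e)$ does \emph{not} explain $\X$. Two cases arise. If $e$ is an outer edge, contraction removes its leaf, so $L(T_e)\subsetneq L(T)$ and thus $\X_{(T_e,\lambda_e)}$ is defined on a strictly smaller leaf set than $\X$; they cannot be equal. Otherwise $e=(u,v)$ is an inner edge; since it is not a 0-edge it must be a 1-edge, and since it is not irrelevant, Lemma~\ref{lem:relev1} supplies a leaf $w$ in the subtree rooted at $v$ such that the path from $v$ to $w$ uses only 0-edges. Because $T$ is phylogenetic and $u$ has at least one child other than $v$, we can choose a leaf $x$ with $\lca_T(w,x)=u$. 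In $(T,\lambda)$ the path from $u=\lca_T(w,x)$ to $w$ traverses the 1-edge $e$, so $(x,w)\in \X$. In $(T_e,\lambda_e)$ the endpoints $u$ and $v$ are identified; the path from $\lca_{T_e}(w,x)$ to $w$ coincides with the former $v$-to-$w$ path and therefore consists only of 0-edges, giving $(x,w)\notin \X_{(T_e,\lambda_e)}$. Hence $\X_{(T_e,\lambda_e)}\neq \X$.

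The two directions together yield the stated equivalence. I expect the only mildly delicate step to be the second case of the converse, namely choosing the pair of witnesses $(x,w)$: one must combine the phylogeneticity of $T$ (to ensure $u$ has another descendant leaf $x$) with the all-0 path from $v$ to $w$ supplied by the failure of irrelevance, to certify that exactly the edge $e$ accounts for the membership of $(x,w)$ in $\X$. Everything else reduces to direct application of Lemmas~\ref{lem:contract0} and \ref{lem:relev1}.
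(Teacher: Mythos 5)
Your proposal is correct and follows essentially the same route as the paper: the forward direction handles irrelevant edges by relabeling them to $0$ and invoking Lemma~\ref{lem:contract0}, and the converse splits into the outer-edge case (leaf set changes) and the relevant inner 1-edge case, where the witness pair is built exactly as in the paper from the all-0 path of Lemma~\ref{lem:relev1} together with a leaf $x$ satisfying $\lca_T(w,x)=u$. No gaps.
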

\begin{proof}
  The discussion above already shows that irrelevant edges as well as
  0-edges can be contracted without affecting $\X$. We show that
  $\X_{(T_e,\lambda_e)}\neq \X_{(T,\lambda)}$ whenever $e$ is an outer edge
  or a relevant inner 1-edge. First we assume that $e$ is an outer
  edge. Clearly, if $v$ is a leaf, then contracting $e=(u,v)$ would change
  $v$ to an inner vertex in $(T_e,\lambda_e)$. Thus, $L(T)\neq L(T')$ and
  therefore, $(T_e,\lambda_e)$ does not explain $\X$. Now, let $e$ be a
  relevant inner 1-edge. Then, there is a leaf $x$ in the subtree rooted at
  $v$ such that $\Path(v,x)$ consists only of 0-edges (cf.\
  Lemma~\ref{lem:relev1}). Since $(T,\lambda)$ is phylogenetic, there
  exists a leaf $y\in L(T)$ such that $\lca_T(x,y)=u$. Moreover, as
  $\lambda(u,v)=1$, we have $(y,x)\in\X$. Contracting $e$ makes the vertex
  $u^*$, obtained by identifying $u$ and $v$, the least common ancestor of
  $x$ and $y$, i.e., $\lca_{T_e}(x,y)=u^*$. The path from $u^*$ to $x$ now
  contains only 0-edges, i.e., $(y,x)\notin\X_{(T_e,\lambda_e)}$. Thus,
  relevant 1-edges of $(T,\lambda)$ cannot be contracted without affecting
  $\X$.   
\end{proof}

\NEW{The following result shows that relevant edges in a tree $(T,\lambda)$ remain
		relevant in any of its edge-contracted versions  $(T_e,\lambda_e)$, where 
		$e$ is an  inner 0-edge or an irrelevant edge.  }

\begin{lemma}
  \label{lem:conservedrelevance}
  Let $(T,\lambda)$ be an edge-labeled phylogenetic tree explaining $\X$,
  the edge $e$ be an inner 0-edge or an irrelevant 1-edge in $(T,\lambda)$
  and $(T_e,\lambda_e)$ be the tree obtained from $(T,\lambda)$ by
  contracting the edge $e$.  Then, the edge $f\neq e$ is relevant in
  $(T_e,\lambda_e)$ if and only if $f$ is relevant in $(T,\lambda)$.
\end{lemma}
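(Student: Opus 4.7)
The plan is to invoke the structural characterization of irrelevant edges from Lemma \ref{lem:relev1}: an edge $f=(p,q)$ is irrelevant in an edge-labeled phylogenetic tree iff $f$ is inner and every path from $q$ to a leaf of the subtree rooted at $q$ contains at least one 1-edge. It therefore suffices to show that, for each edge $f\neq e$, both conjuncts of this characterization have the same truth value in $(T,\lambda)$ and in $(T_e,\lambda_e)$.

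First I would note that $e$ is inner in either of the two cases allowed by the hypothesis (by assumption when $e$ is a 0-edge, and by Lemma \ref{lem:relev1} when $e$ is an irrelevant 1-edge). Hence the contraction leaves the leaf set unchanged and merely identifies the endpoints $u,v$ of $e$ into a single inner vertex $u^*$. Consequently, for every edge $f=(p,q)\neq e$ the inner/outer classification of $f$ carries over verbatim to $(T_e,\lambda_e)$, after possibly renaming $p$ or $q$ to $u^*$ if they coincide with $u$. Note that $q=v$ is impossible, since $v$'s unique parent in $T$ is $u$, so $q=v$ would force $f=e$. This handles the ``inner'' half of the characterization.

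Now assume $f$ is inner. The set of leaves of the subtree rooted at $q$ (respectively at $u^*$, if $q=u$) is identical in $T$ and $T_e$. For each such leaf $\ell$, I would compare the two root-to-leaf paths. If $e$ does not lie on the $T$-path from $q$ to $\ell$, the two paths agree as labeled edge sequences, so ``contains a 1-edge'' has the same truth value. If $e$ lies on that path, so that $q\succeq u\succ v\succeq \ell$, the $T_e$-path is obtained by deleting $e$: when $e$ is an inner 0-edge, deletion leaves the set of 1-edges on the path unchanged; when $e$ is an irrelevant 1-edge, Lemma \ref{lem:relev1} guarantees that the sub-path from $v$ to $\ell$ already contains a 1-edge, so the $T_e$-path still contains one. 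Thus the ``contains a 1-edge'' property agrees in $T$ and $T_e$ for every such $\ell$.

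Combining the two halves yields that the characterization of irrelevance in Lemma \ref{lem:relev1} evaluates identically in $(T,\lambda)$ and $(T_e,\lambda_e)$, so the relevance status of $f$ is preserved. The only genuinely subtle point is the bookkeeping when $p$ or $q$ coincides with the contracted endpoints; but since $v$ has only the one parent $u$ and $u^*$ simply inherits all incident edges of $u$ and $v$ other than $e$ (with their labels untouched because $e$ is inner), the case enumeration is short and I do not anticipate any real obstacle.
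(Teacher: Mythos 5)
Your proof is correct and follows essentially the same route as the paper: both apply the characterization of irrelevance from Lemma~\ref{lem:relev1} and split on whether $e$ lies below $f$ and whether $e$ is a $0$-edge or an irrelevant $1$-edge. The only (minor, harmless) difference is that in the irrelevant-$1$-edge case you invoke Lemma~\ref{lem:relev1} directly to exhibit a surviving $1$-edge on the sub-path from $v$ to the leaf, whereas the paper first relabels $e$ as a $0$-edge and then reuses the $0$-edge argument; your variant is if anything slightly more explicit.
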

\begin{proof}
  As a consequence of Lemma~\ref{lem:contract1}, $(T_e,\lambda_e)$ still
  explains $\X$.  Lemma~\ref{lem:relev1} implies that the edge $f=(u,v)$ is
  irrelevant in $(T_e,\lambda_e)$ if and only if $f$ is an inner edge
  and all paths from $v$ to leaves below $v$ contain a 1-edge.  If $e$ is
  not located below $f$, then the contraction of $e$ does not affect this
  condition and thus, $f$ is irrelevant in $(T_e,\lambda_e)$ if and only if
  it is irrelevant in $(T,\lambda)$.

  Now suppose $e$ is located below $f$. If $e$ was a 0-edge, the number of
  1-edges along the paths from $v$ to the leaves does not change upon edge
  contraction, and thus $f$ is irrelevant in $(T_e,\lambda_e)$ if and only
  if it is irrelevant in $(T,\lambda)$.  Finally, suppose $e=(u',v')$ was
  an irrelevant 1-edge.  Thus, we can set $\lambda(e)=0$ in $(T,\lambda)$
  without changing the relation $\X$. Now we can repeat the latter
  arguments to conclude that $f$ is irrelevant in $(T_e,\lambda_e)$ if and
  only if it is irrelevant in $(T,\lambda)$.   \end{proof}

\NEW{The following result shows that the order of the contraction of 
		inner 0-edges or irrelevant 1-edges does not affect the resulting relation.}

\begin{lemma}
  \label{lem:ef2}
  Let $(T,\lambda)$ be an edge-labeled phylogenetic tree and let $e$ and
  $f$ be two edges in $E(T)$ such that $(T,\lambda)$, $(T_e,\lambda_e)$ and
  $(T_f,\lambda_f)$ explain the same relation $\X$. Then,
  $((T_e)_f,(\lambda_e)_f)$ obtained from $(T_e,\lambda_e)$ by contracting
  the edge $f$, also explains $\X$.
\end{lemma}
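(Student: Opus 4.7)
The plan is to reduce the claim to the previous two lemmas. By Lemma~\ref{lem:contract1}, the hypothesis that $(T_e,\lambda_e)$ and $(T_f,\lambda_f)$ both explain $\X$ forces each of $e$ and $f$ to be either an inner $0$-edge or an irrelevant edge of $(T,\lambda)$. Since Lemma~\ref{lem:relev1} implies that every irrelevant edge is inner, we conclude that both $e$ and $f$ are inner edges. In particular, the contraction $(T_e,\lambda_e)$ is just a simple inner edge contraction: by Definition~\ref{def:contract} the labels of all edges other than $e$ are preserved, no leaf is lost, and no additional suppression of a degree-$2$ vertex is triggered. Consequently $f$ is still present in $T_e$ and is still an inner edge (contracting an inner edge merges two inner vertices into one, so the endpoints of $f$ remain inner), and $\lambda_e(f)=\lambda(f)$.

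Next I would split into cases according to the type of $f$ in $(T,\lambda)$. If $f$ is a (relevant or irrelevant) inner $0$-edge in $(T,\lambda)$, then by the observation above $f$ is also an inner $0$-edge in $(T_e,\lambda_e)$, and Lemma~\ref{lem:contract0} (applied in $(T_e,\lambda_e)$) yields $\X_{((T_e)_f,(\lambda_e)_f)}=\X_{(T_e,\lambda_e)}=\X$. If instead $f$ is an irrelevant edge in $(T,\lambda)$ (which, being irrelevant, is inner and may be a $0$- or $1$-edge), Lemma~\ref{lem:conservedrelevance} is applicable: since $e$ is itself an inner $0$-edge or an irrelevant $1$-edge and $f\neq e$, the edge $f$ is irrelevant in $(T,\lambda)$ if and only if it is irrelevant in $(T_e,\lambda_e)$. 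Hence $f$ is irrelevant in $(T_e,\lambda_e)$, and Lemma~\ref{lem:contract1} gives $\X_{((T_e)_f,(\lambda_e)_f)}=\X_{(T_e,\lambda_e)}=\X$.

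The main obstacle I anticipate is purely bookkeeping: one must verify carefully that contracting the inner edge $e$ does not, as a side effect, change the character of $f$—that is, turn $f$ into an outer edge, alter its label, or change whether the condition of Lemma~\ref{lem:relev1} holds below its lower endpoint. This is handled by noting that inner edge contractions neither create leaves nor relabel surviving edges, together with Lemma~\ref{lem:conservedrelevance}, which is precisely the tool that propagates relevance/irrelevance of $f$ through the contraction of $e$. Once these invariants are in place, the result follows immediately by the case analysis above.
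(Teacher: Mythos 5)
Your argument is correct and follows essentially the same route as the paper: classify $e$ and $f$ via Lemma~\ref{lem:contract1} as inner 0-edges or irrelevant edges, observe that an inner-edge contraction preserves the label and inner status of $f$, and invoke Lemma~\ref{lem:conservedrelevance} together with Lemmas~\ref{lem:contract0} and~\ref{lem:contract1} to conclude that contracting $f$ in $(T_e,\lambda_e)$ still preserves $\X$. Your case split on whether $f$ is an inner 0-edge or an irrelevant 1-edge just makes explicit what the paper's proof states more compactly.
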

\begin{proof}
  By Lemma \ref{lem:contract1}, an edge can be contracted without affecting
  $\X$ if and only it is an inner 0-edge or an irrelevant 1-edge. The
  labeling of $f$ is not affected by contraction of $e$ and \emph{vice
    versa}. Lemma~\ref{lem:conservedrelevance} furthermore shows that the
  (ir)relevance of an edge $f\ne e$ is conserved by the contraction of
  0-edges and irrelevant 1-edges. Therefore $e$ and $f$ can be contracted
  in arbitrary order and preserve $\X$ in each contraction step.   
\end{proof}

\NEW{We will now apply the results developed so far to least-resolved
  trees. First, we show that the order of edge contractions does not affect
  the resulting least-resolved tree.}

\begin{lemma} 
  Let $(T,\lambda)$ be a least-resolved tree w.r.t.\
  $\X=\X_{(T,\lambda)}$. Then, there is no sequence of edge contractions
  $e_1e_2\dots e_{\ell}$ such that the resulting contracted tree
  $T_{e_1e_2\dots e_{\ell}}$ explains $\X_{(T,\lambda)}$.
  \label{lem:no-cont}
\end{lemma}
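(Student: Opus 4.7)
The plan is to argue by contradiction, relying only on the monotonicity of $\X$ under extended edge contraction that has already been established. I assume, towards a contradiction, that a contraction sequence $e_1, e_2, \dots, e_\ell$ exists such that $(T_{e_1 \cdots e_\ell}, \lambda_{e_1 \cdots e_\ell})$ explains $\X = \X_{(T,\lambda)}$, where $e_i$ is understood as an edge of the intermediate tree $(T_{e_1 \cdots e_{i-1}}, \lambda_{e_1 \cdots e_{i-1}})$.

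The key step will be to chain the monotonicity statement from the remark following Lemma~\ref{lem:contract0}, which guarantees $\X_{(T'_{e'},\lambda'_{e'})} \subseteq \X_{(T',\lambda')}$ for any extended contraction of an edge $e'$ of any phylogenetic tree $(T',\lambda')$. Iterated along the sequence, this yields
\begin{equation*}
\X \;=\; \X_{(T,\lambda)} \;\supseteq\; \X_{(T_{e_1},\lambda_{e_1})} \;\supseteq\; \cdots \;\supseteq\; \X_{(T_{e_1 \cdots e_\ell}, \lambda_{e_1 \cdots e_\ell})} \;=\; \X .
\end{equation*}
Because the first and last terms coincide, equality must hold throughout. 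In particular, $\X_{(T_{e_1},\lambda_{e_1})} = \X$, i.e., contracting the single edge $e_1\in E(T)$ in $(T,\lambda)$ already yields a tree that explains $\X$. This directly contradicts the assumption that $(T,\lambda)$ is least-resolved w.r.t.\ $\X$.

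The main point that requires a bit of care is the possibility that some $e_i$ is an outer edge, in which case a leaf is deleted and the relations in the chain live on different ground sets. I would dispose of this as a preliminary observation: if any $e_i$ were outer, then $L(T_{e_1 \cdots e_\ell}) \subsetneq L(T)$ (subsequent contractions never reintroduce a removed leaf), so $(T_{e_1 \cdots e_\ell},\lambda_{e_1 \cdots e_\ell})$ could not possibly explain the relation $\X$ on $L(T)$, contradicting the hypothesis already. Hence every $e_i$ must be inner, the leaf set stays fixed at $L(T)$, and the inclusion chain above is a genuine descending chain of relations on $L(T) \times L(T)$ to which the monotonicity argument applies verbatim. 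Note that no appeal to the stronger commutation statement of Lemma~\ref{lem:ef2} is needed here; the bare per-step monotonicity together with the definition of least-resolved suffices.
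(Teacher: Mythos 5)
Your proof is correct and follows essentially the same route as the paper's: dispose of outer edges via the shrinking leaf set, then use the monotonicity $\X_{(T_e,\lambda_e)}\subseteq\X_{(T,\lambda)}$ of Lemma~\ref{lem:contract0} along the sequence of inner-edge contractions. The only cosmetic difference is that you close the argument with the sandwich $\X\supseteq\cdots\supseteq\X$ forcing equality at the first step (contradicting least-resolvedness directly), whereas the paper tracks a specific pair $(x,y)$ lost at the first contraction and notes it can never be regained; both hinge on exactly the same two facts.
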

\begin{proof}
  Let $(T,\lambda)$ be a least-resolved tree, i.e., none of the
  edge-contracted trees $(T_e,\lambda_e)$, $e\in E$, explains
  $\X_{(T,\lambda)}$.  Lemma \ref{lem:contract0} and \ref{lem:contract1}
  imply that any edge $e\in E$ must be either an outer edge or a relevant
  1-edge.  Clearly, if any edge of the sequence $e_1e_2\dots e_{\ell}$ is
  an outer edge, then the statement is trivially satisfied. 

  Hence, assume that all edges $e_1e_2\dots e_{\ell}$ are inner edges and
  therefore, relevant 1-edges in $(T,\lambda)$.  Lemma~\ref{lem:contract0}
  implies that for $\X$ to change, there must be at least one pair of
  leaves $x,y$ such that $(x,y)\in\X_{(T,\lambda)}$ and
  $(x,y)\notin\X_{(T_e,\lambda_e)}$, i.e., there is no 1-edge along
  $\Path(\lca(x,y),y)$ in $T_e$, and $e$ was the only 1-edge along
  $\Path(\lca(x,y),y)$ in $T$. By Lemma~\ref{lem:contract0},
  $(x,y)\notin\X'$ for the relation explained by any tree that is obtained
  from edge contractions of $(T_e,\lambda_e)$, i.e., there is no sequence
  of edge contractions that leads to a tree $(T',\lambda')$ such that
  $\X_{(T',\lambda')}=\X_{(T,\lambda)}$.   
\end{proof}

\NEW{Next, we summarize some useful properties of least-resolved trees that
  will be used repeatedly in the following sections.}

\begin{lemma}\label{1-edge}
  Let $(T,\lambda)$ be a phylogenetic tree that explains $\X$.  The
  following three conditions are equivalent:
  \begin{enumerate}
  \item \label{it:-1} $(T,\lambda)$ is least-resolved tree w.r.t.\ $\X$.
  \item\label{it:0} Every edge of $(T,\lambda)$ is relevant \NEW{and all
      inner edges are 1-edges.}
  \item\label{it:1}
    (a) Every inner edge  of $(T,\lambda)$ is a 1-edge. \\
    (b) For every inner edge $(u,v)$ there is an outer 0-edge $(v,x)$ in
        $(T,\lambda)$.
  \end{enumerate}
  Moreover, if $(T,\lambda)$ is least-resolved w.r.t.\ $\X$, then
  \begin{enumerate}
    \setcounter{enumi}{3}
  \item\label{it:2} Any inner edge of $(T,\lambda)$ is distinguished by at
    least one informative rooted triple in $\IT{\X}$, and
  \item\label{it:3} For any edge-contracted tree $(T_e, \lambda_e)$ of
    $(T,\lambda)$ there is a triple in $\IT{\X}$ that is not displayed by
    $(T_e, \lambda_e)$, i.e., $(T,\lambda)$ is also least-resolved w.r.t.\
    $\IT{\X}$, and
  \item\label{it:4} The tree $(T(v),\lambda_{|C(v)})$, that is, the subtree
    of $T$ rooted at the vertex $v$ with $\lambda_{|C(v)}(e)=\lambda(e)$
    for any edge $e$ of $T(v)$, is least-resolved w.r.t.\ the subrelation
    $\X_{|C(v)}$ of $\X$ induced by $C(v)$.
  \end{enumerate}
\end{lemma}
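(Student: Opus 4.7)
The plan is to prove the equivalence via the cycle \ref{it:-1}$\Rightarrow$\ref{it:0}$\Rightarrow$\ref{it:1}$\Rightarrow$\ref{it:-1}, and then derive \ref{it:2}--\ref{it:4} from the characterization in \ref{it:1}. For \ref{it:-1}$\Rightarrow$\ref{it:0}, Lemma~\ref{lem:contract1} states that an edge can be contracted without affecting $\X$ iff it is irrelevant or an inner 0-edge; if $(T,\lambda)$ is least-resolved, no such edge exists, so every edge must be relevant and every inner edge must be a 1-edge. For \ref{it:0}$\Rightarrow$\ref{it:1}, part (a) is immediate, and for (b) any inner edge $(u,v)$ is relevant, so Lemma~\ref{lem:relev1} yields a leaf $x$ below $v$ with a path $v\to x$ containing no 1-edge; since (a) forces every inner edge to be a 1-edge, the path cannot traverse an inner vertex and must consist of the single outer 0-edge $(v,x)$. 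For \ref{it:1}$\Rightarrow$\ref{it:-1}, outer edges are always relevant by the remark after Lemma~\ref{lem:relev1}; for any inner edge $(u,v)$ the outer 0-edge supplied by \ref{it:1}(b) makes the path $v\to x$ 1-edge free, so Lemma~\ref{lem:relev1} forces $(u,v)$ to be relevant, and \ref{it:1}(a) makes it a 1-edge. Hence every edge is either outer or a relevant inner 1-edge, and Lemma~\ref{lem:contract1} closes the argument.

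For \ref{it:2}, fix an inner edge $(u,v)$, set $a:=x$ with $(v,x)$ the outer 0-edge supplied by \ref{it:1}(b), pick any leaf $b$ below a child of $v$ other than $x$, and any leaf $c$ below a child of $u$ other than $v$; both exist because $(T,\lambda)$ is phylogenetic and both $u,v$ are inner vertices. Then $\lca(a,b)=v$ and $\lca(a,b,c)=u$, so the triple $ab|c$ distinguishes $(u,v)$. Because the path $v\to a$ is a single 0-edge, $(b,a)\notin\X$; because the path $u\to a$ traverses the 1-edge $(u,v)$, $(c,a)\in\X$. A short case analysis on the four possible settings of the remaining pairs shows that neither a star tree on $\{a,b,c\}$ nor the alternative binary topologies $ac|b$ or $bc|a$ admit an edge labeling satisfying simultaneously $(b,a)\notin\X$ and $(c,a)\in\X$, so $ab|c$ is the unique edge-labeled binary tree explaining $\X[a,b,c]$. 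Hence the triangle is isomorphic to one of $A_5,\dots,A_8$, and $ab|c\in\IT{\X}$.

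For \ref{it:3}, take any edge $e$ of $(T,\lambda)$. If $e=(u,v)$ is inner, the triple $ab|c$ produced by \ref{it:2} satisfies $\lca_{T_e}(a,b)=\lca_{T_e}(a,b,c)$ after contracting $e$, so it is not displayed by $(T_e,\lambda_e)$. If $e=(u,v)$ is outer with $v$ a leaf, then $v\notin L(T_e)$, and we need an informative triple involving $v$. If $u$ has an inner child $w\neq v$, we apply \ref{it:2} to the inner edge $(u,w)$ and take $c=v$; otherwise, if $u$ is non-root, we apply \ref{it:2} to $(\parent(u),u)$ and can choose $v$ in the role of $a$ if $\lambda(u,v)=0$, or in the role of $b$ otherwise. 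In either case the resulting informative triple involves $v$ and is not displayed by $(T_e,\lambda_e)$. Finally, for \ref{it:4}, Lemma~\ref{lem:induced} gives $\X_{(T(v),\lambda|_{C(v)})}=\X|_{C(v)}$; every inner edge of $T(v)$ is an inner edge of $T$ and hence a 1-edge by \ref{it:1}(a), and the outer 0-edge paired to it by \ref{it:1}(b) in $T$ lies entirely within $T(v)$. Thus $(T(v),\lambda|_{C(v)})$ satisfies condition \ref{it:1} with respect to $\X|_{C(v)}$, and the equivalence proved above yields least-resolvedness.

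The main obstacle is the uniqueness argument in \ref{it:2}: constructing $a,b,c$ is immediate, but showing that the resulting triangle is genuinely informative (i.e., in the class $A_5$--$A_8$ rather than $A_1$--$A_4$) requires explicitly ruling out the star tree and both alternative binary topologies using the two forced conditions $(b,a)\notin\X$ and $(c,a)\in\X$. All other steps are direct consequences of Lemmas~\ref{lem:relev1} and \ref{lem:contract1}.
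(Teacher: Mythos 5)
Your proof is correct and follows essentially the same route as the paper's: the equivalence of (\ref{it:-1})--(\ref{it:1}) via Lemmas~\ref{lem:relev1} and \ref{lem:contract1}, and properties (\ref{it:2})--(\ref{it:4}) derived from condition (\ref{it:1}). If anything, you are more explicit than the paper in two places: for (\ref{it:2}) you construct $a,b,c$ concretely from the outer 0-edge of (\ref{it:1}b) and rule out the star tree and alternative topologies directly, where the paper appeals to a check that the displayed subtree cannot be $T_{11}$ or $T_{12}$; and for (\ref{it:3}) you also treat contraction of outer edges, a case the paper's proof silently omits.
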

\begin{proof}
  The equivalence of Conditions \ref{it:-1} and \ref{it:0} is an immediate
  consequence of Lemma \ref{lem:contract1}.  Moreover, by Lemma
  \ref{lem:contract0}, Condition \ref{it:-1} implies Condition
  \ref{it:1}(a).  To see that also Condition \ref{it:1}(b) is implied given
  Conditions \ref{it:-1} or \ref{it:0}, observe that if $v$ is incident to
  1-edges only, then Lemma~\ref{lem:relev1} implies that $(u,v)$ is
  irrelevant. Thus, $v$ must be incident to at least one 0-edge.  However,
  this 0-edge cannot be an inner edge because inner 0-edges can always be
  contracted due to Lemma \ref{lem:contract0}. Thus, $v$ is incident to an
  outer 0-edge.

  Now assume that Condition \ref{it:1} is satisfied.  First observe that
  none of the outer edges can be contracted without changing $\X$. Let
  $(u,v)$ be an inner 1-edge and $(v,x)$ an outer 0-edge.  Since
  $(T,\lambda)$ is phylogenetic, there is a leaf $y$ for which
  $\lca(x,y)=u$. Thus, $(y,x)\in \X$. However, contraction of the inner
  edge $(u,v)$ would yield $(y,x)\not\in \X$. Thus, none of the inner edges
  can be contracted and therefore, $(T,\lambda)$ is least-resolved w.r.t.\
  $\X$.
   
  \textit{Property \ref{it:2}:} Consider an arbitrary inner edge $e=(u,v)$
  of $T$. Since $(T,\lambda)$ is phylogenetic, there are necessarily leaves
  $x$, $y$, and $z$ such that $\lca(x,y) = v$ and $\lca(x,y,z) = u$. Since
  $(u,v)$ is a 1-edge due to property \ref{it:1}, the tree on $\{x,y,z\}$
  displayed by $T$ must be one of $T_3, T_5, T_7, T_{10}, T_{11}$ or
  $T_{12}$ in Fig.\ \ref{fig:triangles}, where the red inner edge denotes
  the edge $(u,v)$. One easily checks explicitly that neither $T_{11}$ nor
  $T_{12}$ is least-resolved, since contraction of $e=(u,v)$ still yields
  $\X_{(T_e,\lambda_e)} = \X_{(T,\lambda)}$. The remaining trees $T_3$,
  $T_5$, $T_7$, and $T_{10}$, on the other hand, are informative triples
  $xy|z\in \IT{\X}$.  Since $\lca(x,y) = v$ and $\lca(x,y,z) = u$, the edge
  $e$ is by definition distinguished by the triple in $xy|z\in \IT{\X}$.

  \textit{Property \ref{it:3}:} Recall that each inner edge
  $e=(u,v)$ is distinguished by a triple $xy|z\in \IT{\X}$; therefore
  $\lca(x,y) = v$ and $\lca(x,y,z) = u$. However, contraction of $e$ would
  yield $\lca_{T_e}(x,y) = \lca_{T_e}(x,y,z)$, which in turn would imply
  that $xy|z\in \IT{\X}$ is not displayed by $T_e$, a contradiction.
  
  \textit{Property \ref{it:4}:} By construction, no edge $(a,b)$ with
  $v\succeq_T a$ was removed in $T(v)$.  Since
  $\lambda_{|C(v)}(e)=\lambda(e)$ for any edge $e$ of $T(v)$, Property
  \ref{it:1} is trivially fulfilled in $(T(v),\lambda_{|C(v)})$.  Thus,
  $(T(v),\lambda_{|C(v)})$ is least-resolved w.r.t.\ $\X_{|C(v)}$. 
 \end{proof}

As an immediate consequence of Lemma~\ref{lem:ef2}, which implies that all
edge-contractions can be performed independently of each other, we can
observe that for every edge-labeled tree $(T,\lambda)$ there exists a
unique least-resolved tree $(\widehat T,\widehat \lambda)$ that can be
obtained from $(T,\lambda)$ by a sequence of edge-contractions.  Every tree
explaining $\X$ is therefore a refinement of a least-resolved tree that
explains $\X$.  By Lemma \ref{lem:display-inform}, any tree that explains
$\X$ must display the triples in $\IT{\X}$. An even stronger result
holds however:

\begin{lemma}\label{lem:cl}
  If $(T,\lambda)$ be a least-resolved tree w.r.t.\ $\X =
  \X_{(T,\lambda)}$, then $\IT{\X}$ identifies $(T,\lambda)$.
\end{lemma}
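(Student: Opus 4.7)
The plan is to establish that $\IT{\X}$ identifies $(T, \lambda)$ by verifying (i) every triple of $\IT{\X}$ is displayed by $(T,\lambda)$, which is immediate from Lemma~\ref{lem:display-inform}, and (ii) every phylogenetic tree $T'$ on $L(T)$ that displays $\IT{\X}$ is a refinement of $T$, i.e., $\mathcal{C}(T) \subseteq \mathcal{C}(T')$.

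To carry out (ii), I would fix a non-root inner vertex $v$ of $T$ and aim to show $C(v) \in \mathcal{C}(T')$. The key ingredient is Property~\ref{it:1}(b) of Lemma~\ref{1-edge}, which provides an outer 0-edge $(v, x_v)$ with $x_v$ a leaf; this $x_v$ will serve as a pivot. The pivotal claim I would prove is: for every $a \in C(v)\setminus\{x_v\}$ and every $c \in L(T) \setminus C(v)$, the informative triple $a\,x_v \mid c$ belongs to $\IT{\X}$. To see this, I would apply Lemma~\ref{lem:induced} to the restriction of $(T,\lambda)$ to $\{a,x_v,c\}$. The restricted tree has topology $a\,x_v \mid c$ since $\lca_T(a, x_v) \preceq_T v \prec_T \parent(v) \preceq_T \lca_T(a, x_v, c)$; its inner edge is a 1-edge by Property~\ref{it:1}(a); and its outer edge incident to $x_v$ is a 0-edge because it corresponds to the edge $(v,x_v)$. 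A short case analysis on the labels of the two remaining outer edges then shows that no alternative topology on three leaves (the two other binary triples or the 3-star) with any $\{0,1\}$-labeling reproduces the induced Fitch triangle $\X[a,x_v,c]$. Hence this triangle is informative (isomorphic to one of $A_5$--$A_8$) and is uniquely explained by $a\,x_v \mid c$ with the described labels, so $a\,x_v\mid c \in \IT{\X}$.

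With this claim in hand, $T'$ displays $a\,x_v\mid c$ for every such $a$ and $c$. To conclude $C(v) \in \mathcal{C}(T')$, I would set $w = \lca_{T'}(C(v))$ and show $C_{T'}(w) = C(v)$. Assuming for contradiction that some $c \in C_{T'}(w) \setminus C(v)$ exists and picking any $a \in C(v)\setminus\{x_v\}$ (possible since $|C(v)|\ge 2$ as $v$ is an inner vertex of a phylogenetic tree), I would distinguish two cases: if $c$ lies in the same child-subtree of $w$ as $x_v$, one directly contradicts a displayed triple $a\,x_v\mid c$ for $a$ chosen in a different child-subtree; if $c$ lies in a different child-subtree, then the constraint $\lca_{T'}(a,x_v) \prec_{T'} \lca_{T'}(a,x_v,c) \preceq_{T'} w$ forces every $a \in C(v)\setminus\{x_v\}$ to share the same child-subtree of $w$ as $x_v$, contradicting $w = \lca_{T'}(C(v))$. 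Combined with the trivial fact $C(\rho_T) = L(T) \in \mathcal{C}(T')$, this yields $\mathcal{C}(T) \subseteq \mathcal{C}(T')$, so $T'$ refines $T$.

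The main obstacle is the case analysis inside the pivotal claim: one must verify, for each of the four labelings of the two free outer edges in the restricted tree, that no other binary triple and no 3-star on $\{a,x_v,c\}$ with any $\{0,1\}$-labeling explains the same Fitch triangle. This is routine but tedious; it is precisely where the dichotomy between informative and uninformative triangles ($A_5$--$A_8$ vs.\ $A_1$--$A_4$) does the decisive work, and leveraging Property~\ref{it:1} to pin down two of the four labels is what makes it feasible.
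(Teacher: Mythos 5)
Your proof is correct, but it takes a genuinely different route from the paper's. The paper proves the equivalent statement $\cl(\IT{\X})=r(T)$ and then invokes Lemma~\ref{g1}: assuming some $ab|c\in r(T)\setminus\cl(\IT{\X})$, it uses Lemma~\ref{1-edge} to locate the 1-edge $(\lca(a,b,c),v)$ and an outer 0-edge $(v,d)$, observes that $ad|c$ and $bd|c$ are then forced into $\IT{\X}$, and concludes $ab|c\in\cl(\IT{\X})$ via the inference rule $ad|c,\,bd|c\vdash ab|c$ from Dekker --- a contradiction. You instead bypass the closure operator and Dekker's rule entirely and verify the definition of ``identifies'' directly, cluster by cluster: for each non-root inner vertex $v$ you use the guaranteed outer 0-edge $(v,x_v)$ as a pivot, show that all triples $a\,x_v|c$ with $a\in C(v)\setminus\{x_v\}$, $c\notin C(v)$ lie in $\IT{\X}$ (your case analysis is sound --- fixing the inner edge to label $1$ and the $x_v$-edge to label $0$ lands the restricted subtree among $T_3,T_5,T_7,T_{10}$ and excludes the uninformative $T_{11},T_{12}$, which is the same classification the paper uses in proving Property~\ref{it:2} of Lemma~\ref{1-edge}), and then argue that these triples force $C(v)\in\mathcal{C}(T')$ for any displaying tree $T'$. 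Both proofs rest on the same structural facts (all inner edges are 1-edges; every inner vertex has an outer 0-edge child). What the paper's version buys is a statement about $\cl(\IT{\X})$ that slots directly into the \texttt{BUILD}/Aho machinery of Lemma~\ref{g1}; what yours buys is self-containedness --- no closure operator and no appeal to Dekker --- at the cost of a slightly longer combinatorial argument about child-subtrees of $\lca_{T'}(C(v))$. Since Lemma~\ref{g1} converts ``identifies'' into $\Aho(\IT{\X})=T$ regardless of how identification is established, the paper's later use of this lemma (e.g.\ $T_{\X}=\Aho(\IT{\X})$) is unaffected by your alternative proof. Two cosmetic remarks: in your pivotal claim $\lca_T(a,x_v)$ actually \emph{equals} $v$ (since $x_v$ is a leaf child of $v$), which is what you need for the $x_v$-edge of the restricted triple to be exactly $(v,x_v)$; and the degenerate case $\IT{\X}=\emptyset$ (star tree, no non-root inner vertices) is handled vacuously by your argument, whereas the paper treats it separately.
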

\begin{proof}
  If $\IT{\X} = \emptyset$, then, by construction, all induced subgraphs on
  three vertices must be isomorphic to one of the graphs $A_1$, $A_2$,
  $A_3$, or $A_4$ in Fig.\ \ref{fig:triangles}. In this case, $(T,\lambda)$
  is a star-tree, i.e., an edge-labeled tree that consists of outer edges
  only. Otherwise, $(T,\lambda)$ contains inner edges that are, by Lemma
  \ref{1-edge}, distinguished by at least one informative rooted triple in
  $\IT{\X}$, contradicting that $\IT{\X} = \emptyset$.  Hence,
  $r(T)=\emptyset$, and therefore, $r(T) = \cl(\IT{X})$.  Lemma \ref{g1}
  implies that $\IT{\X}$ identifies $(T,\lambda)$.
	    	
  In the case $\IT{\X} \neq \emptyset$, assume for contradiction that $r(T)
  \neq \cl(\IT{\X})$. By Lemma \ref{lem:display-inform} we have $\IT{\X}
  \subseteq r(T)$. Isotony of the closure, Theorem 3.1(3) in
  \cite{Bryant97}, ensures $\cl(\IT{\X}) \subseteq \cl(r(T))=r(T)$. Our
  assumption therefore implies $\cl(\IT{\X}) \subsetneq r(T)$, and thus the
  existence of a triple $ab|c \in r(T)\setminus \cl(\IT{\X})$. In
  particular, therefore, $ab|c \notin\IT{\X}$. Note that neither $ac|b$ nor
  $bc|a$ can be contained in $\IT{\X}$, since $(T,\lambda)$ explains $\X$
  and, by assumption, already displays the triple $ab|c$. Thus, $\IT{\X}$
  contains no triples on $\{a,b,c\}$. 
  
  Lemma \ref{1-edge} implies that there exists a vertex
  $v\in\child(lca(a,b,c))$, with $v\succeq lca(a,b)$, and $(lca(a,b,c),v)$
  is a 1-edge. The subtree $T_{abc}$ of $(T,\lambda)$ with leaves $a,b,c$
  thus corresponds to one of $T_3$, $T_5$, $T_7$, $T_{10}$, $T_{11}$, or
  $T_{12}$ shown in Fig.\ \ref{fig:triangles}. Recall that $T_3$, $T_5$,
  $T_7$, and $T_{10}$ explain the induced subgraphs $A_5$, $A_6$, $A_7$,
  and $A_8$, respectively. If $T_{abc}$ is one of $T_3$, $T_5$, $T_7$, or
  $T_{10}$, then we would have a triple with leaves $a,b,c$ in
  $\IT{\X}$. Since this is not the case by assumption, $T_{abc}$
  must be either $T_{11}$ or $T_{12}$. Thus, the subgraph of $\X$ induced by
  $a,b,c$ is isomorphic to either $A_1$ or $A_4$.

  Moreover, by Lemma \ref{1-edge}, there must be a leaf $d\in\child(v)$
  such that $(v,d)$ is a 0-edge. Hence, the subtrees $T_{acd}$ and
  $T_{bcd}$ with leaves $a,c,d$ and $b,c,d$, respectively, correspond to
  one the trees $T_3$, $T_5$, $T_7$, and $T_{10}$. Thus, the subgraph of
  $\X$ induced by $a,c,d$ or $b,c,d$ must be isomorphic to a valid triangle
  $A_5$, $A_6$, $A_7$ or $A_8$. By construction, $ad|c\in \IT{\X}$ and
  $bd|c\in\IT{\X}$.  Hence, any tree that explains $\X$ must display $ad|c$
  and $bd|c$.  As shown in \cite{Dekker86}, a tree displaying $ad|c$ and
  $bd|c$ also displays $ab|c$.  This implies, however, that $ab|c \in
  \cl(\IT{\X})$, a contradiction to our assumption.

  Therefore, $\cl(R_I)=r(T)$ and we can finally apply Lemma \ref{g1} to
  conclude that $\IT{\X}$ identifies $(T,\lambda)$. 
\end{proof}

We are now in the position to derive the main result of this section. 
\begin{theorem}
  Let $\X \subseteq L\times L$ be a valid relation, $(T,\lambda)$ be a
  phylogenetic tree that explains $\X$ and let $(\widehat T,\widehat
  \lambda)$ be a least-resolved phylogenetic tree w.r.t.\ $\X$.  Then,
  $(T,\lambda)$ displays $(\widehat T,\widehat \lambda)$.  Moreover, the
  tree $(\widehat T,\widehat \lambda)$ has the minimum number of vertices
  among all trees that explain $\X$, and is unique.
  \label{thm:uniqueness}
\end{theorem}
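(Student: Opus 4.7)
The plan is to start from any tree $(T,\lambda)$ explaining $\X$, produce a canonical least-resolved tree $(T^*,\lambda^*)$ that is displayed by $(T,\lambda)$, and then use the identification result for the informative triple set (Lemma \ref{lem:cl}) together with Lemma \ref{g1} to show that any least-resolved tree w.r.t.\ $\X$ equals $(T^*,\lambda^*)$. Concretely, from $(T,\lambda)$ I would iteratively contract every inner 0-edge and every irrelevant 1-edge in an arbitrary order. By Lemma \ref{lem:relev1} irrelevant edges are necessarily inner, so all of these contractions are inner and the leaf set is preserved; Lemma \ref{lem:contract1} guarantees that each single contraction preserves $\X$; Lemma \ref{lem:conservedrelevance} shows that (ir)relevance of the remaining edges is not affected by such a contraction; and Lemma \ref{lem:ef2} certifies that the order is immaterial. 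The procedure terminates at a tree $(T^*,\lambda^*)$ whose inner edges are all relevant 1-edges, which by Condition (\ref{it:1}) of Lemma \ref{1-edge} is precisely the criterion for being least-resolved w.r.t.\ $\X$. Since only inner edges were contracted, $(T^*,\lambda^*)$ is displayed by $(T,\lambda)$.

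Next I would argue that any given least-resolved tree $(\widehat T,\widehat \lambda)$ w.r.t.\ $\X$ coincides with $(T^*,\lambda^*)$. Lemma \ref{lem:cl} yields that $\IT{\X}$ identifies both trees, and Lemma \ref{g1} then gives equality of the underlying trees: $T^* = \Aho(\IT{\X}) = \widehat T$. The labelings must agree as well: by Lemma \ref{1-edge} all inner edges of a least-resolved tree carry label $1$, while for each outer edge $(u,x)$ one can select any leaf $z$ of $T^*$ with $\lca(x,z)=u$ (which exists because every inner vertex of a phylogenetic tree has at least two children); then $(z,x)\in\X$ holds if and only if the label of $(u,x)$ is $1$, so the outer edge labels are forced by $\X$. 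Consequently $(\widehat T,\widehat \lambda)=(T^*,\lambda^*)$, which also gives uniqueness of the least-resolved tree.

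Combining the two steps, every tree explaining $\X$ displays the unique least-resolved tree $(\widehat T,\widehat \lambda)$, and since displaying cannot decrease the number of vertices, $(\widehat T,\widehat \lambda)$ realizes the minimum vertex count among all trees explaining $\X$. The main subtlety, I expect, lies in the first step: iteratively contracting ``contractible'' edges one at a time could in principle alter the relevance status of other edges, so Lemma \ref{lem:conservedrelevance} is indispensable for certifying that the set of contractible edges behaves predictably throughout the procedure. Once the construction is in place, uniqueness and minimality follow essentially by plugging Lemma \ref{lem:cl} into Lemma \ref{g1}.
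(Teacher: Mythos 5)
Your proposal is correct and follows essentially the same route as the paper: the contraction machinery (Lemmas \ref{lem:contract1}, \ref{lem:conservedrelevance}, \ref{lem:ef2}) yields that $(T,\lambda)$ displays a least-resolved tree, and Lemma \ref{lem:cl} combined with Lemma \ref{g1} gives minimality and uniqueness. Your uniqueness step is in fact slightly more explicit than the paper's (equating every least-resolved tree to $\Aho(\IT{\X})$ and then forcing the edge labels from $\X$, rather than appealing to Lemma \ref{lem:no-cont}), but it rests on the same key lemmas and is sound.
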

\begin{proof}
  The first statement is an immediate consequence of Lemma~\ref{lem:ef2}.
  Lemma \ref{lem:cl} implies that $\IT{\X}$ identifies $(\widehat
  T,\widehat \lambda)$.  Hence, any tree that displays $\IT{\X}$ is a
  refinement of $(\widehat T,\widehat \lambda)$ and thus, must have more
  vertices.  Lemma \ref{lem:cl} also implies that $(T,\lambda)$ displays
  $(\widehat T,\widehat \lambda)$.  Moreover, Lemma
  \ref{lem:display-inform} ensures that any tree explaining $\X$ 
  display $\IT{\X}$. Combining these two observations, we conclude that
  $\widehat T$ has the minimum number of vertices among all trees that
  explain $\X$.

  By Lemma \ref{1-edge}, all inner and outer edges of $(\widehat T,\widehat
  \lambda)$ are relevant, and thus, their labels cannot be changed without
  changing $\X$. Moreover, Lemma \ref{lem:no-cont} implies that there is no
  further sequence of edge contractions that could be applied to $(\widehat
  T,\widehat \lambda)$ to obtain another tree that explains $\X$.  Hence,
  $(\widehat T,\widehat \lambda)$ is unique.   
\end{proof}

\NEW{\section{Characterization of Valid Xenology Relations}}
 
\NEW{In this section we prove our main result: a binary relation $\X$ is
  explained by a tree if and only if it contains only valid triangles.  The
  key idea of the proof, which proceeds by induction on the number of
  leaves, is to consider the superposition of trees explaining two induced
  subrelations, each of which is obtained by removing a single vertex from
  $\X$. We first establish several technical results for these trees.  To
  this end we introduce some notation that will be used in this section
  only.}
 
\begin{definition}
  \label{def:T-v/contract}
  Let $(T,\lambda)$ be an edge-labeled phylogenetic tree and $e=(u,v)$ be
  an outer-edge of $T$.  We write $(T-v,\lambda_{|L-v})$ for the tree
  obtained from $(T,\lambda)$ by removing the outer edge $e$ and vertex $v$
  from $T$ and keep the edge-labels of all remaining edges.
\end{definition} 
For an outer edge $e=(u,v)$ we therefore have $(T-v,\lambda_{|L-v}) = (T_e,
\lambda_e)$ if and only if either $u=\rho_T$ and $\deg_{T-v}(u)>1$ or
$u\neq\rho_T$ and $\deg_{T-v}(u)>2$.

\begin{definition} 
  \label{def:Xv}
  Let $\X\subset L\times L$ be an irreflexive relation and consider
  $l_1,\dots,l_k\in L$.  The set $\X_{\neg l_1,\dots,l_k}$ denotes the
  subrelation of $\X$ that is induced by $L\setminus \{l_1,\dots,l_k\}$.
\end{definition} 

We emphasize that the results established in the previous sections are in general
not valid for non-phylogenetic trees. Nevertheless, it is useful in the
following to extend some concepts to more general trees. In particular, we
say that an edge-labeled rooted (but possibly non-phylogenetic) tree
$(T,\lambda)$ \emph{explains} a given irreflexive relation $\X$ if for any
pair $(x,y)\in \X$ there is a 1-edge on the path from $\lca(x,y)$ to $y$.

Using the same arguments as in the proof of Lemma~\ref{lem:induced} we
observe that $(T-v,\lambda_{|L-v})$ explains $\X_{\neg v}$.

\begin{lemma}\label{del}
  Let $(T,\lambda)$ be a least-resolved phylogenetic tree on $L$ w.r.t.\
  $\X = \X_{(T,\lambda)}$, and $v\in L$.  Let $(T',\lambda')$ be a
  least-resolved phylogenetic tree w.r.t.\ $\X_{\neg v}$.  Then,
  $(T',\lambda')$ is displayed by $(T-v,\lambda_{|L-v})$.  In particular,
  $(T',\lambda')=(T-v,\lambda_{|L-v})$ if and only if (i)
  $\parent(v) = \rho_T$ and $\deg_T(\rho_T)>2$ or (ii)
  $\deg_T(\parent(v))>3$ and $\lambda_{|L-v}(\parent(v),u)=0$ for some
  child $u\in\child(\parent(v))$, $u\neq v$.
\end{lemma}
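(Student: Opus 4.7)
My plan is to dispose of the general ``displays'' claim first, and then carefully identify when $(T-v,\lambda_{|L-v})$ is itself phylogenetic and least-resolved. For the first part, the observation immediately preceding the lemma states that $(T-v,\lambda_{|L-v})$ explains $\X_{\neg v}$, regardless of whether $T-v$ is phylogenetic. If $T-v$ is phylogenetic, Theorem~\ref{thm:uniqueness} applied to $\X_{\neg v}$ directly yields that it displays the unique least-resolved tree $(T',\lambda')$. Otherwise $\parent(v)$ has become a degree-$1$ root or a degree-$2$ inner vertex, and I would perform the corresponding simple contractions (inheriting the induced labels, i.e.\ label $1$ iff some edge on the contracted path is a $1$-edge) to obtain a phylogenetic tree $(T^\#,\lambda^\#)$ that still explains $\X_{\neg v}$. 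Theorem~\ref{thm:uniqueness} then gives that $(T^\#,\lambda^\#)$ displays $(T',\lambda')$, and since $(T^\#,\lambda^\#)$ is obtained from $(T-v,\lambda_{|L-v})$ by simple contractions, transitivity of ``displays'' finishes this part.

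For the equivalence, I would show that conditions (i) and (ii) characterize exactly when $(T-v,\lambda_{|L-v})$ is already a phylogenetic, least-resolved tree for $\X_{\neg v}$; equality with $(T',\lambda')$ then follows from the uniqueness in Theorem~\ref{thm:uniqueness}. Phylogenicity is a direct degree calculation: removing the outer edge $(\parent(v),v)$ reduces $\deg(\parent(v))$ by one, and the phylogenetic bounds ($\geq 2$ at the root, $\geq 3$ at other inner vertices) translate exactly into $\deg_T(\rho_T)>2$ in (i) or $\deg_T(\parent(v))>3$ in (ii). For the least-resolved property I would invoke Lemma~\ref{1-edge}(\ref{it:1}): part (a) is inherited from $T$ because removing an outer edge introduces no new inner edge, while part (b) can only be threatened at $\parent(v)$, since every other inner vertex retains the same children and hence the same outer $0$-edge witness. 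When $\parent(v)=\rho_T$, the root is not the child endpoint of any inner edge and (b) is vacuous there, so (i) alone suffices. When $\parent(v)\ne \rho_T$, the edge $(\parent(\parent(v)),\parent(v))$ is an inner $1$-edge in $T-v$ and therefore requires a witnessing outer $0$-edge at $\parent(v)$; using that $T$ is least-resolved, Lemma~\ref{1-edge}(\ref{it:1})(a) forces any $0$-edge leaving $\parent(v)$ to be outer, so the existence of a child $u\ne v$ with $\lambda(\parent(v),u)=0$ is precisely the remaining half of (ii). Reading these implications backwards gives the converse direction.

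The only delicate point is the bookkeeping around part (b) of Lemma~\ref{1-edge}(\ref{it:1}) at the single vertex $\parent(v)$: once one observes that every other inner vertex of $T-v$ still possesses its outer $0$-edge from $T$, and that a $0$-edge out of $\parent(v)$ in a least-resolved $T$ is automatically outer, the rest of the argument reduces to direct applications of Theorem~\ref{thm:uniqueness}.
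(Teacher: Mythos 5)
Your proposal is correct and follows essentially the same route as the paper: the ``displays'' claim via the observation that $(T-v,\lambda_{|L-v})$ explains $\X_{\neg v}$, a case split on whether $T-v$ is phylogenetic with simple contractions at a degree-$1$ root or degree-$2$ inner vertex, and the equivalence via the degree count for phylogenicity, Lemma~\ref{1-edge}(\ref{it:1}) for the outer $0$-edge witness at $\parent(v)$, and Theorem~\ref{thm:uniqueness} for uniqueness. Your observation that the witness condition is only threatened at $\parent(v)$, and is vacuous there when $\parent(v)=\rho_T$, is exactly the content of the paper's two converse subcases.
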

\begin{proof}	
  Let $(T',\lambda')$ be least-resolved w.r.t.\ $\X_{\neg v}$. If
  $(T-v,\lambda_{|L-v})$ is phylogenetic, then we may apply Thm.\
  \ref{thm:uniqueness} to verify that $(T',\lambda')$ is indeed displayed
  by $(T-v,\lambda_{|L-v})$. Now assume that $(T-v,\lambda_{|L-v})$ is not
  phylogenetic.  In this case, either (a) $\parent(v)\ne\rho_T$ is an
  inner vertex of degree $2$, or (b) the root $\rho$ of $T-v$ has degree
  $1$, and hence $\rho_T=\parent(v)$.
	
  \emph{Case (a):} If $x = \parent(v)\ne\rho_T$ is an inner vertex of
  degree $2$, let $T^*$ be the tree obtained by a simple contraction of the
  edge $(\parent(x), x)$ and setting $\lambda_{|L-v}(x,\child(x))=1$. The
  labels of all other edges are kept. By construction, we obtain a
  phylogenetic tree $(T^*,\lambda^*)$ that still explains $\X_{\neg v}$ and
  satisfies $(T',\lambda') \le
    (T^*,\lambda^*)\le(T-v,\lambda_{|L-v})$.  Therefore, $(T',\lambda')$
  is displayed by $(T-v,\lambda_{|L-v})$.

  \emph{Case (b):} If the root $\rho$ of $T-v$ has degree $1$, let $T^*$ be
  the tree obtained by deleting $\rho$ and the edge $(\rho,w)$, where $w$
  denotes the unique child of $\rho$ in $T-v$, and declaring $\child(\rho)$
  as the root of $T'$. For all other edges set
  $\lambda^*(e)=\lambda_{|L-v}(e)$.  Again, we obtain a phylogenetic tree
  $(T^*,\lambda^*)$ that still explains $\X_{\neg v}$.  Repeating the
  arguments of \emph{Case (a)}, we can conclude that $(T',\lambda')$ is
  displayed by $(T-v,\lambda_{|L-v})$.
			
  \smallskip 

  Now assume that $(T',\lambda')=(T-v,\lambda_{|L-v})$.  There are two
  cases: either $\parent(v)$ is the root $\rho_T$ or not. If $\parent(v) =
  \rho_T$, then $\deg_T(\rho_T) \leq 2$ would imply that $\deg_{T'}(\rho_T)
  \leq 1$, in which case $(T',\lambda')$ would not be a phylogenetic tree;
  a contradiction, since $(T',\lambda')$ is phylogenetic.  Hence, if
  $\parent(v) = \rho_T$, then $\deg_T(\rho_T) > 2$.  Now assume that
  $\parent(v) \neq \rho_T$. Thus, there is an inner edge $(x,\parent(v))$
  where $x=\parent(\parent(v))$.  Lemma \ref{1-edge}(\ref{it:1}) implies
  that this edge $(x,\parent(v))$ must be incident to an outer 0-edge in
  $(T',\lambda')$ and hence, $\lambda_{|L-v}(\parent(v),u)=0$ for some leaf
  $u\in L\setminus\{v\}$. Moreover, as $(T',\lambda')$ is phylogenetic,
  $\deg_{T-v}(\parent(v))>2$ and hence, $\deg_T(\parent(v))>3$.

  Conversely, assume first that $\parent(v) = \rho_T$ and
  $\deg_T(\rho_T)>2$.  In this case, $(T-v,\lambda_{|L-v})$ is still a
  phylogenetic tree.  By construction, $E^0(T-v)=E^0(T)$ and
  $\lambda_{|L-v}(e)=\lambda(e)$ for all $e\in E^0(T-v)$ Thus, any inner
  edge of $T-v$ is a 1-edge.  Lemma \ref{1-edge}(\ref{it:1}) implies that
  for each inner edge $e=(x,y)$ in $T$ there is an outer 0-edge $(y,z)$ in
  $(T,\lambda)$. This property still holds in $(T-v,\lambda_{|L-v})$
  because the deleted edge $(\parent(v),v)$ is incident to the root of
  $(T,\lambda)$. Thus all edges of $(T-v,\lambda_{|L-v})$ are relevant.
  Lemma \ref{1-edge} implies that $(T-v,\lambda_{|L-v})$ is least-resolved.

  Now assume that $\parent(v) \neq \rho_T$ and $\deg_T(\parent(v))>3$.
  Thus, $(T-v,\lambda_{|L-v})$ is still a phylogenetic tree.  Let
  $\lambda_{|L-v}(\parent(v),u)=0$ for some child $u\in\child(\parent(v))$,
  $u\neq v$. Now, we can apply similar arguments as above to conclude that
  all edges in $(T-v,\lambda_{|L-v})$ are relevant, and thus,
  $(T-v,\lambda_{|L-v})$ is least-resolved.

  In summary, if $\parent(v) = \rho_T$ and $\deg_T(\rho_T)>2$ or
  $\lambda_{|L-v}(\parent(v),u)=0$ for some child $u\in\child(\parent(v))$,
  $u\neq v$, and $\deg_T(\parent(v))>3$, then $(T-v,\lambda_{|L-v})$
  is least-resolved w.r.t.\ $\X_{\neg v}$. By Thm.~\ref{thm:uniqueness},
  $(T-v,\lambda_{|L-v}) = (T',\lambda')$.   
\end{proof}

An immediate consequence of Lemma \ref{del} is the following result
\NEW{that is crucial for proving the main result.}

\begin{lemma}\label{del2}
  Let $(T,\lambda)$ and $(T-v,\lambda_{|L-v})$ be defined as in Lemma
  \ref{del}, and $(T',\lambda')$ be the least-resolved phylogenetic tree
  that explains $\Xv$.  Then, either
  \begin{enumerate}
  \item $(T-v,\lambda_{|L-v}) = (T',\lambda')$, or
  \item $(T',\lambda')$ is obtained from $(T-v,\lambda_{|L-v})$ by a simple
    contraction of either
    \begin{description}
    \item[(i)] the inner edge $(\rho_T,u)\in E(T-v)$, in case that
      $\parent(v) = \rho_T$ and $\deg_T(\rho_T) = 2$, or
    \item[(ii)] the inner edge $(\parent(x), x)\in E(T-v)$, where
      $x=\parent(v)\neq \rho_T$, and setting $\lambda'(x,\child(x))=1$,
      otherwise.
    \end{description}
    In either case $\lambda'(e)=\lambda_{|L-v}(e)$ for all non-contracted
    edges $e$.
  \end{enumerate}
  In particular, $(T-v,\lambda_{|L-v})$ displays the least-resolved
  phylogenetic tree $(T',\lambda')$ that explains $\Xv$ and
  therefore, $r(T')\subseteq r(T-v)$.
\end{lemma}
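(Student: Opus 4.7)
The plan is to apply Lemma~\ref{del} to isolate exactly when $(T-v,\lambda_{|L-v})=(T',\lambda')$, and in the complementary cases to produce $(T',\lambda')$ from $(T-v,\lambda_{|L-v})$ by a single simple contraction together with the explicit label adjustment from the statement. The equality alternative is immediate from the criterion of Lemma~\ref{del}. When equality fails, the failure splits naturally into the sub-cases (i) and (ii) of the statement depending on whether $\parent(v)=\rho_T$ or not. By Theorem~\ref{thm:uniqueness} it then suffices, in each situation, to exhibit a tree obtained from $(T-v,\lambda_{|L-v})$ by the prescribed contraction-and-relabeling and to verify that it is phylogenetic, explains $\Xv$, and is least-resolved with respect to $\Xv$.

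In case (i), $\parent(v)=\rho_T$ and $\deg_T(\rho_T)=2$, so $\rho_T$ has a unique child $u$ in $T-v$; contracting $(\rho_T,u)$ yields a phylogenetic tree rooted at $u$. Because every pair of leaves in $L\setminus\{v\}$ has its $\lca$ within the subtree of $T$ rooted at $u$, the contracted edge is never traversed in an $\Xv$-defining path, so the constructed tree explains $\Xv$ and in fact coincides with $(T(u),\lambda_{|C(u)})$; Lemma~\ref{1-edge}(\ref{it:4}) applied at $u$ then certifies that this tree is least-resolved w.r.t.\ $\X_{|C(u)}=\Xv$. In case (ii), set $x=\parent(v)\neq\rho_T$; the negation of Lemma~\ref{del}(ii) splits into (a) $\deg_T(x)=3$, so $x$ has a unique child $w$ in $T-v$ and the tree is not phylogenetic at $x$, or (b) $\deg_T(x)>3$ with every child of $x$ different from $v$ joined to $x$ by a $1$-edge, in which case Lemma~\ref{lem:relev1} forces $(\parent(x),x)$ to become irrelevant in $(T-v,\lambda_{|L-v})$. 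In both sub-situations I would contract the inner edge $(\parent(x),x)$. In sub-situation (b) the inherited labels on the edges out of the merged vertex are already $1$, so the prescribed assignment $\lambda'(x,\child(x))=1$ is vacuous and Lemma~\ref{lem:contract1} preserves $\Xv$. In sub-situation (a) the new edge from the merged vertex to $w$ must be reset to $1$, because the contracted edge was the only $1$-edge on $\Path(\lca(w,y),w)$ for every leaf $y$ outside the subtree rooted at $x$; this is precisely the assignment stated.

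Least-resolvedness of the constructed tree will then follow from Lemma~\ref{1-edge}(\ref{it:1}): every remaining inner edge is a $1$-edge, inherited from $(T,\lambda)$ together with the explicit reassignment, and each still admits a witnessing outer $0$-edge in the contracted tree. Theorem~\ref{thm:uniqueness} identifies this tree with $(T',\lambda')$. The ``in particular'' assertion is then immediate: in every alternative $(T',\lambda')$ arises from $(T-v,\lambda_{|L-v})$ by contracting at most one inner edge, and such a contraction can only delete displayed triples, so $r(T')\subseteq r(T-v)$. The delicate point, and the main obstacle, is sub-situation (a) of (ii): one has to verify simultaneously that resetting the new edge to $1$ reproduces exactly those $\Xv$-pairs that were mediated by the former inner $1$-edge and that it introduces no spurious relation for leaves whose $\lca$ is now the merged vertex.
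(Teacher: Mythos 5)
Your proposal is correct and follows essentially the same route as the paper: invoke Lemma~\ref{del} to settle the equality case, split the remainder according to whether $\parent(v)=\rho_T$ (and, if not, whether $\deg_T(\parent(v))=3$ or all remaining child edges of $\parent(v)$ are 1-edges), perform the single prescribed contraction with the label adjustment, and certify least-resolvedness via Lemma~\ref{1-edge} and uniqueness via Theorem~\ref{thm:uniqueness}. The only blemish is the over-strong phrasing that the contracted edge was \emph{the only} 1-edge on $\Path(\lca(w,y),w)$ for \emph{every} outside leaf $y$ (true only when $\lca(w,y)=\parent(x)$ and the rest of the path is 0-labeled), but the conclusion you draw from it matches the paper's construction and the "delicate point" you flag is handled at the same level of brevity in the paper itself.
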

\begin{proof}
  By Lemma \ref{del}, $(T-v,\lambda_{|L-v})$ is least-resolved if and only
  if $\parent(v) = \rho_T$ and $\deg_T(\rho_T)>2$ or there exists a leaf
  $u\in \parent(v)$, $u\neq v$, such that $\lambda_{|L-v}(\parent(v),u)=0$
  and $\deg_T(\parent(v))>3$.  If $(T-v,\lambda_{|L-v})$ is not
  least-resolved and $\parent(v) = \rho_T$, we have $\deg_{T-v}(\rho_T)=1$.
  Due to Lemma \ref{1-edge}(\ref{it:4}), the tree $(T',\lambda')$ obtained
  by a simple contraction of the single edge $(\rho_T,u)$ and adopting $u$
  as the new root is least-resolved w.r.t.\ $\Xv$.
 
  If $(T-v,\lambda_{|L-v})$ is not least-resolved and $\parent(v) \neq
  \rho_T$, then either (a) there is no leaf $u\in\child(\parent(v))$,
  $u\neq v$, with $\lambda_{|L-v}(\parent(v),u)=0$ or (b)
  $\deg_{T-v}(\parent(v))=2$.  Indeed, $\deg_{T-v}(\parent(v))>2$ and
  $u\in\child(\parent(v))$ with $\lambda_{|L-v}(\parent(v),u)=0$ implies
  that $(T-v,\lambda_{|L-v})$ is least-resolved. On the other hand,
  $\deg_{T-v}(\parent(v))\ge2$ because $T$ is phylogenetic.
  
  Case (a).  Assume that $(\parent(v),u)$ is a 1-edge for all children
  $u\ne v$ of $\parent(v)$.  Then, the inner edge $(\parent(x),x)\in
  E(T-v)$ is irrelevant in $(T-v,\lambda_{|L-v})$; thus it can be
  contracted. Since $(T, \lambda)$ is least-resolved, Lemma
  \ref{1-edge}(\ref{it:1}) ensures that every inner vertex in $(T-v,
  \lambda_{|L-v})$ other than $\parent(v)$ is adjacent to an outer
  0-edge. Hence, contraction of $(x,\parent(v))$ in $(T-v,\lambda-v)$
  yields the least-resolved tree w.r.t.\ $\Xv$.
  
  Case (b). If $\deg_{T-v}(\parent(v))=2$ and
  $\lambda_{|L-v}(\parent(v),u)=1$, the edge $(\parent(x),x)$ can be
  contracted without changing the relation and similar arguments as in case
  (a) show that $(T-v,\lambda_{|L-v})$ is least-resolved w.r.t.\ $\Xv$. If
  $\lambda_{|L-v}(\parent(v),u)=0$, then the construction as in 2.(ii) does
  not change $\X$ since $\lambda(\parent(x),x)=1.$ Again, similar arguments
  as in case (a) ensure that $(T-v,\lambda_{|L-v})$ is least-resolved
  w.r.t.\ $\Xv$.

  Obviously, either $(T',\lambda')=(T-v,\lambda_{|L-v})$ or $(T',\lambda')$
  can be obtained from $(T-v,\lambda_{|L-v})$ by a single simple
  edge-contraction. Thus $(T',\lambda')$ is displayed by
  $(T-v,\lambda_{|L-v})$ and $r(T')\subseteq r(T-v)$.   
\end{proof}

Let $(T=(V,E),\lambda)$ be an edge-labeled phylogenetic tree. Moreover, let
$(x,y)\in E$ and let $(T_e,\lambda_e)$ be the phylogenetic tree obtained
from $(T,\lambda)$ by extended contraction of $e$ in $(T,\lambda)$. Given
$(T_e,\lambda_e)$ it is possible to recover the tree $(T,\lambda)$
reverting the extended contraction of $e$. If $e$ was an internal edge,
this amounts to subdividing a vertex $z$, yielding $e=(u,v)$, and a
bi-partitioning of the set of children of $z$ into the children of $u$ and
$v$. If $e$ was an external edge incident to a degree 2 node, an edge $f$
in $(T_e,\lambda_e)$ is subdivided and $e$ is attached to the new inner
vertex. In addition, the labeling is adjusted.  We refer to these
constructions as \emph{reinsertion of $e$ into $(T_e,\lambda_e)$}.

\begin{lemma}
  Given a Fitch relation $\X$ such that $\X_{\neg u}$, $\X_{\neg v}$, and
  $\X_{\neg uv}$ are valid for some $u,v\in V(\X)$.  Let $(\Tu,\lu)$,
  $(\Tv,\lv)$ and $(\Tuv,\luv)$ be the least-resolved trees that explain
  $\X_{\neg u}$, $\X_{\neg v}$, and $\X_{\neg uv}$, respectively.

  Then, there is a tree $(T,\lambda)$ that correctly explains all members
  in $\X\setminus \X[u,v]$, i.e., $\X_{(T,\lambda)}[x,y] = \X[x,y]$ for all
  $x,y$ with $\{x,y\}\neq \{u,v\}$. Moreover $(T,\lambda)$ displays
  $(\Tu,\lu)$, $(\Tv,\lv)$ and $(\Tuv,\luv)$.
\label{lem:Xuv} 
\end{lemma}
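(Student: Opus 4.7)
The plan is to construct $(T,\lambda)$ by gluing $(\Tu,\lu)$ and $(\Tv,\lv)$ along their common core $(\Tuv,\luv)$. By Lemma \ref{del2} applied to each in turn, $(\Tu-v,\lu|_{L-uv})$ and $(\Tv-u,\lv|_{L-uv})$ both display $(\Tuv,\luv)$, differing from it by at most a single simple edge contraction; so $(\Tu,\lu)$ can be viewed as ``$(\Tuv,\luv)$ with the leaf $v$ reinserted'' and $(\Tv,\lv)$ as ``$(\Tuv,\luv)$ with the leaf $u$ reinserted.'' I will do both reinsertions simultaneously to produce $(T,\lambda)$.

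Concretely, I read off from $(\Tu,\lu)$ an attachment site $\sigma_v$ for $v$ in $(\Tuv,\luv)$---either a vertex of $\Tuv$, or an edge of $\Tuv$ subdivided by $\parent_{\Tu}(v)$---together with the outer-edge label $\lu(\parent_{\Tu}(v),v)$. Analogously I extract $\sigma_u$ and its outer-edge label from $(\Tv,\lv)$. Then $(T,\lambda)$ is defined by reinserting both leaves into $(\Tuv,\luv)$ at their respective sites with the recorded outer-edge labels; when $\sigma_u$ and $\sigma_v$ do not interact this is done independently, while an interaction (same vertex, or same edge, of $\Tuv$) requires a small local adjustment.

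For the verification I split the pairs $\{x,y\}\neq\{u,v\}$ into three groups. Pairs inside $L\setminus\{u,v\}$ are settled by noting that $(T,\lambda)$ restricted to $L\setminus\{u,v\}$ equals $(\Tuv,\luv)$ by construction, and then invoking Lemma \ref{lem:induced}. Pairs involving $u$ (respectively $v$) are settled by showing that the restriction $(T,\lambda)|_{L\setminus\{v\}}$ (respectively $(T,\lambda)|_{L\setminus\{u\}}$) explains $\Xv$ (respectively $\Xu$); this becomes a direct check once I verify that the construction recovers $(\Tv,\lv)$ (respectively $(\Tu,\lu)$) up to irrelevant inner $0$-edges. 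The three display claims then follow from Theorem \ref{thm:uniqueness}, since each relevant restriction of $(T,\lambda)$ explains the corresponding subrelation and therefore refines its unique least-resolved tree.

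The main obstacle I anticipate is the interacting case in which $\sigma_u=\sigma_v$ is the same edge $e\in E(\Tuv)$. The extended-contraction convention only forces the disjunction of the two sub-edge labels flanking the attachment vertex to equal $\luv(e)$, so $(\Tu,\lu)$ and $(\Tv,\lv)$ may genuinely disagree on how $\luv(e)$ is ``split'' across the two sub-edges; a single subdivision of $e$ then cannot simultaneously reproduce both splits. I expect to resolve this by introducing two new inner vertices on $e$ in some order ($u$ above $v$ or $v$ above $u$) and choosing labels on the three resulting sub-edges so that contracting away $v$ reproduces $\lu$'s labels and contracting away $u$ reproduces $\lv$'s labels. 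A finite case check over the possible label patterns on the two pairs of sub-edges in $\Tu$ and $\Tv$, constrained by their common disjunction, shows that such an ordering and labeling always exists.
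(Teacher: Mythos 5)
Your overall strategy---using Lemma~\ref{del2} to read off from $(\Tu,\lu)$ and $(\Tv,\lv)$ how $v$ and $u$ are to be reinserted into $(\Tuv,\luv)$, performing both reinsertions simultaneously, and verifying the three groups of pairs separately---is essentially the paper's approach. However, you have misidentified the hard interacting case, and this leaves a genuine gap. When the passage from the tree obtained by deleting $v$ from $(\Tu,\lu)$ down to $(\Tuv,\luv)$ contracts an inner edge $(x,y)$ with $y=\parent(v)$, undoing that contraction is not the subdivision of an edge of $\Tuv$: it is the splitting of a vertex $xy$ of $\Tuv$ into two vertices together with a \emph{bipartition} of the children of $xy$ into the sets $\C_x$ and $\C_y$. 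The critical case is when both reinsertions must split the \emph{same} vertex $xy=x'y'$ of $\Tuv$; there one must prove that the two required bipartitions coincide, i.e.\ $\C_x=\C_{x'}$ and $\C_y=\C_{y'}$, so that a single split accommodates both $u$ and $v$. Your proposal only discusses reconciling how the label of a common edge is distributed over its sub-edges. That is not where the obstruction lies: the obstruction is topological, since the two splits may demand incompatible refinements of $\Tuv$ (forcing, say, both $ac|b$ and $bc|a$, as in the counterexample of Fig.~\ref{fig:Xuv-counterex}(A)), and no choice of ordering or labeling of subdivision vertices can repair that.

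The tell-tale symptom is that your argument never uses the hypothesis that $\X$ is a Fitch relation, even though the statement is false without it --- the paper exhibits two least-resolved trees $(\Tu,\lu)$ and $(\Tv,\lv)$ that both display $(\Tuv,\luv)$ yet admit no common tree $(T,\lambda)$. The compatibility of the two bipartitions is exactly where validity of the triangles on $\{u,v,a\}$ enters: assuming a child $a$ of $xy$ lies in $\C_x$ but not in $\C_{x'}$, one deduces $[a,v]\in\Xu$ and $(u,a)\in\Xv$, so the valid-triangle condition forces $(u,v)\in\X$ or $[u,v]\in\X$, while the reinsertion would make $u$ and $v$ siblings attached by $0$-edges and hence force $u|v$ --- a contradiction. (A separate argument, again using that the two deleted-leaf trees explain the same relation $\Xuv$, is needed to place all $0$-edge children of $xy$ on the $x$-side in both trees.) Without supplying an argument of this kind, your ``finite case check over label patterns'' cannot close the proof, because it would equally well ``verify'' the counterexample.
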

\begin{proof}
  Consider the least-resolved tree $(\Tuv,\luv)$ that correctly explains
  $\X_{\neg uv}$. By Lemma \ref{del2}, this tree can be obtained from the
  least-resolved trees $(\Tu,\lu)$ and $(\Tv,\lv)$ by removing the vertices
  $v$ and $u$, respectively, and possibly contraction of edges.  More
  precisely, $(\Tuv,\luv) = (\Tu-v,\lambda_{\neg u|L'})$ or $(\Tuv,\luv)$
  is obtained from $(\Tu-v,\lambda_{\neg u|L'})$ by contracting
  \emph{exactly} the edge $(x,y)$ where $y=\parent(v)$ and a possible
  relabeling of the children of $y$. In what follows, we denote by $xy$
  the vertex in $T_{uv}$ that is obtained by contraction of this edge
  $(x,y)$.  In the same way, $(\Tuv,\luv)$ is obtained from
  $(\Tv-u,\lambda_{\neg v|L'})$ and if the edge $(x',y')$ was contracted,
  then $x'y'$ denotes the resulting vertex in $\Tuv$.

  Therefore, the following cases must be considered:
  \begin{enumerate}
  \item
    $(\Tuv,\luv) = (\Tu-v,\lambda_{\neg u|L'}) =(\Tv-u,\lambda_{\neg v|L'})$.
  \item Either 
    \begin{itemize}
    \item[(a)] $(\Tuv,\luv) = (\Tu-v,\lambda_{\neg u|L'}) 
      \lneq (\Tv-u,\lambda_{\neg v|L'})$, or 
    \item[(b)] $(\Tuv,\luv) = (\Tv-u,\lambda_{\neg v|L'}) 
      \lneq (\Tu-v,\lambda_{\neg u|L'})$.
    \end{itemize}
  \item  $(\Tuv,\luv) \lneq (\Tu-v,\lambda_{\neg u|L'})$ 
    and $(\Tuv,\luv) \lneq (\Tv-u,\lambda_{\neg v|L'})$
    and either\\ 
    (a) $xy\neq x'y'$ or (b) $xy =x'y'$.
  \end{enumerate}
	
  In \textbf{Case 1}, one can simply add the edge $(\parent(v),v)$ and
  $(\parent(u),u)$ together with the original edge labels
  $\lu(\parent(v),v)$ and $\lv(\parent(u),u)$ to obtain a tree
  $(T,\lambda)$ that contains both $(\Tu,\lu)$ and $(\Tv,\lv)$ as subtrees
  and thus, $\X_{(T,\lambda)}[x,y] = \X[x,y]$ for all $x,y$ with
  $\{x,y\}\neq \{u,v\}$.

  In \textbf{Case 2(a)}, one can simply add the edge $(\parent(v),v)$
  together with the original edge label $\lu(\parent(v),v)$ to obtain
  $(\Tu,\lu)$. Since $x'y'$ denotes the vertex that results from
  contracting the edge $(x',y')$ in $(\Tv-u,\lambda_{\neg v|L'})$, this
  vertex is also contained in $(\Tu,\lu)$.  Now, we reinsert $x'y'$ such
  that we obtain a tree $(T,\lambda)$ that contains $(\Tv,\lv)$ as a
  subtree. Hence, $\Xuv$ and $\X_{\neg v}$ are correctly explained by
  $(T,\lambda)$. It remains to show that also all $\X[v,z]$ and $\X[z,v]$
  with $z\neq u$ are still correctly explained. Assume for contradiction
  that this is not the case and that $\X[v,z] \neq \X_{(T,\lambda)}[v,z]$
  for some $z\neq u$.  This is only possible if in the tree $(T,\lambda)$
  there is this 1-edge $(x',y')$ contained in the path from $\lca_T(v,z)$
  to $z$. Hence, $\X_{(T,\lambda)}[v,z] = (v,z)$, which implies that the
  path from $\lca_{\Tu}(v,z)$ to $z$ contains only 0-edges. Moreover, $\Tu$
  is least-resolved w.r.t.\ $\Xu$. Hence, all inner edges are
  1-edges. Therefore, $\lca_{\Tu}(v,z) = x'y'$ and $(x'y', z)\in E(\Tu)$
  must be an outer 0-edge.  Note that this implies that $z$ is a child of
  $y'$ in $\Tv$. By construction according to Lemma \ref{del2}(ii), we have
  contracted the edge $(x',y')$ in $(\Tv-u,\lambda_{|L-u})$ and relabeled
  all outer edges in $\Tv$ incident to $y'$ as 1-edges. But this implies
  that $(x'y',z)$ is a 1-edge in $\Tu$; a contradiction.  The assumption
  $\X[z,v] \neq \X_{(T,\lambda)}[z,v]$ for some $z\neq u$ yields a
  contradiction using analogous arguments.
	
  \textbf{Case 2(b)} is settled by interchanging the roles of $u$ and $v$
  in Case 2(a).

\begin{figure}[htbp]
  \begin{center}
    \includegraphics[width=0.9\textwidth]{./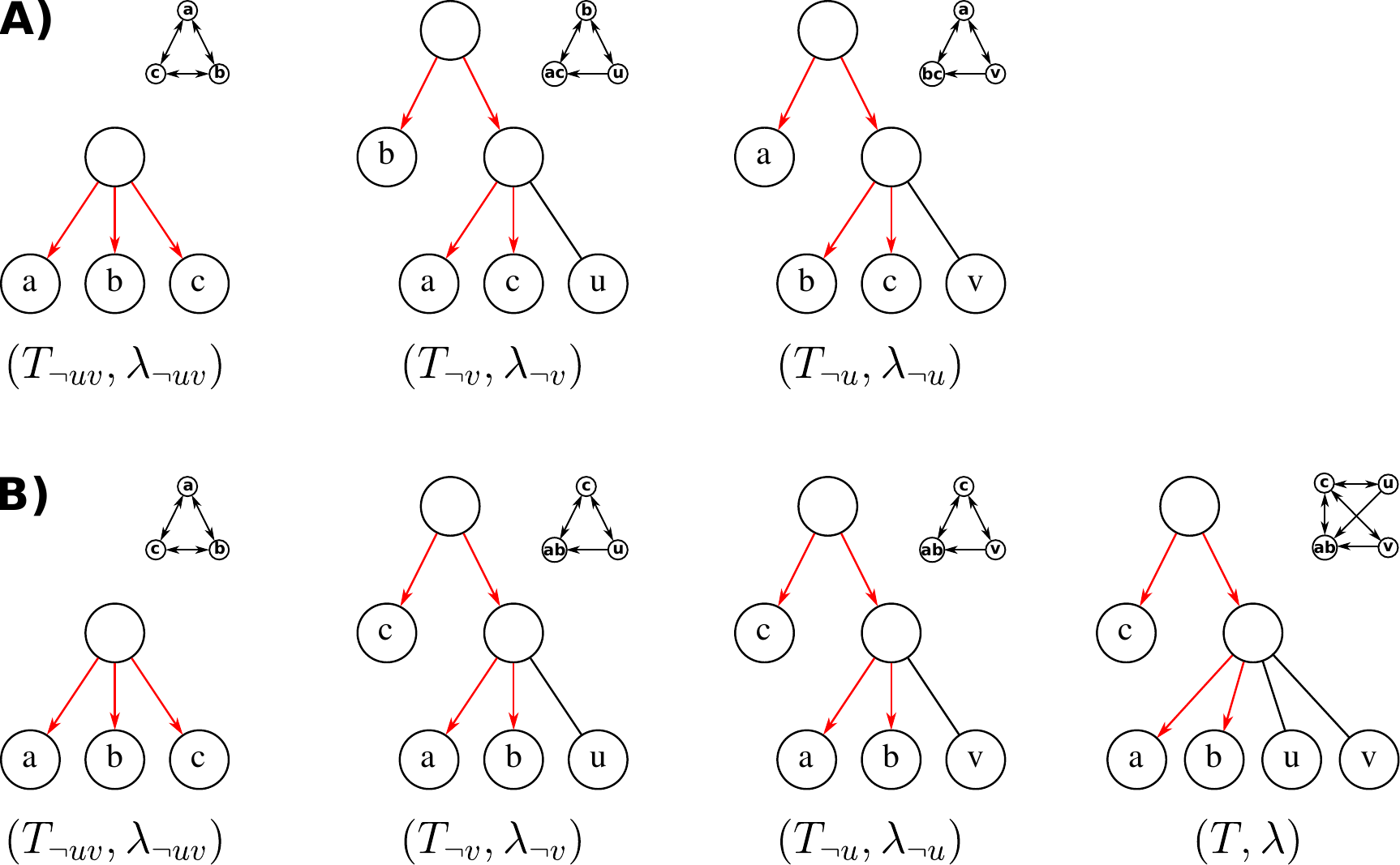}
  \end{center}
  \caption[]{(A) The two least-resolved trees $(\Tu,\lu)$ and $(\Tv,\lv)$
    that explain $\Xu$ and $\Xv$ respectively both explain the
    least-resolved tree $(\Tuv,\luv)$ that explains $\Xuv$. However, there
    exists no tree $(T,\lambda)$ that explains $\X$, thus there is no valid
    Fitch relation $\X$ that contains both $\Xu$ and $\Xv$. This is due to
    the fact that the triples $ac|b$ and $bc|a$ in $(\Tu,\lu)$ and
    $(\Tv,\lv)$ contradict each other.  (B) We have
    $\C_x\NEW{=\{c\}}=\C_{x'}$ and $\C_y\NEW{=\{a,b\}}=\C_{y'}$ in the
    least-resolved trees $(\Tu,\lu)$ and $(\Tv,\lv)$. In this case, there
    exists a tree $(T,\lambda)$ that displays $(\Tuv,\luv)$, $(\Tu,\lu)$
    and $(\Tv,\lv)$, and explains $\X$.  \NEW{The Fitch relation
      corresponding to each tree is shown in the upper right corner. Two
      nodes $x$ and $y$ are represented as one node $xy$ if they have the
      same relationship with every other node.}  }
  \label{fig:Xuv-counterex}
\end{figure}

  \textbf{Case 3.} In order to obtain $(T,\lambda)$ from $(\Tuv,\luv)$, we
  need to undo the contractions that lead to $xy$ and $x'y'$ and in
  addition, reinsert the edges $(y',u)$ and $(y,v)$ with original
  edge-labeling such that $(T,\lambda)$ contains both $(\Tu,\lu)$ and
  $(\Tv,\lv)$ as subtrees and thus, $\X_{(T,\lambda)}[x,y] = \X[x,y]$ for
  all $x,y$ with $\{x,y\}\neq \{u,v\}$. The subdivision of $xy$ partitions
  the set of children $\child(xy)$ of the vertex $xy$ into two disjoint
  sets $\C_x$ and $\C_y$ in such a way that $\C_x$ contains all children of
  $x$ that are distinct from $y$ and $\C_y$ contains all children of $y$ in
  $(\Tu-v,\lambda_{\neg u|L'})$. Analogously, the sets $\C_{x'}$ and
  $\C_{y'}$ are obtained by partitioning $\child(x'y')$ in
  $(\Tv-u,\lambda_{\neg v|L'})$. The sets $\C_{x}$, $\C_{x'}$, $\C_{y}$,
  and $\C_{y'}$ are all non-empty because $(\Tu,\lu)$ and $(\Tv,\lv)$ are
  phylogenetic.
	
  \textbf{Case 3(a).} $xy\neq x'y'$. By definition of $(\Tuv,\luv)$, it is
  possible to subdivide $xy$ and add $(\parent(v),v)$ with the
  edge-labeling $\lu(\parent(v),v)$ such that we obtain $(\Tu,\lu)$.
  Subdivision of $x'y'$ in $(\Tu,\lu)$ results in a tree $(T,\lambda)$ that
  contains $(\Tv,\lv)$ as a subtree. Hence, $(T,\lambda)$ correctly
  explains $\Xuv$ and $\Xv$.  Arguments analogous to Case 2 now show that
  $\X[z,v]$ and $\X[v,z]$ are correctly explained for any $z\neq u$, thus
  $(T,\lambda)$ correctly explains $\Xu$.
  
  \textbf{Case 3(b).} $xy=x'y'$.  Since $xy=x'y'$, $(T,\lambda)$ is
  obtained from $(\Tuv,\luv)$ by reinsertion of a single edge. To ensure
  that $(T,\lambda)$ displays both $(\Tu,\lu)$ and $(\Tv,\lv)$, we need to
  show that $\C_x=\C_{x'}$ and $\C_y= \C_{y'}$.
	
  First, we show that all 0-edges incident to $xy$ in $(\Tuv,\luv)$ are
  incident to $x$ and $x'$ in $(\Tu,\lu)$ and $(\Tv,\lv)$, respectively.
  Let $M$ denote the set of all leaves $z\in\child(xy)$ for which
  $\lambda'(xy,z)=0$ in $\Tuv$. Since $(\Tuv,\luv)$ is least-resolved,
  $M\ne\emptyset$. For any $w\in\child(xy)$, and $z\in M$ there is no
  1-edge on the path from $\lca(w,z)$ to $z$ in $(\Tuv,\luv)$. We proceed
  by showing that $M \subseteq \C_x\cap\C_{x'}$.  Assume for contradiction
  that $z\in\C_x$ but $z\not\in\C_{x'}$. Thus $z\in\C_{y'}$. Furthermore,
  for any $w'\in\C_{x'}$, the 1-edge $e'=(x',y')$ is contained in the path
  from $\lca(w',z)$ to $z$ in the tree $(\Tv-u,\lambda_{v|L'})$. Since
  $(\Tv-u,\lambda_{v|L'})$ is phylogenetic, $\C_{x'}$ is non-empty, i.e.,
  such a $w'$ exists. In contrast, for any $w\in \C_x\cup\C_y$, $w\ne z$,
  there is no 1-edge on the path from $\lca(w,z)$ to $z$ in
  $(\Tu-v,\lambda_{u|L'})$.  Since $\C_{x'} \subseteq \C_x\cup\C_y$, the
  two trees $(\Tu-v,\lambda_{u|L'})$ and $(\Tv-u,\lambda_{v|L'})$ cannot
  explain the same relation $\Xuv$; this is the desired contradiction.
	
  Hence, it remains to show that for every 1-edge $(xy,a)$ in $(\Tuv,\luv)$
  either $a\in \C_x\cap\C_{x'}$ or $a\in\C_y\cap\C_{y'}$ is true. Assume
  for contradiction that $a\in \C_x$ but $a\notin\C_{x'}$, i.e., $a\notin
  \C_y$ and $a\in\C_{y'}$. This implies $[a,v]\in\Xu$ and
  $(u,a)\in\Xv$. Since $\{u,v,a\}$ must form a valid triangle, either
  $(u,v)\in\X$ or $[u,v]\in\X$ must be true. On the other hand, since $M
  \subseteq \C_x\cap\C_{x'}$ and the trees $(\Tu,\lu)$ and $(\Tv,\lv)$ are
  least-resolved, both $(y,v)$ and $(y',u)$ must be 0-edges. By
  construction, $(T,\lambda)$ is obtained by reinserting a single edge in
  $(\Tuv,\luv)$ in such a way that $\parent(u)=\parent(v)$. Thus we must
  have $u|v$; a contradiction, and we can conclude $\C_y=\C_{y'}$ and
  $\C_x=\C_{x'}$.  
\end{proof}

We remark that the existence of the tree $(T,\lambda)$ asserted in
Lemma~\ref{lem:Xuv} does not follow from the fact that both $(\Tu,\lu)$ and
$(\Tv,\lv)$ explain $(\Tuv,\luv)$. A counter-example is given in
Fig.~\ref{fig:Xuv-counterex}. The condition that the trees together explain
a Fitch relation cannot be relaxed in the proof.

\begin{theorem}\label{existence}
  An irreflexive relation $\X$ on $L$ is valid if and only if it is a Fitch
  relation.
\label{thm:main}
\end{theorem}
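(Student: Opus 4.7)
\emph{Necessity ($\X$ valid $\Rightarrow$ $\X$ Fitch).} If $(T,\lambda)$ explains $\X$, then for any three distinct leaves $x,y,z\in L$ the restriction of $(T,\lambda)$ to $\{x,y,z\}$ is a phylogenetic tree on three leaves displayed by $(T,\lambda)$, and by Lemma~\ref{lem:induced} it explains the induced triangle $\X[x,y,z]$. The enumeration in Figure~\ref{fig:triangles} shows that every edge-labeled three-leaf tree realises one of $A_1,\ldots,A_8$, so every triangle of $\X$ is valid and $\X$ is a Fitch relation.

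\emph{Sufficiency ($\X$ Fitch $\Rightarrow$ $\X$ valid).} I proceed by induction on $|L|$. The cases $|L|\le 3$ are immediate from the definition of Fitch relation together with Figure~\ref{fig:triangles}. For $|L|=n\ge 4$, pick any two distinct $u,v\in L$. By Lemma~\ref{lem:Fitch}, the subrelations $\Xu$, $\Xv$, and $\Xuv$ are Fitch relations on strictly smaller sets, hence by the inductive hypothesis they are valid and by Theorem~\ref{thm:uniqueness} explained by unique least-resolved trees $(\Tu,\lu)$, $(\Tv,\lv)$, and $(\Tuv,\luv)$. Lemma~\ref{lem:Xuv} then supplies an edge-labeled phylogenetic tree $(T,\lambda)$ on $L$ which displays each of these three trees and satisfies $\X_{(T,\lambda)}[x,y]=\X[x,y]$ for every pair $\{x,y\}\neq\{u,v\}$.

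The main obstacle is the single remaining pair $\{u,v\}$: since the tree produced by Lemma~\ref{lem:Xuv} depends only on $(\Tu,\lu)$, $(\Tv,\lv)$, $(\Tuv,\luv)$, which are insensitive to $\X[u,v]$, the value $\X_{(T,\lambda)}[u,v]$ need not agree with $\X[u,v]$. I would address this by a case analysis on which of $(u,v)$ and $(v,u)$ lie in $\X$, together with the local configuration of $u$ and $v$ in $(T,\lambda)$, performing at most one local surgery in the neighbourhood of $\{u,v\}$: either subdividing the common ancestor of $u$ and $v$ to introduce a new inner $1$-edge that captures the ``shared'' $1$-labels previously carried by the outer edges at $u$ and $v$, or toggling the label of a single outer edge incident to $u$ or $v$. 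Such a surgery does not change the $1$-content of any path whose endpoints both avoid $\{u,v\}$, so it preserves $\X_{(T,\lambda)}[x,y]$ for $\{x,y\}\neq\{u,v\}$ and the resulting tree still displays $(\Tu,\lu)$, $(\Tv,\lv)$, and $(\Tuv,\luv)$. The Fitch hypothesis on every triangle $\X[u,v,w]$ with $w\in L\setminus\{u,v\}$ guarantees that the required surgery is globally consistent: a surgery producing the wrong relationship between $\{u,v\}$ and some $w$ would force $\X[u,v,w]$ to coincide with one of the forbidden triangles $F_1,\ldots,F_8$, contradicting the Fitch property. This produces an edge-labeled phylogenetic tree explaining all of $\X$, completing the induction.
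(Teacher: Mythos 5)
Your overall strategy is the same as the paper's: necessity via Lemma~\ref{lem:induced} and the three-leaf enumeration, and sufficiency by induction, deleting two vertices $u,v$, invoking Lemma~\ref{lem:Xuv} to get a tree $(T,\lambda)$ that is correct on $\X\setminus\X[u,v]$, and then repairing the single pair $\{u,v\}$. The first two steps are fine. The repair step, however, is where essentially all of the remaining difficulty lives, and your treatment of it has a genuine gap. First, one of the two surgeries you offer is incorrect as stated: toggling the label of a single outer edge $(\parent(u),u)$ changes whether $(w,u)\in\X_{(T,\lambda)}$ for \emph{every} leaf $w$ whose path to $u$ passes through that edge --- i.e., for all $w\neq u$ --- so it does not ``preserve $\X_{(T,\lambda)}[x,y]$ for $\{x,y\}\neq\{u,v\}$'' except in very special local configurations (for instance when $u$ and $v$ are siblings below a freshly subdivided vertex whose other children all see $u$ through a $1$-edge anyway). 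Which configurations those are is exactly what must be determined. Second, your appeal to the Fitch hypothesis is circular: arguing that a surgery yielding the wrong relationship with some $w$ ``would force $\X[u,v,w]$ to be forbidden'' only rules out certain bad outcomes; it does not establish that a surgery realizing the \emph{desired} $\X[u,v]$ while leaving every $\X[u,w]$, $\X[w,u]$, $\X[v,w]$, $\X[w,v]$ intact actually exists.

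The paper closes this gap by brute force. It first observes that in many situations $\X[u,v]$ is already uniquely determined by $\X\setminus\X[u,v]$ together with validity of all triangles through $u$ and $v$, in which case the tree from Lemma~\ref{lem:Xuv} needs no repair at all (its own relation is valid, hence agrees with the unique completion). When $\X[u,v]$ is \emph{not} determined, the paper classifies the set $\Delta_{uv}$ of triangles through $u$ and $v$ into the type combinations of Fig.~\ref{uRv} (Cases 1a--4d), uses Lemma~\ref{1-edge} to pin down the exact local shape of the least-resolved tree around $u$ and $v$ for one choice of $\X[u,v]$ (e.g.\ whether $u,v$ must be siblings, which incident edges must be $0$- or $1$-edges, whether $u$ or $v$ is a sink or source), and then exhibits, case by case, an explicit reversible local modification realizing each alternative choice (Fig.~\ref{cases}). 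Your proposal names the right kind of move but does not carry out this classification, and without it the claim that ``the required surgery is globally consistent'' is an assertion, not a proof. To complete your argument you would need to supply the case analysis --- or an argument replacing it --- showing for each admissible value of $\X[u,v]$ that a tree exists which agrees with $(T,\lambda)$ off $\{u,v\}$ and realizes that value.
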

\begin{proof}
  Assume that $\X$ is valid.  Hence, there is a tree $(T, \lambda)$ that
  explains $\X$. Let $x, y, z \in L$ be distinct vertices. Clearly, any
  subtree $T' \subseteq T$ with leaf set $\{x, y, z\}$ must correspond to
  one of the trees $T_1,\dots,T_{16}$ in Fig.\ \ref{fig:triangles}.  Since
  these subtrees can only encode the valid triangles $A_1,\dots A_8$, the
  subgraph induced by $x, y, z$ in $\X$ must be isomorphic to one of
  $A_1,\dots A_8$.  Since this statement is true for any three distinct
  vertices in $\X$, all triangles in $\X$ are valid. Hence, $\X$ is a Fitch
  relation.
	
  Now assume that $\X$ is a Fitch relation.  The trivial relation on $L$,
  corresponding to the empty graph, is explained by any tree with leaf set
  $L$ that has only 0-edges. For the non-trivial case we proceed by
  induction w.r.t.\ the number of vertices $|L|$. The base case consists of
  the valid triangles, for which the statement is trivially true. Assume
  now that all Fitch relations with $|L|\leq n$ are valid.

  Let $\X$ be a Fitch relation on $|L|=n+1$ vertices and let $u,v\in L$ be
  two distinct, arbitrarily chosen vertices.  Clearly, $\Xu$, $\Xv$, and
  $\Xuv$ are Fitch relations and, by assumption, also valid. In particular,
  there are unique least-resolved trees $(\Tu,\lu)$, $(\Tv,\lv)$ and
  $(\Tuv,\luv)$ that explain $\Xu$, $\Xv$ and $\Xuv$, respectively. With
  the exception of the relation between $u$ and $v$, $\X$ is therefore
  determined by $(\Tu,\lu)$ and $(\Tv,\lv)$, i.e., any pair $(x,y)\in \X$
  for which $\{x,y\}\ne\{u,v\}$ is explained by $(\Tu,\lu)$ or
  $(\Tv,\lv)$. In particular all pairs $(x,u)$ or $(u,x)$ in $\X\setminus
  \X[u,v]$ are explained by $(\Tv,\lv)$ and all pairs $(x,v)$ or $(v,x)$ in
  $\X\setminus \X[u,v]$ are explained by $(\Tu,\lu)$.

  Lemma \ref{lem:Xuv} implies that there is a tree that correctly explains
  all pairs in $\X\setminus \X[u,v]$ and displays $(\Tu,\lu)$, $(\Tv,\lv)$
  and $(\Tuv,\luv)$. Thus, there is in particular a least-resolved tree
  $(T,\lambda)$ that fulfills these requirements.
	
  $\X[u,v]$ is in some cases uniquely determined by $\X\setminus \X[u,v]$
  and the requirement that $\{u,v,x\}$ forms a valid triangle.  The
  existence of $(T,\lambda)$ then implies immediately that $\X[u,v]$, and
  hence $\X$, is explained by $(T,\lambda)$.

  This is not always the case, however. If more than one choice of
  $\X[u,v]$ completes $\X\setminus \X[u,v]$, we need to show that a
  $(T,\lambda)$ exists for each of the possible choices.  Denote by
  $\Delta_{uv}$ the set of triangles in $\X$ that contain $u$ and
  $v$. Full enumeration (which we leave to the reader) shows that $\X[u,v]$
  is not uniquely determined if and only if all triangles in $\Delta_{uv}$
  are of the form $A$, $B$, $C$ or $D$ listed in Fig.\ \ref{uRv}. Only certain
  combinations of these triangle types can occur: The co-occurence of $A$
  and $B$ implies $(u,v)\in\X$, hence $\X[u,v]$ is uniquely determined, and
  hence $(T,\lambda)$ is also unique. The remaining cases can be classified
  as follows:
  \begin{enumerate}  
  \item $\Delta_{uv}$ contains at least one triangle either of types $A$,
    $C$ and $D$ but not $B$, or of types $B$, $C$ and $D$ but not $A$.
  \item $\Delta_{uv}$ consists of triangles of exactly one of the types
    $A$, $B$ and $C$, $D$, respectively, and for each type there is a
    triangle.
  \item $\Delta_{uv}$ consists exclusively of triangles of the types $C$
    and $D$ and for each type $C$, $D$ there is a triangle.
  \item All triangles in $\Delta_{uv}$ are of the same type. 
  \end{enumerate}

  \begin{figure}
    \begin{center}
      \includegraphics[width=\textwidth]{./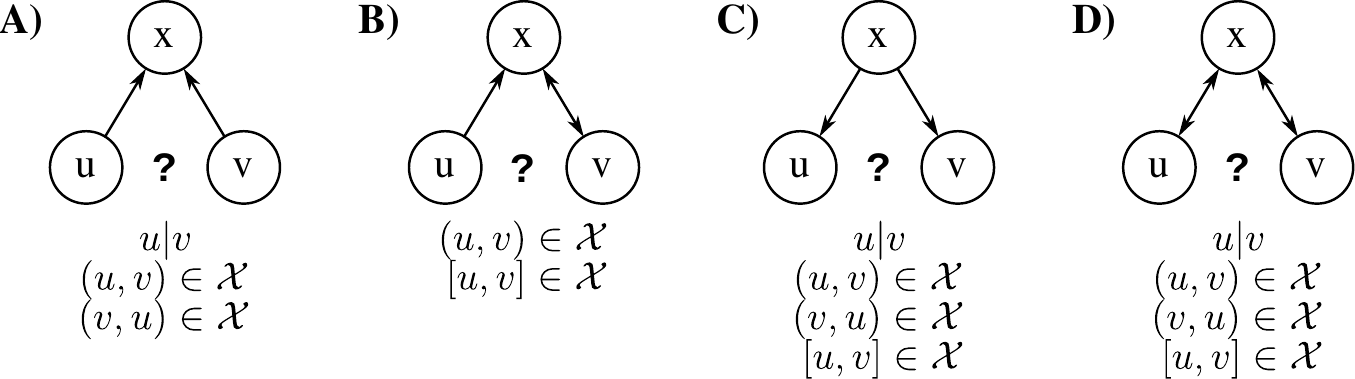}
    \end{center}
    \caption[]{All cases where the relationship $\X[u,v]$ cannot be
      uniquely inferred from $\X\setminus \X[u,v]$. }
    \label{uRv}
  \end{figure}

  In each of these cases, there is more than one possible choice for
  $\X[u,v]$. Lemma~\ref{lem:Xuv} ensures that there is a least-resolved
  tree $(T,\lambda)$ that explains at least one of these choices. Given
  $(T,\lambda)$ for one particular choice, we show below that it is always
  possible to transform $(T,\lambda)$ into another least-resolved tree that
  explains $\X$ with a different choice of $\X[u,v]$. The resulting tree
  $(T',\lambda')$ is unique by Thm.\ \ref{thm:uniqueness}, and thus the
  transformation can be inverted in a uniquely defined manner.

  In what follows, we call $v\in L(T)$ a \emph{sink} (resp.\ \emph{source})
  if for all $x\in L(T)$ we have $(x,v)\in \X_{(T,\lambda)}$ (resp.\
  $(v,x)\in \X_{(T,\lambda)}$).  Moreover, in order to exclude the trivial case
  $\Delta_{uv} = \emptyset$, we assume that $|L(T)|\geq 3$.

  \textbf{Case 1a.} 
  Suppose that $\Delta_{uv}$ contains at least one triangle each of types
  $A$, $C$ and $D$ but not of type $B$.  Hence, $\X[u,v] \in
  \{(u,v),(v,u), u|v\}$.  Thus, we show that for each of these choices of
  $\X[u,v]$ there is a least-resolved tree that explains $\X$.
	
  Suppose that $(T,\lambda)$ explains $u|v$. Since all inner edges of
  $(T,\lambda)$ are 1-edges, $u$ and $v$ must be siblings and in
  particular, the edges $(\lca(u,v),v)$ and $(\lca(u,v),u)$ are
  0-edges. Since there is a triangle of each of the types $A$, $C$ and $D$,
  there is no leaf $x\in L(T)\setminus \{u,v\}$ with $x|u$ or $x|v$.
  Moreover, if there would be another vertex $x\in V(T)\setminus \{u,v\}$
  that is adjacent to $\lca_T(u,v)$, then $(\lca(u,v),x)$ must be a 1-edge.
  Let $T^*$ denote the subtree of $T$ with root $\lca(u,v)$ without the
  leaves $u$ and $v$. Therefore, $(T,\lambda)$ locally looks like the tree
  shown in the first panel in Fig.\ \ref{cases}(1a).  In order to obtain a
  tree that explains $(u,v)$, we can modify $(T,\lambda)$ locally to obtain
  a tree $(T',\lambda')$ by inserting a single inner 1-edge $(a,b)$ in such
  a way that $b$ becomes the new root of $T^*$ and $u$ is adjacent to $a$
  and $v$ adjacent to $b$ in $(T',\lambda')$. Thus, $u$ and $v$ are not
  siblings anymore.  Moreover, we keep all edge labelings and set
  $\lambda'(a,u)=\lambda'(b,v)=0$.  By construction,
  $\X[u,v]_{(T',\lambda')}=\{(u,v)\}$.
   We note that $\lca(u,v)$ cannot be the root of $T$ since we have a
  triangle of the form $D$, i.e., there must be an inner 1-edge ancestral
  to $\lca(u,v)$.  One easily checks that $(T',\lambda')$ still explains
  all remaining pairs in $\X\setminus \X[u,v]$.  Hence $(T',\lambda')$
  explains $\X$ whenever $\X[u,v] = \{(u,v)\}$. It is least-resolved by
  construction and Lemma \ref{1-edge}, and thus unique by Thm.\
  \ref{thm:uniqueness}.
		
  Analogously, a tree $(T',\lambda')$ that explains $\X\setminus \X[u,v]$
  with $\X_{T,\lambda}[u,v] = (v,u)$ can be obtained from
  $(T,\lambda)$ by interchanging the roles of $u$ and $v$.
	
  Finally, whenever $(T',\lambda')$ explains either $(u,v)$ or $(v,u)$ we
  can obtain a tree $(T,\lambda)$ that explains $u|v$ by ``reversing'' the
  contraction above. Because of the uniqueness of $(T',\lambda')$ it
  \emph{must} locally look as in Fig.\ \ref{cases}(1a) middle.  That is,
  there is exactly \emph{one} inner 1-edge along the path from $u$ to $v$
  and all edges incident to $\parent(v)$ must be 1-edges.  Hence, after
  collapsing this edge to a single vertex, we obtain the least-resolved
  tree $(T,\lambda)$ that explains $u|v$.  Since
  $\X_{(T',\lambda')}[u,z]=\X_{(T,\lambda)}[u,z]$ and
  $\X_{(T',\lambda')}[v,z]=\X_{(T,\lambda)}[v,z]$ is still true for all
  $z\in L(T')$, i.e., $(T',\lambda')$ explains $\X$ whenever $\X[u,v] =
  u|v$.

  \textbf{Case 1b.} Suppose $\Delta_{uv}$ contains at least one triangle
  each of types $B$, $C$ and $D$, but not of type $A$. Then,
  $\X[u,v]\in\{(u,v),[u,v]\}$.  If $(T,\lambda)$ explains $[u,v]$ it has
  the following properties: Since $\Delta_{uv}$ contains triangles of type
  $B$, $v$ but not $u$ is a sink, and therefore $(\parent(v),v)$ is a
  1-edge while $(\parent(u),u)$ is a 0-edge.  Moreover, since
  $(v,u)\in\X_{T,\lambda}$ and $(\parent(u),u)$ is a 0-edge, the path from
  $\lca(u,v)$ to $u$ has to contain at least one 1-edge, $u$ and $v$ cannot
  have the same parent, hence $\lca(u,v)\succ \parent(u)$. The presence of
  triangles of type $B$, $C$, and $D$ immediately implies that $(u,x)\in\X$
  if and only if $(v,x)\in\X$ for all $x \in L(T)\setminus\{u,v\}$.
  Therefore, since each inner vertex of $(T,\lambda)$ (except possibly the
  root) must be connected to an outer 0-edge (Lemma
  \ref{1-edge}(\ref{it:1}b)), there cannot be any other inner vertex on the
  path from $\lca(u,v)$ to $\parent(u)$, hence the inner edge
  $(lca(u,v),\parent(u))$ must be present in $(T,\lambda)$. On the other
  hand, there must be a 0-edge $(\parent(v),z)$ with $z\in L(T)\setminus
  \{u,v\}$ (Lemma \ref{1-edge}(\ref{it:1}b)).  Moreover, $B$, $C$, and $D$
  imply that there may be other 0- or 1-edges incident to $\parent(v)$.  We
  denote by $T^{**}$ the subtree rooted at $\parent(u)$ that does not
  contain the leaf $u$. The subtree of $T$ that is rooted at $\parent(v)$
  but does neither contain the leaf $v$ nor the leaf $u$ nor any of the
  vertices of $T^{**}$ is denoted by $T^*$.  Thus, $(T,\lambda)$ must match
  the pattern shown Fig.\ \ref{cases}(1b, left).

  A tree $(T',\lambda')$ that explains $\X$ with $\X[u,v]=(u,v)$ can be
  constructed by a simple change in the position of $v$ in $(T,\lambda)$,
  that is, we delete the 1-edge $(\parent(v),v)$ and instead, insert the
  1-edge $(\parent(u),v)$. All other edge labels remain unchanged. By
  construction, $\X[u,v]_{(T',\lambda')}=\{(u,v)\}$ and again, one easily
  checks that $(T',\lambda')$ displays $\X\setminus \X[u,v]$ and therefore
  $\X$. Moreover, $(T',\lambda')$ is by construction least-resolved and
  therefore uniquely defined. Hence, it must locally look as in Fig.\
  \ref{cases}(1b, right). Reverting the local modifications in
  $(T',\lambda')$ again yields the uniquely defined least-resolved tree
  $(T,\lambda)$ that explains $\X$ with $\X[u,v]=[u,v]$.

  \textbf{Case 2a.} Suppose $\Delta_{uv}$ contains at least one triangle
  each of types $A$ and $C$ but no triangles of types $B$ and $D$.  Then
  $\X[u,v]\in\{u|v,(u,v),(v,u)\}$.  We first assume that $(T,\lambda)$
  explains $u|v$. Then, as in Case 1a, $u$ and $v$ have to be siblings,
  none of them is a sink and no other 0-edge is incident to
  $\lca(u,v)$. Hence, $(T,\lambda)$ locally looks again like Case 1a in
  Fig.\ \ref{cases}. Local transformations of $(T,\lambda)$ that are
  completely analogous to Case 1a can be applied to $(T,\lambda)$ in order
  to obtain unique least-resolved trees that explain $\X$ with
  $\X[u,v]=(u,v)$ and $\X[u,v]=(v,u)$, respectively (see Fig.\
  \ref{cases}(1a)). It is not hard to check that these transformations can
  be reversed by contraction of the edge $(v,u)$.

\begin{figure}[t]
  \begin{center}
    \includegraphics[width=\textwidth]{./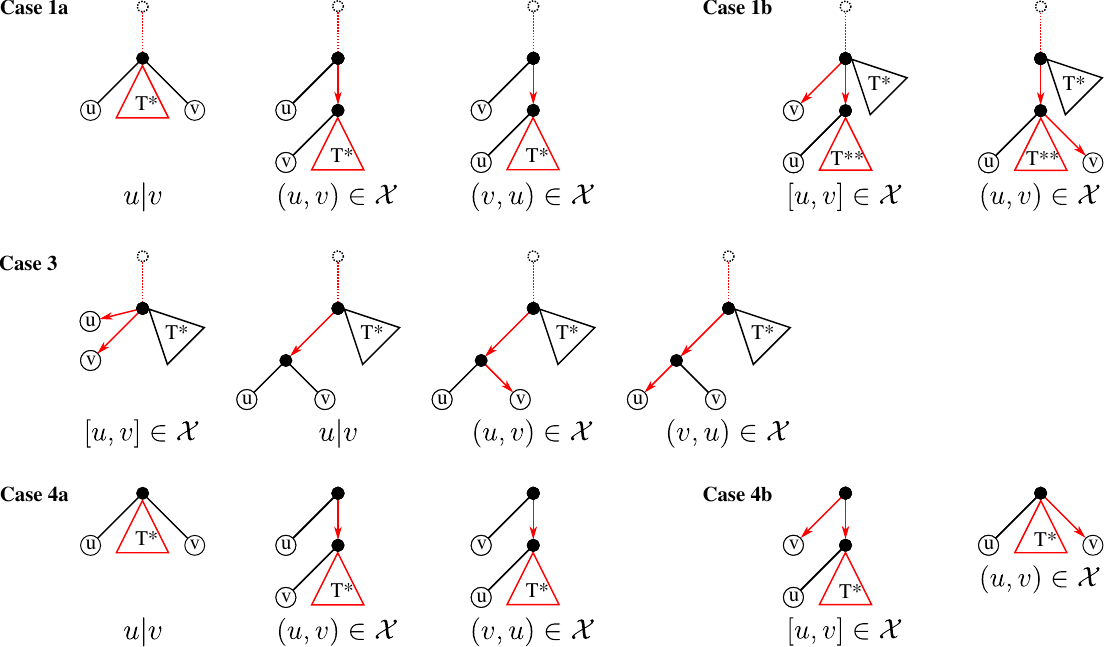}
  \end{center}
  \caption[]{Local modifications of the tree $(T,\lambda)$ necessary to
    explain all possible choices of $\X[u,v]$ for different combination of
    triangles types in $\Delta_{uv}$. Only the local environment around $u$
    and $v$ is shown since the rest of the tree remains unchanged in all
    cases. Dashed lines indicate possible additional subtrees that are
    connected to the local situation by means of a 1-edge. The subtrees
    $T^*$ and $T^{**}$ (with red triangles) may be attached to an inner
    vertex (via 1-edges); their internal structure is irrelevant for the
    arguments in the proof.}
  \label{cases}
\end{figure}

  \textbf{Case 2b.} If $\Delta_{uv}$ contains at least one triangle each of
  types $A$ and $D$ but no triangles of types $B$ and $C$ then exactly the
  same arguments as in Cases 2a and 1a apply.
  
  \textbf{Case 2c.} Suppose $\Delta_{uv}$ contains at least one triangle
  each of types $B$ and $C$ but no triangles of types $A$ or $D$. Then
  $\X[u,v]\in\{(u,v),[u,v]\}$. Let us assume that $(T,\lambda)$ explains
  $[u,v]\in\X$. As in Case 1b, the presence of triangles of type $B$
  implies that $v$ but not $u$ is a sink. Arguing as in Case 1b shows that
  $(T,\lambda)$ locally looks like Case 1b in Fig.\ \ref{cases}. The local
  transformation to the least-resolved tree $(T',\lambda')$ that explains
  $\X$ with $\X[u,v]=(u,v)$ can be performed as described in Case 1b,
  resulting in a tree $(T',\lambda')$ that locally looks like Fig.\
  \ref{cases}(1b).  The same arguments as in Case 1b can be applied to show
  that $(T',\lambda')$ explains $\X$ and that there is a uniquely defined
  reverse transformation that converts $(T',\lambda')$ into $(T,\lambda)$.

  \textbf{Case 2d.} If $\Delta_{uv}$ contains at least one triangle each of
  types $B$ and $D$ but no triangles of types $A$ and $C$ exactly the same
  arguments as in Cases 2c and 1b apply.

  \textbf{Case 3.} Suppose $\Delta_{uv}$ contains at least one triangle
  each of types $C$ and $D$ but no triangles of types $A$ and $B$. Then,
  $\X[u,v] \in \{[u,v],(u,v),(v,u), u|v\}$. Assume that $(T,\lambda)$
  explains $[u,v]\in\X$. This implies that both $u$ and $v$ are sinks of
  $\X$, i.e., $(\parent(u),u)$ and $(\parent(v),v)$ are both 1-edges. By
  symmetry, $(u,x)\in\X$ if and only if $(v,x)\in\X$ and $(x,u)\in\X$ if
  and only if $(x,v)\in\X$, respectively, holds for all $x\in
  L(T)\setminus\{u,v\}$.

  We continue to show that $u$ and $v$ must be siblings.  Assume for
  contradiction, they are not.  Lemma \ref{1-edge}(\ref{it:1}) implies that
  there are leaves $z,z'\in L(T)\setminus\{u,v\}$ such $(\parent(u),z)$ and
  $(\parent(v),z')$ are 0-edges.  Hence, we have at least one of the cases
  $(u,z)\notin\X$ but $(v,z)\in\X$ or $(v,z')\notin\X$ but $(u,z')\in\X$.
   If $\parent(u)$ and $\parent(v)$ are incomparable in $T$, even both cases
  are true. However, we obtain a contradiction to ``$(v,x)\in\X$ iff
  $(u,x)\in\X$''. Thus, $u$ and $v$ are siblings.  Since $(\parent(u),u)$
  and $(\parent(v),v)$ are both 1-edges, there must be a leaf $y\in
  L(T)\setminus\{u,v\}$ such that the edge $(\lca(u,v),y)$ is a 0-edge by
  Lemma \ref{1-edge}(\ref{it:1}).  We denote by $T^*$ the subtree rooted at
  $\lca(u,v)$ without the leaves $u$ and $v$.

  Given $(T,\lambda)$, a tree $(T',\lambda')$ that displays $\X$ with
  $\X[u,v]=u|v$ is obtained by inserting an inner 1-edge $(a,b)$ such that
  $a$ becomes the new root of $T^*$ and $b=\lca_{T'}(u,v)$. The outer edges
  $(b,u)$ and $(a,v)$ are 0-edges; all other edge labels are retained as in
  $(T,\lambda)$. The resulting tree locally looks as illustrated in Fig.\
  \ref{cases}(Case 3). Relabeling of edges in $(T',\lambda')$ such that
  $(b,v)$ becomes a 1-edge yields the tree $(T'',\lambda'')$ that
  explains $\X$ with $(u,v)\in\X$. Similarly, converting the edge $(b,u)$ of
  $(T',\lambda')$ into a 1-edge yields the tree $(T''',\lambda''')$ that
  explains $\X$ with $(v,u)\in\X$.  The trees $(T',\lambda')$,
  $(T'',\lambda'')$, and $(T''',\lambda''')$ explain $\X$ with the
  corresponding choice of $\X[u,v]$ and are least-resolved and thus unique.
  As in the previous cases, the reverse transformations are therefore also
  uniquely defined.
  
  \textbf{Case 4a.} Suppose that all triangles in $\Delta_{uv}$ are of the
  form $A$. Then $\X[u,v]\in\{u|v,(u,v),(v,u)\}$. Let us assume that
  $(T,\lambda)$ displays $u|v$. Then, $u$ and $v$ are both sources, hence
  $(\parent(u),u)$ and $(\parent(v),v)$ are both 0-edges.  Note that in
  contrast to Case 1a, there is no $x\in L(T)\setminus\{u,v\}$ with
  $(x,u)\in\X$ or $(x,v)\in\X$.  This implies that $u$ and $v$ are both
  incident to the root $\rho_T$ of $(T,\lambda)$ and among all edges
  incident to the root, $(\rho_T,u)$ and $(\rho_T,v)$ are the only
  0-edges. The tree $(T,\lambda)$ explaining $\X[u,v]=u|v$ is shown Fig.\
  \ref{cases}(Case 4a). Note, the tree structure is very similar to Case
  1a. Therefore, as in Case 1a, $(T,\lambda)$ can be locally modified to a
  least-resolved tree $(T',\lambda')$ explaining $\X$ with $\X[u,v]=(u,v)$
  by introducing the single 1-edge $(a,b)$ with $a=\parent(u),
  b=\parent(v)$. The vertex $b$ becomes the root of $T^*$, where $T^*$ is
  defined as in Case 1a (see Fig.\ \ref{cases}(Case 4a)). We set
  $\lambda'(a,u)=0$ and $\lambda'(b,v)=0$, while all other edge-labels are
  retained.

  Exchanging the roles of $u$ and $v$ in $(T',\lambda')$ defines a
  least-resolved tree $(T'',\lambda'')$ that explains $\X[u,v]=(v,u)$. As
  in the previous cases, one easily verifies that all resulting trees are
  least-resolved and explain $\X$ with the corresponding choice for
  $\X[u,v]$. Hence the reverse transformations are also uniquely defined.

  \textbf{Case 4b.}  Suppose that $\Delta_{uv}$ contains only triangles of
  the form $B$.  Hence, $\X[u,v]\in\{(u,v),[u,v]\}$. Let us first assume
  that the least-resolved tree $(T,\lambda)$ explains $[u,v]\in\X$. It
  immediately follows that $v$ is a sink and $u$ is not, hence
  $\lambda(\parent(v),v)=1$ and $\lambda(\parent(u),u)=0$. Moreover, we
  have $(v,u)\in\X$, thus $\parent(v)\succ\parent(u)$. Since for any $x\in
  L\setminus\{u,v\}$ it holds $(x,u)\notin\X$ and thus, $\parent(u)\succeq
  \lca(u,x)$, we have $\parent(v)=\rho_T$ and $\deg(\rho_T)=2$. Therefore,
  $(T,\lambda)$ locally looks as in Fig.\ \ref{cases}(Case 4b).  Note that
  the tree structure is very similar to Case 1b.  Hence, similar as in Case
  1b, $(T,\lambda)$ can be modified locally to a least-resolved tree
  $(T',\lambda')$ that displays $(u,v)\in\X$ by contraction of
  $(\parent(v),\parent(u))$ and keeping all other edge-labels (see Fig.\
  \ref{cases}(Case 4b, right)). By the same argumentation as before, the
  reverse transformation is also uniquely defined.

  \textbf{Case 4c.} Let us assume that all triangles in $\Delta_{uv}$ are
  of the form $C$, i.e., $\X[u,v]\in\{[u,v],u|v,(u,v),(v,u)\}$, and that
  $(T,\lambda)$ explains $[u,v]\in\X$. As in Case 3, both $u$ and $v$ are
  sinks of $\X$, i.e., $(\parent(u),u)$ and $(\parent(v),v)$ are both
  1-edges. Using the same symmetry argument as in Case 3, we conclude for
  any $x\in L(T)\setminus\{u,v\}$ that $(u,x)\in\X$ if and only if
  $(v,x)\in\X$, and $(x,u)\in\X$ if and only if $(x,v)\in\X$,
  respectively. Following the arguments laid out in Case 3, we conclude
  that $(T,\lambda)$ locally looks as Case 3 of Fig.\ \ref{cases}. Thus the
  local transformations described above can be applied analogously in order
  to obtain least-resolved trees that explain all possible $\X[u,v]$.

  \textbf{Case 4d.} If $\Delta_{uv}$ contains only triangles of the form
  $D$, then we can apply the same construction as in Case 4a and 3 in order
  to conclude that $\X$ can be explained for all possible $\X[u,v]$.

 \end{proof}

\section{Algorithmic Considerations}

\NEW{Summarizing our results, we present two different algorithms that
  are both able to recognize a Fitch relation and compute its unique
  least-resolved tree. The first algorithm checks all induced triangles for
  forbidden subgraphs and, once recognized a Fitch relation, uses the set
  of informative triple as an input for the algorithm \texttt{BUILD}. Then,
  it simply labels the edges of the resulting Aho tree in the correct way.
  This is a very intuitive way to check for Fitch relations and construct
  the least-resolved tree, which we will make precise first.  We shall see
  that it is possible, however, to achieve a much better performance by
  using that fact that Fitch graphs are di-cographs. One can alternatively
  check for Fitch relations using properties of di-cographs and build the
  least-resolved tree from the corresponding cotree. This can be achieved
  in linear time.}

We have seen in the previous sections that every valid relations $\X$ is
explained by a unique, least-resolved tree $(T_{\X},\lambda_{\X})$, which,
in turn, is identified by a set $\IT{\X}$ of informative triples due to
Lemma~\ref{lem:cl}. Lemma~\ref{g1} therefore implies
\begin{equation}
  T_{\X} = \Aho(\IT{\X})
\end{equation}
It remains to construct the labeling function $\lambda_{\X}$ on
$\Aho(\IT{\X})$.

\begin{algorithm}[tbp]
\caption{\texttt{Label the Aho tree}}
\label{alg:labelit}
\begin{algorithmic}[1]
  \Require $T_{\X}=\Aho(\IT{\X})$; 
  \Ensure Least-resolved edge-labeled tree $(T_{\X},\lambda_{\X})$ for $\X$; 
  \For{all $e=(u,v)\in E(T)$}
     \If{$v\notin L$} 
        $\lambda_{\X}(e)=1$; 
     \Else 
        \If {$(x,v)\in\X$ for all $x\in L\setminus\{v\}$} 
           $\lambda_{\X}(e)=1$; 
        \Else\ 
           $\lambda_{\X}(e)=0$;
        \EndIf
     \EndIf
  \EndFor
\end{algorithmic}
\end{algorithm}

\begin{lemma} 
  Given the topology $T_{\X}$ of the unique least-resolved tree explaining
  $\X$, Algorithm~\ref{alg:labelit} computes its correct unique edge
  labeling $\lambda_{\X}$ in $\mathcal{O}(\max\{|\X|,|L|\})$ time. 
	\label{lem:alg1}
\end{lemma}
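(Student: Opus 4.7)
The plan is to establish two things: correctness of the edge labels produced by Algorithm~\ref{alg:labelit}, and the stated runtime bound. Both lean on the structural description of least-resolved trees provided by Lemma~\ref{1-edge}.

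For correctness I will split on whether $v$ is inner or a leaf. The inner case is immediate: Lemma~\ref{1-edge}(\ref{it:1})(a) says every inner edge of the (unique) least-resolved tree $(T_\X,\lambda_\X)$ is a $1$-edge, so setting $\lambda_\X(e)=1$ for every $e=(u,v)$ with $v\notin L$ agrees with $\lambda_\X$. The substantive step is the outer-edge rule, which I will prove via the equivalence: for $e=(u,v)$ with $v\in L$, $\lambda_\X(u,v)=1$ if and only if $(x,v)\in\X$ for every $x\in L\setminus\{v\}$. The forward direction is easy: if $(u,v)$ is a $1$-edge, then for any leaf $x\neq v$ the path from $\lca_{T_\X}(x,v)$ to $v$ ends with $(u,v)$, so $(x,v)\in \X_{(T_\X,\lambda_\X)}=\X$, making $v$ a sink. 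For the converse, assume $\lambda_\X(u,v)=0$; I exhibit a leaf $x\neq v$ with $(x,v)\notin\X$. Since $T_\X$ is phylogenetic, $u$ has at least one child $w\neq v$. If $w\in L$, then $\lca(w,v)=u$ and the only edge from $u$ to $v$ is the $0$-edge, hence $(w,v)\notin\X$. Otherwise $w$ is inner, so $(u,w)$ is an inner edge and Lemma~\ref{1-edge}(\ref{it:1})(b) supplies an outer $0$-edge $(w,x)$ for some leaf $x$; again $\lca(x,v)=u$ and the path from $u$ to $v$ carries only the $0$-edge $(u,v)$, giving $(x,v)\notin\X$. Combining the inner and outer cases shows that Algorithm~\ref{alg:labelit} reproduces $\lambda_\X$ exactly, and uniqueness follows from Theorem~\ref{thm:uniqueness}.

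For the runtime, a naive reading of the algorithm suggests $O(|L|)$ work per outer edge, which is too slow. The trick is that the outer test ``$(x,v)\in\X$ for all $x\in L\setminus\{v\}$'' is equivalent to $v$ having in-degree $|L|-1$ in the digraph $\X$. Hence I preprocess: in a single pass over $\X$, compute in-degrees of all vertices in $O(|\X|)$ time, and mark every leaf whose in-degree equals $|L|-1$ in $O(|L|)$ time. Since $T_\X$ is a tree on $|L|$ leaves it has $O(|L|)$ edges, so the main loop processes each edge in $O(1)$ using the precomputed sink-flags. The total cost is $O(|\X|+|L|)=O(\max\{|\X|,|L|\})$, as claimed.

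The only genuine technical step is the converse direction of the outer-edge characterization, which requires the small case split on whether the sibling vertex $w$ of $v$ is a leaf or inner and then invokes Lemma~\ref{1-edge}(\ref{it:1})(b); all remaining steps are routine verification or elementary bookkeeping.
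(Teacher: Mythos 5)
Your proof is correct and follows essentially the same route as the paper: inner edges are forced to be 1-edges by Lemma~\ref{1-edge}, outer edges are 1-edges exactly when the leaf is a sink, and the runtime is obtained by precomputing in-degrees in $\X$. The only cosmetic difference is that your case split on whether the sibling $w$ is a leaf or inner (and the appeal to Lemma~\ref{1-edge}(\ref{it:1})(b)) is superfluous---any leaf $x$ below any sibling of $v$ already satisfies $\lca(x,v)=u$ and hence $(x,v)\notin\X$ when $(u,v)$ is a 0-edge, which is all the paper uses.
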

\begin{proof} 
  By Lemma~\ref{1-edge} all inner edges $e$ of $\Aho(\IT{\X})$ must be
  labeled $\lambda(e)=1$ since otherwise they could be contracted, and
  hence, the tree would not be least-resolved. Now consider an edge
  $e=(u,v)$ leading to a leaf $v\in L$. If $(x,v)\notin\X$ for some $x\in
  L\setminus\{v\}$ then $\lambda(e)=0$. Conversely, if $\lambda(e)=0$ then
  $(x,v)\notin\X$ for every leaf below the siblings of $u$. At least one
  such leaf $x$ exists in a phylogenetic tree. Hence an outer edge is
  labeled $\lambda(e)=1$ if and only if $(x,v)\in\X$ for all $x\in
  L\setminus\{v\}$.

  For the time-complexity note that the labeling
  Algorithm~\ref{alg:labelit} requires $\mathcal{O}(|E(T_{\X}|))$
  operations to label the inner edges. To label the $|L|$ outer edges
  $(u,v)$ we have to determine the degree of vertex $v$ in $\X$, that is,
  $\deg(v)=1$ implies that $(u,v)$ is an outer edge, which requires
  $\mathcal{O}(\max\{|\X|,|L|\})$ operations. Since $|E(T_{\X})|$ is bounded by
  $\mathcal{O}(|L|)$, the total running time of the labeling step is
  bounded by $\mathcal{O}(\max\{|\X|,|L|\})$.   
\end{proof}

\NEW{A tree explaining a given Fitch relation can be obtained by the
  following procedure: First, we check whether $\X$ is a Fitch
  relation. This} can be achieved in $\mathcal{O}(|L|^3)$ by checking
validity of the $\binom{L}{3}$ induced triangles.  If $\X\subset L\times L$
is a Fitch relation, then $\IT{\X}$ can be constructed within
$\mathcal{O}(|L|^3)$ time.  For a given the set of triples $R=\IT{\X}$, the
original approach to check whether $R$ is consistent (in which case
$\Aho(R)$ is returned) or not, has time complexity $\mathcal{O}(|R||L|)$
\cite{aho_inferring_1981}. However, various further practical
implementations have been described
\cite{henzinger_constructing_1999,Jansson:05,Holm:01,DF:16} \NEW{that
  improve the asymptotic performance. Constructing $\Aho(R)$ and using
  Algorithm \ref{alg:labelit} to obtain the edge labels, it is therefore
  possible to recognize a Fitch relation $\X$ and to compute its respective
  (least-resolved) tree $(T,\lambda)$ in $\mathcal{O}(|L|^4)$. }

\NEW{It is possible to improve the algorithms to recognize Fitch relations
  $\X$ and compute its least-resolved tree $(T,\lambda)$ in the following
  way:} Every di-cograph $G$ is explained by a unique cotree $(T',t)$
\cite{Moehring:84,McConnell:05}, that is, an ordered phylogenetic tree $T'$
with leaf set $V(G)$ and a vertex-labeling function $t:V^0(T')\to
\{0,1,\overrightarrow{1}\}$\NEW{, such that $t(u)\neq t(v)$ for all inner
  edges $(u,v)$ in $T'$,} defined by
\begin{align*}
  t(\lca(x,y)) = \begin{cases}
    0,    & \text{ if } (x,y)(y,x)\notin E(G) \\
    1,    & \text{ if } (x,y)(y,x)\in E(G)   \\
    \overrightarrow{1}, &\text{ else } .\,
 \end{cases}
\end{align*}
Since the vertices in the cotree $T'$ are ordered, the label
$\overrightarrow{1}$ on some $\lca(x,y)$ of two distinct leaves $x,y\in L$
means that there is an edge $(x,y) \in E(G)$, while $(y,x) \notin E(G)$,
whenever $x$ is placed to the left of $y$ in $T'$ \cite{Hellmuth:16a}.  
As discussed in Section \ref{sec:fitch-graph}, any di-cograph that does not
contain the invalid triangles $F_1$, $F_5$ and $F_8$ is a Fitch graph.
\begin{lemma}
  Let $G$ be a di-cograph and $(T',t)$ its corresponding cotree.  A
  di-cograph contains the triangle $F_1$, $F_5$ and $F_8$ as an induced
  subgraph if and only if there are two vertices $v,w\in V^0(T')$ with
  $v\succ_{T'} w$ such that either (i) $t(v)=0\neq t(w)$ or (ii)
  $t(v)=\overrightarrow{1}$, $t(w)= {1}$ and \NEW{$w$ is located in some
    subtree (rooted at a child of $v$) that is different from the subtree
    rooted at the right-most child of $v$.  }
  \label{lem:cotree-triangle}
\end{lemma}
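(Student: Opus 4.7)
The plan is to prove both directions by a careful case analysis on the cotree structure restricted to a three-element subset of leaves. For any three distinct leaves $a,b,c$, the induced subgraph is entirely determined by the labels $t(\lca(\cdot,\cdot))$. Let $v^*=\lca_{T'}(a,b,c)$. There are two possibilities: (A) all three leaves lie in three distinct child-subtrees of $v^*$, so all three pairwise $\lca$'s equal $v^*$; or (B) exactly two of them, say $a,b$, share a common ancestor $w\prec_{T'} v^*$, while $c$ lies in a different child-subtree of $v^*=:v$. Case (A) forces a symmetric induced subgraph whose shape depends only on $t(v^*)$: an empty triangle (if $t(v^*)=0$), a complete bidirectional triangle (if $t(v^*)=1$), or a transitive tournament (if $t(v^*)=\overrightarrow{1}$). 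By inspection of Figure~\ref{fig:triangles}, each of these is among $A_1$--$A_8$, so Case (A) never yields $F_1$, $F_5$, or $F_8$. Consequently, a forbidden triangle must arise from Case (B).

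For the necessity direction, assuming a forbidden triangle exists, I would enumerate the possible combinations of $(t(v),t(w))$ in Case (B). Since $t(u)\neq t(v)$ on every cotree inner edge, the possibilities for the first link from $v$ down toward $w$ are restricted, and I would propagate this down to $w$. Working out the induced triangle on $\{a,b,c\}$ from the label rules:
\begin{itemize}
\item $t(v)=0$, $t(w)\in\{1,\overrightarrow{1}\}$: $c$ is isolated from both $a$ and $b$, while $a$ and $b$ are connected (bidirectionally or directionally). These are exactly the triangles where an isolated vertex coexists with any edge--type between the others.
\item $t(v)=1$: $c$ is bidirectionally connected to both $a$ and $b$, irrespective of $t(w)\in\{0,\overrightarrow{1}\}$; these fall among the valid $A_i$.
\item $t(v)=\overrightarrow{1}$, $t(w)=0$: $c$ has parallel directed edges to/from $a,b$ with no $a$--$b$ edge; again valid.
\item $t(v)=\overrightarrow{1}$, $t(w)=1$: $a\leftrightarrow b$, and $c$ has parallel directed edges to/from both $a$ and $b$. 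The orientation depends on whether $w$'s subtree lies to the left or to the right of $c$'s in the ordering of $v$'s children: one orientation (source on the bidirectional pair) is valid, the other (sink on the bidirectional pair) is forbidden, as a direct Fitch-relation check (trying to place 1-edges on a rooted tree) shows inconsistency.
\end{itemize}
Matching these cases against $F_1$, $F_5$, $F_8$ yields precisely conditions (i) and (ii) of the lemma; the ``non-rightmost'' requirement in (ii) is exactly the statement that $w$'s subtree is strictly to the left of some sibling, so that $c$ can be chosen to the right of $a,b$ under $t(v)=\overrightarrow{1}$.

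For the sufficiency direction, I would dualize the construction. In case (i), pick any two leaves $a,b$ in distinct child-subtrees of $w$ (so $\lca(a,b)=w$), and pick any leaf $c$ in a child-subtree of $v$ disjoint from the subtree containing $w$ (so $\lca(a,c)=\lca(b,c)=v$); verify directly from the cotree labels that $\{a,b,c\}$ induces one of $F_1$, $F_5$, $F_8$ depending on whether $t(w)=1$ or $t(w)=\overrightarrow{1}$. In case (ii), additionally choose $c$ in a child-subtree of $v$ that lies strictly to the right of $w$'s subtree (possible precisely because $w$ is not in the rightmost subtree); then $t(v)=\overrightarrow{1}$ forces $a\to c$ and $b\to c$ with no reverse arcs, and $t(w)=1$ gives $a\leftrightarrow b$, producing the forbidden triangle.

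The main obstacle, and the real content of the lemma, is the ordering-dependent subtlety in Case (B) with $t(v)=\overrightarrow{1}$ and $t(w)=1$. Exactly one of the two possible orientations of $c$ relative to $w$'s subtree yields a forbidden Fitch triangle; this asymmetry is the reason for the somewhat technical ``not rightmost child'' clause. The remaining cases are a mechanical enumeration, but the bookkeeping must be done carefully since $v$ need not be the parent of $w$: choosing $a,b$ below $w$ and $c$ in a child-subtree of $v$ disjoint from $w$'s is nevertheless always possible because $T'$ is phylogenetic, and the intermediate labels along the path from $w$ to $v$ do not affect $\lca(a,c)$ or $\lca(b,c)$.
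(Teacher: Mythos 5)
Your proof is correct and follows essentially the same approach as the paper: both arguments rest on the fact that the induced subgraph on three leaves is determined by the labels of the pairwise least common ancestors, and both match $F_1,F_5$ to condition (i) and $F_8$ (via the left--right ordering under $\overrightarrow{1}$) to condition (ii). Your write-up is merely more explicit than the paper's, which compresses the exhaustive case enumeration and the leaf-selection argument for sufficiency into two ``equivalently'' assertions.
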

\begin{proof}
  Consider first the triangles $F_1$ and $F_5$ with vertices $x,y,z$ and
  edge set $E(F_1)=\{(x,y)\}$ and $E(F_5)=\{(x,y), (y,x)\}$.  Equivalently,
  we have $t(\lca_{T'}(x,y))\in \{1,\overrightarrow{1}\}$,
  $t(\lca_{T'}(x,y,z)) =0$ and $v=\lca_{T'}(x,y,z) \succ_{T'}
  w=\lca_{T'}(x,y)$.

  Now, let $F_8$ have vertices $x,y,z$ and edge set $E(F_8) = \{(x,y),
  (y,x), (x,z), (y,z)\}$.  Equivalently, we have $t(\lca_{T'}(x,y)) = 1$,
  $t(\lca_{T'}(x,y,z)) =\overrightarrow{1}$ and $v=\lca_{T'}(x,y,z) \succ_T
  w=\lca_{T'}(x,y)$. \NEW{In particular, $x$ and $y$ must be placed left
    from $z$ in ${T'}$ and therefore, $w$ must be located in some subtree
    different from the subtree rooted at the right-most child of
    $v=\lca_T(x,y,z)$.}   
\end{proof}

\begin{corollary}
  Let $\X$ be a Fitch graph and $(T',t)$ its corresponding cotree.  If $\X$
  contains an edge, then it is weakly connected\NEW{, i.e., the underlying
    undirected graph obtained from $\X$ by ignoring the direction of the
    edges is connected. Moreover, any vertex $x\prec v$ for which $t(v)=0$
    must be a leaf of $T'$.}
\label{cor:0-leaf}
\end{corollary}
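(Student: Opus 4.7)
The plan is to derive both claims from Lemma \ref{lem:cotree-triangle}(i), which asserts that a forbidden induced subgraph $F_1$ or $F_5$ arises precisely when the cotree contains vertices $v \succ_{T'} w$ with $t(v) = 0 \neq t(w)$.

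I would first tackle the second (\emph{leaf}) assertion by contradiction. Suppose $t(v) = 0$ and some strict descendant $x \prec_{T'} v$ is an inner vertex. Let $y$ be the child of $v$ on the unique path from $v$ to $x$; since $x$ is inner and $x \preceq_{T'} y$, the vertex $y$ must itself be an inner vertex. Hence $(v,y)$ is an inner edge of the cotree, so the defining condition $t(u) \neq t(v)$ on inner edges forces $t(y) \neq 0$. Lemma \ref{lem:cotree-triangle}(i) applied to the pair $v \succ_{T'} y$ now produces an induced $F_1$ or $F_5$ in $\X$, contradicting that $\X$ is a Fitch graph. Therefore every strict descendant of $v$ must be a leaf.

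For the weak-connectivity assertion, I would first use the leaf property just established to rule out $t(\rho_{T'}) = 0$ whenever $\X$ contains an edge. Indeed, if $t(\rho_{T'}) = 0$, then every strict descendant of $\rho_{T'}$ is a leaf, so $T'$ is a star-tree rooted at $\rho_{T'}$. But then $\lca_{T'}(x,y) = \rho_{T'}$ for every pair of distinct leaves, and by the definition of the cotree this forces $\X$ to be edgeless, contradicting the assumption. Hence $t(\rho_{T'}) \in \{1, \overrightarrow{1}\}$.

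Since $T'$ is phylogenetic, $\rho_{T'}$ has at least two children, each rooting a subtree containing at least one leaf. Whenever $p$ and $q$ are leaves in two different child-subtrees of $\rho_{T'}$, the vertex $\lca_{T'}(p,q) = \rho_{T'}$ has label in $\{1, \overrightarrow{1}\}$, so at least one of $(p,q)$ and $(q,p)$ lies in $\X$ by the definition of the cotree labeling; thus $p$ and $q$ are adjacent in the underlying undirected graph. For two leaves $p,q$ lying in the same child-subtree of $\rho_{T'}$, I pick any leaf $r$ from a different child-subtree (which exists by the phylogenetic property) and obtain a path $p\text{--}r\text{--}q$ in the underlying undirected graph. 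This shows that $\X$ is weakly connected. The only subtlety is the existence of the witness $r$, which is exactly what the phylogenetic assumption on $T'$ guarantees.
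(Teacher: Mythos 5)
Your proof is correct and follows essentially the same route as the paper: both claims rest on Lemma~\ref{lem:cotree-triangle}(i) applied to a $0$-labeled vertex with an inner-vertex descendant. The only difference is organizational — you establish the leaf property first and use it to rule out $t(\rho_{T'})=0$ via a star-tree argument, and you spell out the undirected paths explicitly, whereas the paper argues connectivity first via the standard fact that a disconnected di-cograph has a $0$-labeled cotree root; both versions are sound.
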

\begin{proof}
  If a Fitch graph $\X$ contains an edge, then its cotree contains an inner
  vertex labeled $1$ or $\overrightarrow{1}$.  If $\X$ is disconnected,
  then the root of the cotree must be labeled $0$ and Lemma
  \ref{lem:cotree-triangle} implies that $\X$ is not a Fitch graph.  Thus, the
  root must be labeled either $1$ or $\overrightarrow{1}$, which implies
  that $\X$ is weakly connected.  

  \NEW{Now assume that $(T',t)$ contains a vertex $v$ with $t(v)=0$.  Let
    $x\prec v$ with $(v,x)\in E(T')$ and assume for contradiction that $x$
    is an inner vertex.  By the definition of cotrees, $t(v)=0\neq t(x)$.
    Lemma \ref{lem:cotree-triangle} and Theorem \ref{thm:main} imply that
    $\X$ is not a Fitch graph; a contradiction.}   
\end{proof}

Verifying whether a graph $G$ is a di-cograph or not can be achieved
\NEW{in} $\mathcal{O}(|V(G)|+|E(G)|)$ time, see \cite{McConnell:05,
  Hellmuth:16a} for further details.  To verify that a given di-cograph $G$
does not contain $F_1$, $F_5$ and $F_8$ as an induced subgraph, we apply
\NEW{the classical} Breadth-first search (BFS)
\cite{cormen2009introduction} on its cotree $(T',t)$ starting with the root
and check whether there are invalid combinations of vertex labels in
$(T',t)$ according to Lemma \ref{lem:cotree-triangle}.  Note, $L(T')=V(G)$
and $|V^0(T')|\leq |L(T')|-1$.  Thus, the BFS-method runs in
$\mathcal{O}(|V(T')|) = \mathcal{O}(|V(G)|)$ time.  Therefore, recognition
of Fitch graphs, or equivalently, Fitch relations can be achieved within
$\mathcal{O}(|V(G)|+|E(G)|)$ time.

\NEW{We show now how to obtain a tree $(T,\lambda)$ that explains a Fitch
  relation $\X$ from its cotree representation $(T',t)$. To this end we
  need to translate the (ordered) cotree with vertex labels ``$0$'',
  ``$1$'' and ``$\overrightarrow{1}$'' to an unordered tree with edge
  labels ``$1$'' and ``$0$'', summarized next and called
  \texttt{cotree2fitchtree}:
  \begin{description}
  \item[For all $x\in V^0(T')$, if ] \ \smallskip
  \item[]$t(x)=1$ (resp.\ $0$), then
    set for each child $y$ of $x$ the label $\lambda(x,y)=1$ 
    (resp.\ $0$), and else,
  \item[]$t(x)=\overrightarrow{1}$, then we can assume w.l.o.g.\ that the
    children of $x$ are ordered $x_1,\dots,x_k$, $k\geq 2$ from left to
    right. Now, replace the subtree of $T'$ with vertices $x$ and
    $x_1,\dots,x_k$ by the caterpillar $C(x_1,\dots,x_k) \coloneqq
    (x_1(x_2(\dots (x_{k-1},x_k)\ldots)$ \emph{(in Newick notation)} that is
    rooted at $x$.  Set the label $\lambda$ of all inner edges of
    $C(x_1,\dots,x_k)$ and the outer-edge incident to $x_k$ to ``$1$'' and
    the labels of all other (outer) edges of $C(x_1,\dots,x_k)$ to
    ``$0$''. Note that outer edges of $C(x_1,\dots,x_k)$ may be inner edges
    in $(T,\lambda)$. \smallskip
  \item[]Finally, remove all vertex labels and ignore the ordering of
    the vertices to obtain the tree $(T,\lambda)$.
  \end{description}

  For an example of \texttt{cotree2fitchtree} see Figure
  \ref{fig:cotree2fitchtree}.
	
  \begin{figure}[tbp]
    \begin{center}
      \includegraphics[width=\textwidth]{./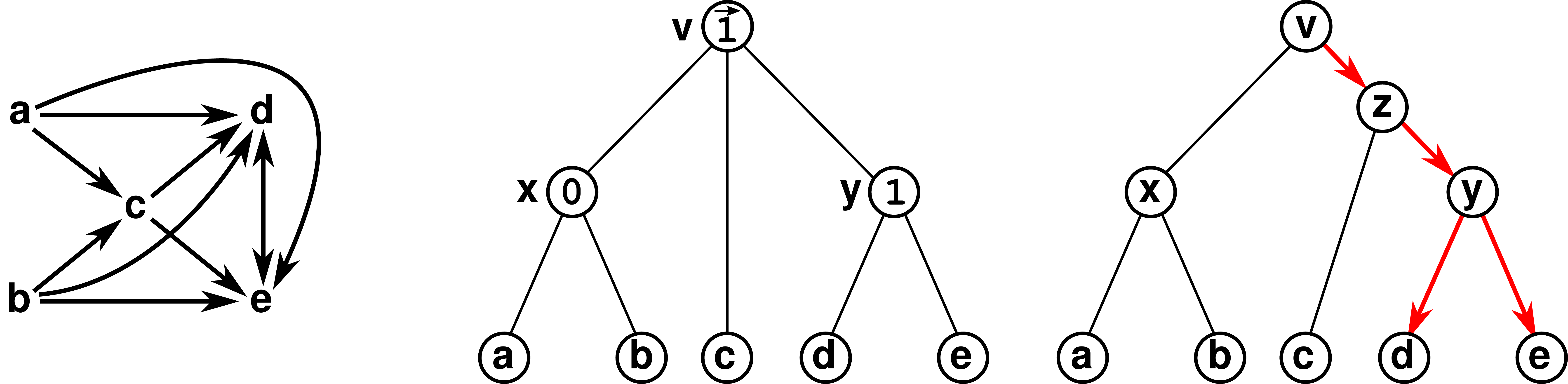}
    \end{center}
    \caption[]{Application of \texttt{cotree2fitchtree}: A Fitch-relation
      $\X$ (left), its cotree $(T',t)$ (middle) and a tree $(T,\lambda)$
      that explains $\X$ (right) is shown. The tree $(T,\lambda)$ is
      obtained from $(T',t)$ by replacing the subtree with vertices $v,x,y$
      and $c$ by the caterpillar $(x(c,y))$ rooted at $v$ and adding the
      edge-labels as described in the procedure \texttt{cotree2fitchtree}.
      By Lemma \ref{lem:cotree-triangle}, $t(x)=0$ for all inner vertices
      $x$ in the subtrees left from the subtree rooted a the right-most
      child $y$ of $v$.  Note, $(T,\lambda)$ is not least-resolved w.r.t.\
      $\X$. Nevertheless, Theorem \ref{thm:uniqueness} implies that
      $(T,\lambda)$ is displayed by the least-resolved tree for $\X$. Here,
      the least-resolved tree can be obtained from $(T,\lambda)$ by
      contracting the edges $(v,x)$ and $(z,y)$. }
  \label{fig:cotree2fitchtree}
\end{figure}

\begin{lemma}
  The procedure \texttt{cotree2fitchtree} transforms the cotree $(T',t)$ of
  a Fitch-relation $\X$ into a tree $(T,\lambda)$ that explains $\X$ in
  $\mathcal{O}(|V(T')|)$ time.
\end{lemma}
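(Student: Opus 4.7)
My proof plan is to establish correctness by induction on the number of inner vertices of $T'$ and handle the running time by a separate counting argument. The base case (a single leaf, $\X=\emptyset$) is immediate. For the inductive step, let $r$ be the root of $T'$ with ordered children $x_1,\dots,x_k$. By the inductive hypothesis, applying the procedure to each subtree cotree $(T'(x_i),t)$ produces an edge-labeled tree $T_i$ that explains the subrelation $\X|_{L(T'(x_i))}$. The final output $(T,\lambda)$ glues these subtrees onto a local structure at $r$ that depends on $t(r)$.

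The cases $t(r)=0$ and $t(r)=1$ are immediate. When $t(r)=0$, Corollary~\ref{cor:0-leaf} guarantees that each child of $r$ is a leaf attached via a $0$-edge, so no pair of leaves is related in $\X_{(T,\lambda)}$; this matches $t(r)=0$. When $t(r)=1$, each cross-subtree pair $a,b$ has $\lca_T(a,b)=r$ and both outgoing edges from $r$ toward the subtrees containing $a$ and $b$ are $1$-edges, giving $[a,b]\in \X_{(T,\lambda)}$.

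The real obstacle is $t(r)=\overrightarrow{1}$, where $r$ and its children are replaced by the caterpillar $C(x_1,\dots,x_k)$ whose inner edges and outer edge to $x_k$ carry label $1$, while the outer edges to $x_1,\dots,x_{k-1}$ carry label $0$. For $a\in L(T'(x_i))$ and $b\in L(T'(x_j))$ with $i<j$, the cotree convention forces $(a,b)\in \X$ and $(b,a)\notin \X$. Let $u_i$ denote the caterpillar vertex whose left child is $x_i$ (so $u_1=r$); then $\lca_T(a,b)=u_i$. The ``$(a,b)\in\X_{(T,\lambda)}$'' direction is easy because the path from $u_i$ to $b$ necessarily passes through an inner caterpillar $1$-edge (or the outer $1$-edge to $x_k$ when $j=k$). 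The subtle direction is $(b,a)\notin \X_{(T,\lambda)}$: the first edge $(u_i,x_i)$ is $0$, so I must additionally rule out any $1$-edge on the remaining path from $x_i$ to $a$ inside the transformed subtree $T_i$.

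This is exactly where the Fitch property enters. Since $i<k$, the vertex $x_i$ lies in a non-rightmost subtree of $r$, and hence Lemma~\ref{lem:cotree-triangle} forbids any inner descendant $w$ of $x_i$ from satisfying $t(w)=1$. Together with the cotree invariant $t(x_i)\neq t(r)=\overrightarrow{1}$, this forces $x_i$ to be either a leaf or to satisfy $t(x_i)=0$. In the latter case Corollary~\ref{cor:0-leaf} implies all children of $x_i$ are leaves attached by $0$-edges, so $T_i$ consists of $x_i$ together with leaf children only. In either case every path from $x_i$ to a leaf of $T_i$ is trivial or a single $0$-edge and therefore contains no $1$-edge, finishing the inductive step; pairs lying in the same subtree $T_i$ are covered by the inductive hypothesis directly. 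For the running time, each inner vertex $x$ of $T'$ is processed in $O(\deg_{T'}(x))$ time (labeling child-edges when $t(x)\in\{0,1\}$, or creating the caterpillar's additional inner vertices and labeling its edges when $t(x)=\overrightarrow{1}$), so the total cost is $\sum_{x\in V^0(T')} O(\deg_{T'}(x)) = O(|E(T')|) = O(|V(T')|)$.
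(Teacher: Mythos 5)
Your proof is correct and, at its core, takes the same route as the paper: the critical step in the $t(r)=\overrightarrow{1}$ case — showing that a non-rightmost child $x_i$ is either a leaf or satisfies $t(x_i)=0$ via Lemma~\ref{lem:cotree-triangle} and then invoking Corollary~\ref{cor:0-leaf} to kill all $1$-edges on the path to $a$ — is exactly the paper's argument, as is the degree-sum bound for the running time. The only difference is organizational: you wrap the verification in an induction on inner vertices, whereas the paper verifies each ordered pair directly by a case analysis on $t(\lca_{T'}(x,y))$; both are sound.
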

\begin{proof}
  Let $(T',t)$ be the cotree of the Fitch-relation $\X$ and $(T,\lambda)$
  the tree resulting from \texttt{cotree2fitchtree}.  Since all inner
  vertices of $(T',t)$ are labeled, each edge of $(T,\lambda)$ receives a
  label ``$0$'' or ``$1$'' by construction.  It needs to be verified that
  $(T,\lambda)$ explains $\X$.
		
  Assume that $(x,y),(y,x)\in \X$. Hence, $t(\lca_{T'}(x,y)) =1$.  By
  construction, the edges incident to the children of $v=\lca_{T'}(x,y)$
  are labeled ``$1$''. Hence, both paths in $(T,\lambda)$ from
  $\lca_{T}(x,y)=v$ to $x$ and to $y$ contain 1-edges.  Thus, $(T,\lambda)$
  explains all symmetric pairs in $\X$.

  Assume that $(x,y),(y,x)\notin \X$ and let $z=\lca_{T'}(x,y)$.  Hence,
  $t(z) =0$. Cor.\ \ref{cor:0-leaf} implies that $(z,x)$ and $(z,y)$ are
  outer edges in $T'$ that are, by construction, labeled ``$0$'' in
  $(T,\lambda)$.  As a consequence, the path from $x$ to $y$ in
  $(T,\lambda)$ contains only 0-edges, which implies that $(T,\lambda)$
  also explains that all pairs $(x,y),(y,x)$ that are not contained in
  $\X$.
	
  Assume $(x,y)\in \X$ and $(y,x)\notin \X$.  Hence, $t(\lca_{T'}(x,y))
  =\overrightarrow{1}$ and $x$ is left from $y$ in $T'$.  Let $v_i$ and
  $v_j$ be children of $\lca_{T'}(x,y)$ with $v_i\succeq x$ and $v_j\succeq
  y$.  Since $x$ is left from $y$, also $v_i$ is left from $v_j$ in $T'$.
  Note, $v_i$ and $v_j$ are now part of the inserted caterpillar
  $C(\child(\lca_{T'}(x,y)))$ in $(T,\lambda)$.  Therefore, $\lca_T(x,y)$
  must be an inner vertex of this caterpillar.  By construction, the path
  from $\lca_T(x,y)$ to $v_j\succeq y$ contains a 1-edge and thus $(x,y)\in
  \X$.  It remains to show that the path from $\lca_T(x,y)$ to $x$ contains
  only 0-edges so that $(y,x)\notin \X$.  Note that the vertex $v_i$ is a
  child of $\lca_T(x,y)$ in $T$ and the edge $(\lca_T(x,y),v_i)$ is labeled
  ``$0$''.  Thus, if $v_i=x$ we are done. Assume that $v_i\neq x$ and
  hence, that $v_i$ is an inner vertex of $T'$.  By the definition of
  cotrees, we have $t(\lca_{T'}(x,y))=\overrightarrow{1}\neq t(v_i)$.
  Since $v_i$ is left from $v_j$ in $T'$ we can apply Lemma
  \ref{lem:cotree-triangle} and conclude that $t(v_i)\neq 1$. Hence, there
  is only one possibility left, namely $t(v_i)=0$.  Cor.\ \ref{cor:0-leaf}
  implies that $(v_i,x)$ must must be an outer edge in $(T',t)$ that -- by
  construction -- is labeled ``$0$'' in $(T,\lambda)$. Hence, the path from
  $\lca_{T}(x,y)$ to $x$ contains only 0-edges and therefore, $(y,x)\notin
  \X$.

  For the running time, observe that the edge-label in each step of
  \texttt{cotree2fitchtree} for vertices $v$ with $t(v)\in \{0,1\}$ can be
  computed in $\mathcal{O}(\deg_{T'}(v))$ time.  Moreover, if
  $t(v)=\overrightarrow{1}$ for some vertex $v$ in $(T',t)$, we have to
  replace the subtree induced by $v$ and its children $v_1,\dots,v_k$
  (ordered from left to right) in $(T',t)$, by the edge-labeled caterpillar
  $C(v_1,\dots,v_k)$.  This task can also be performed in
  $\mathcal{O}(\deg_{T'}(v))$ time.  Since each step in
  \texttt{cotree2fitchtree} can be done in $\mathcal{O}(\deg_{T'}(v))$ time
  and $\sum_{v\in V^0(T')} \deg_{T'}(v)) \leq 2|E(T')| < 2|V(T')|$, this
  implies a total time requirement of $\mathcal{O}(|V(T')|)$.   
\end{proof}

Let $(T,\lambda)$ be the tree that explains $\X$ as constructed with
\texttt{cotree2fitchtree} from the respective cotree $(T',t)$.  Theorem
\ref{thm:uniqueness} implies that $(T,\lambda)$ displays the least-resolved
tree for $\X$. Thus, we can utilize Lemma \ref{1-edge} and contract all
irrelevant edges and all inner 0-edges in $(T,\lambda)$ in order to obtain
the least-resolved tree for $\X$.  The latter can be done in
$\mathcal{O}(|V(T)|)$ time.  Taking the latter results together with the
observation that $|V(T)|\geq |V(T')|$, we obtain the following
\begin{theorem}
  Verifying whether an irreflexive relation $\X\subseteq L\times L$ is a
  Fitch relation or not, can be a achieved in $\mathcal{O}(|L|+|\X|)$
  time. Its unique least-resolved edge-labeled tree $(T_{\X},\lambda_{\X})$
  can be computed in $\mathcal{O}(|V(T_{\X})|) = \mathcal{O}(|L|)$ time,
  given the cotree of $\X$.
\end{theorem}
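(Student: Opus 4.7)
The plan is to assemble the theorem from the ingredients already developed in this section, and the main task is to verify that every step can be performed within the claimed linear bound.

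First, I would recognize whether $\X$ (viewed as a digraph with vertex set $L$) is a di-cograph using the algorithm of \cite{McConnell:05,Hellmuth:16a}, which simultaneously produces the cotree $(T',t)$ when the answer is positive; this costs $\mathcal{O}(|L|+|\X|)$ time. If $\X$ is not a di-cograph, Lemma~\ref{lem:Fitch} immediately yields that it is not a Fitch relation. Otherwise, I would invoke Lemma~\ref{lem:cotree-triangle}: $\X$ is a Fitch relation if and only if $(T',t)$ contains no vertex pair $v \succ_{T'} w$ with either $t(v)=0\neq t(w)$ or with $t(v)=\overrightarrow{1}$, $t(w)=1$ and $w$ lying in a subtree below $v$ other than the right-most one. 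A single BFS (or DFS) traversal of $(T',t)$ suffices to screen for these forbidden ancestor-descendant label combinations; since $|V(T')| \le 2|L|-1$, this check runs in $\mathcal{O}(|L|)$ time. Combining the di-cograph recognition cost with this traversal gives the overall $\mathcal{O}(|L|+|\X|)$ bound for recognition.

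Once the cotree is in hand and $\X$ is verified to be a Fitch relation, I would build the least-resolved tree in two phases. Phase one applies \texttt{cotree2fitchtree} to $(T',t)$; the preceding lemma shows this produces an edge-labeled phylogenetic tree $(T,\lambda)$ explaining $\X$ in $\mathcal{O}(|V(T')|)$ time. Phase two converts $(T,\lambda)$ into the least-resolved tree $(T_{\X},\lambda_{\X})$ by repeatedly contracting every inner $0$-edge and every irrelevant $1$-edge, justified by Lemma~\ref{lem:contract1} and Lemma~\ref{1-edge}. Irrelevance of an inner edge $(u,v)$ can be tested in total linear time via a single post-order sweep that, for every inner vertex, records whether at least one outer $0$-edge is incident to it (cf.\ Lemma~\ref{lem:relev1}); contractions themselves are then performed in amortized constant time per edge. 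Thus this phase costs $\mathcal{O}(|V(T)|)$.

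The key bookkeeping observation needed to package the complexity into the stated form is that \texttt{cotree2fitchtree} expands each $\overrightarrow{1}$-vertex of degree $d$ into a caterpillar with $d-1$ inner vertices, so $|V(T)| = \mathcal{O}(|V(T')|)$, and $|V(T')| = \mathcal{O}(|L|)$. Since contractions only reduce the number of vertices, we have $|V(T_{\X})| \le |V(T)| = \mathcal{O}(|L|)$, and the whole construction from the cotree to the least-resolved tree runs in $\mathcal{O}(|V(T_{\X})|) = \mathcal{O}(|L|)$ time, as claimed. The main obstacle (really the only one beyond invoking prior results) is verifying that the contraction phase can be executed in linear time without repeated scans; the single post-order computation of ``all-descendant-paths-contain-a-1-edge'' flags, together with a direct lookup of outer-$0$-edge availability, resolves this cleanly.
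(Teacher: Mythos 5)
Your proposal follows essentially the same route as the paper: linear-time di-cograph recognition yielding the cotree, a BFS/DFS screen for the forbidden label combinations of Lemma~\ref{lem:cotree-triangle}, conversion via \texttt{cotree2fitchtree}, and a final contraction of inner $0$-edges and irrelevant edges justified by Lemma~\ref{1-edge}. Your added detail on implementing the contraction phase in linear time via a post-order sweep for the irrelevance flags is a correct refinement of a step the paper merely asserts.
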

}

The fact the Fitch graphs form a heritable family (cf.\ Lemma~\ref{lem:Fitch})
has far-reaching consequences for computational problems such as the
following:

\NEW{
\begin{problem}[Fitch Graph Modification] \ \\ 
\label{prob:modi}
\begin{tabular}{ll}
  \emph{Given:}    & a graph $G=(L,F)$ and integers $i,j,k$. \\
  \emph{Question:} & Are there subsets $L'\subseteq L$, $F'\subseteq F$
  and $F''\subseteq (L\times L)\setminus F$ \\
		   & with $|L'|\leq i$, $|F'|\leq j$ and $|F''|\leq k$ 
                     such that \\
		   & $G-L'-F'+F''$ is a Fitch graph?
\end{tabular}
\end{problem} 
}

\NEW{A very general result on graph editing on heritary graph classes
  \cite{GRRW:10,CAI:1996} immediately implies}
\begin{corollary}
  Fitch Graph Modification is NP-complete, but fixed-parameter tractable
  and can be solved in \NEW{$\mathcal{O}(3^{i+2j+2k}|L|^4)$} time.
\end{corollary}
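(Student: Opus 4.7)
The plan is to invoke two general results about graph modification problems on heritable graph classes characterized by a finite family of forbidden induced subgraphs: the NP-hardness framework of~\cite{GRRW:10} and the fixed-parameter tractability theorem of Cai~\cite{CAI:1996}. Both require exactly two conditions to apply, namely that (i) the target graph class is heritable and (ii) it admits a finite forbidden-induced-subgraph characterization. By Lemma~\ref{lem:Fitch}, Fitch graphs form a heritable family of digraphs; and by Theorem~\ref{thm:main} together with Figure~\ref{fig:triangles}, they are precisely those digraphs that avoid each of the eight forbidden triangles $F_1,\ldots,F_8$ (on three vertices each) as an induced subgraph. Since both preconditions are thereby verified, the general theorems immediately yield NP-completeness and fixed-parameter tractability of Fitch Graph Modification.

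For the concrete running-time bound, a standard bounded-search-tree algorithm suffices. Given an instance $(G,i,j,k)$, one enumerates the $\binom{|L|}{3} = O(|L|^3)$ triples of vertices to detect any induced forbidden triangle. If none exists, then $G$ is already a Fitch graph and the algorithm reports success. Otherwise, fix such a triangle on three vertices $\{a,b,c\}$ and branch recursively on each modification that can destroy it: delete one of the three vertices (decrementing~$i$), delete one of the at most six directed edges currently present on $\{a,b,c\}$ (decrementing~$j$), or add one of the at most six currently absent directed edges on $\{a,b,c\}$ (decrementing~$k$). Each recursive call operates on an instance of strictly smaller total budget, so recursion terminates after depth $\le i+j+k$.

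To count leaves, classify the branches by which budget they consume: at most $3$ branches per node decrement~$i$, and at most $6 \le 3^2$ branches per node decrement~$j$ or $k$, respectively. A straightforward counting argument then bounds the number of root-to-leaf paths by $3^i \cdot 9^j \cdot 9^k = 3^{i+2j+2k}$. Combined with the $O(|L|^3)$ cost of detecting a forbidden triangle at every recursion node (plus $O(|L|)$ bookkeeping for maintaining the graph along each path), the total running time is $O(3^{i+2j+2k}|L|^4)$, as claimed.

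The main obstacle here is essentially the verification of the preconditions of the general frameworks, both of which have already been established earlier in the paper; the branching analysis itself is routine, being essentially an instance of Cai's generic argument specialized to forbidden subgraphs on three vertices.
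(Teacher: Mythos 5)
Your proposal takes essentially the same route as the paper, which derives the corollary in a single step by invoking the general NP-hardness and fixed-parameter tractability results for modification problems on heritable graph classes with finite forbidden-induced-subgraph characterizations, the two preconditions being supplied exactly as you note by Lemma~\ref{lem:Fitch} and the characterization via the forbidden triangles $F_1,\dots,F_8$ (Theorem~\ref{thm:main}). Your explicit bounded-search-tree analysis is just Cai's generic argument specialized to forbidden subgraphs on $N=3$ vertices (giving the $N^{i+2j+2k}|L|^{N+1}$ bound), which the paper leaves entirely to the cited references; it is correct.
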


\section{Concluding Remarks}

The relationships and mutual constraints of gene trees and species trees
are by no means completely understood. Here we have attempted to identify
the phylogenetic information that is contained in horizontal transfer
events. An alternative approach to understand such relations has been
explored by \cite{Hellmuth:16a}. The relations considered there, however,
are completely defined by the labeling of the inner gene tree vertices as
speciation, duplication or HGT.

A more commonly used definition of xenology was proposed by Walter Fitch
\cite{Fitch:00}.  We formalized Fitch's concept of xenology in the form of
a not necessarily symmetric binary relation $\mathcal{X}$ so that
$(x,y)\in\mathcal{X}$ if and only the lineage from $\lca(x,y)$ to $y$ was
horizontally transferred at least once. Our main result is a complete
characterization of such relations in terms of forbidden induced subgraphs
and a complete characterization of the minimally resolved trees explaining
such relations. These Fitch trees represent the complete information on the
gene tree that is ``recorded'' by the horizontal transfer events
alone. Polynomial-time algorithms have been devised to compute Fitch trees
from Fitch relations.

The practical usefulness of the Fitch relation and its trees eventually
will depend on how easy or difficult it will turn out to estimate the Fitch
relation from data. Although no convenient tools are available to our
knowledge to identify directed xenology relationships without first
reconstructing gene and species trees, this seems to be not at all a
hopeless task, since genes that are imported by HGT from an ancestor of
species $A$ into an ancestor of species $B$ are expected to be more closely
related than expected from the bulk of the genome
\cite{Novichkov:04,Ravenhall:15}. Inference from real-life data will never
be noise free. It is therefore encouraging that the corresponding editing
problem is at least FPT even though it is NP complete as so many other
computational problems in phylogenetics.

An interesting facet of the results is that the Fitch graphs are a proper
subset of the di-cographs that naturally appear in a formalization of
xenology that focuses on the vertices of the gene tree
\cite{Hellmuth:16a}. While our results strongly suggest that there should
be a close relationship between these two models, it remains an open
question what exactly this connection and its biological interpretation
might be. A related question concerns the symmetrized version of the
xenology relation: what can be said about the relation
$\mathcal{X}^{\textrm{sym}}$ with $\{x,y\}\in\mathcal{X}^{\textrm{sym}}$
whenever $(x,y)\in\mathcal{X}$, that is,
$\{x,y\}\in\mathcal{X}^{\textrm{sym}}$ iff there is a HGT event along the
unique path from $x$ to $y$ in the gene tree? What if we knew that there is
exactly one transfer event along the path?

\section*{Acknowledgements}
  We thank Maribel Hern{\'a}ndez Rosales and her team for stimulating
  discussions. This work was funded in part by 
	the BMBF-funded project ``Center for RNA-Bioinformatics'' 
	(031A538A, de.NBI-RBC) 
 	and a travel
  grant from DAAD PROALMEX (Proj.\ No.\ 278966).

\bibliographystyle{abbrv}
\bibliography{literature}

\end{document}